\documentclass[12pt]{article}

\usepackage{mathptmx}
\usepackage{times}
\usepackage{amsmath} %already in sig-alternate.cls
\usepackage{amssymb} %already in sig-alternate.cls
 \usepackage{amsthm} %seems to cause a problem
\usepackage{amsfonts} %already in sig-alternate.cls
\usepackage{graphicx}
\usepackage{epsfig} %already in sig-alternate.cls
\usepackage{algorithm}
\usepackage{algorithmic}
\usepackage{ifthen}
\usepackage{tikz}
\usetikzlibrary{trees}
\usepackage{subfig}
\usepackage{fullpage}

\newtheorem{definition}{Definition}
\newtheorem{theorem}{Theorem}
\newtheorem{lemma}{Lemma}

\newtheorem{example}{Example}

\newtheorem{corollary}{Corollary}

\newcommand{\compsepquery}{\textsc{CompRO}}
\newcommand{\compcotab}{\textsc{CompCoTable}}
\newcommand{\rowdecomp}{\textsc{RD}} %{\textsc{RowDecomp}}
\newcommand{\tabledecomp}{\textsc{TD}}%{\textsc{TableDecomp}}
\newcommand{\var}{\texttt{Var}}
\newcommand{\lca}{\texttt{lca}}
\newcommand{\false}{\textsc{False}}
\newcommand{\true}{\textsc{True}}
\newcommand{\flag}{\textsc{Flag}}
\newcommand{\be}{\begin{enumerate}}
\newcommand{\ee}{\end{enumerate}}

\newcommand{\scream}[1]{} %{{\bf * #1 *}{\typeout{#1}}}
\newcommand{\newscream}[1]{} %{{\bf * #1 *}{\typeout{#1}}}
\newcommand{\ans}[1]{} %{{\it ? #1 ?}{\typeout{#1}}}
\newcommand{\eat}[1]{}

\newcommand{\commentproof}[1]{}
\newcommand{\screamresolved}[1]{}

\newcommand{\tup}[1]{\mathbf{#1}}
\newcommand{\angb}[1]{\langle{#1}\rangle}

\begin{document}
%
%\conferenceinfo{ICDT}{2011 Uppsala, Sweden}
%\CopyrightYear{2011} % Allows default copyright year (20XX) to be over-ridden - IF NEED BE.
%%\crdata{0-12345-67-8/90/01}  % Allows default copyright data (0-89791-88-6/97/05) to be over-ridden - IF NEED BE.

\title{Faster Query Answering in Probabilistic Databases using Read-Once Functions\thanks{A shorter version of this paper
will appear in the proceedings of ICDT 2011.}}
\author{Sudeepa Roy\thanks{
University of Pennsylvania. Email: {\tt sudeepa@cis.upenn.edu}. Supported by NSF Award IIS-0803524.}
\and Vittorio Perduca\thanks{
University of Pennsylvania. Email: {\tt perduca@cis.upenn.edu}. Supported by NSF Award IIS-0629846.}
\and Val Tannen \thanks{
University of Pennsylvania. Email: {\tt val@cis.upenn.edu}. Supported in part by NSF Awards IIS-0629846 and IIS-0803524.
}}

%\numberofauthors{3} 
%\author{}
%
%\author{
%Sudeepa Roy\\
%%\alignauthor
%Vittorio Perduca\\
%Val Tannen\\
%%\begin{center}
%       {\small{University of Pennsylvania}}\\
%       {\emph{\small \{sudeepa, perduca, val\}@cis.upenn.edu}}
% %\end{center}
%%\alignauthor 
%
%}
% \affaddr{University of Pennsylvania}\\4
%       \email{{\small perduca@gmail.com}}
%       \affaddr{University of Pennsylvania}\\
%       \email{{\small sudeepa@seas.upenn.edu}}
%       \affaddr{University of Pennsylvania}\\
%       \email{{\small val@cis.upenn.edu}}

\date{}
\maketitle

\begin{abstract}
%Abstract

A boolean expression is in {\em read-once form} if each of its
variables appears exactly once. When the variables denote independent
events in a probability space, the probability of the event denoted by
the whole expression in read-once form can be computed in polynomial
time (whereas the general problem for arbitrary expressions
is \#P-complete). Known approaches to checking read-once property
seem to require putting these expressions in disjunctive normal
form. 
\eat{
Unfortunately, it is known that for arbitrary
boolean formulas checking equivalence to a read-once form is unlikely
to be in PTIME (unless RP=NP). 
}
In this paper, we tell a better
story for a large subclass of boolean event expressions: those that are
generated by conjunctive queries without self-joins and on
tuple-independent probabilistic databases.

We first show that given a tuple-independent representation and the
{\em provenance graph} of an SPJ query plan without self-joins, we
can, without using the DNF of a result event expression, efficiently
compute its {\em co-occurrence graph}. From this, the read-once form
can already, if it exists, be computed efficiently using existing
techniques.  Our second and key contribution is a complete, efficient,
and simple to implement algorithm for computing the read-once forms
(whenever they exist) directly, using a new concept, that of {\em
co-table graph}, which can be significantly smaller than the
co-occurrence graph.

\eat{
To contrast our work to previous techniques (that
obtain the read-once form and then compute the probability), we note
that our algorithm is conceptually much simpler and easy to implement,
and uses properties of conjunctive queries without self-join instead
of using the existing characterization of a read-once expression in
terms of its co-occurrence graph.  However, it has an additional
factor of $\min(k, \sqrt{n})$ in the time complexity, where $k$ and
$n$ represent the number of tables and tuples respectively. In
practice, $k$ is typically substantially smaller than $n$, making the
above factor insignificant. An interesting question is to characterize
the families of formulas for which this probability can be computed in
polynomial time. Clearly, read-once formulas are one such family; we
give another (non-read-once) family of such formulas indicating that
read-once does not uniquely characterize the polynomial-time
solvability of this problem.
}
\end{abstract}

\newpage

% A category with the (minimum) three required fields
%\category{H.4}{Information Systems Applications}{Miscellaneous}
%A category including the fourth, optional field follows...
%\category{D.2.8}{Software Engineering}{Metrics}[complexity measures, performance measures]
%\terms{Theory}
%\keywords{ACM proceedings, \LaTeX, text tagging}

\section{Introduction}
\label{sec:intro}
\scream{introduction.tex}

The computation of distributions for query answers on 
probabilistic databases is closely
related to the manipulation of boolean formulas. This connection has
led both to interesting theoretical questions and to implementation
opportunities. In this paper we consider the {\em tuple-independent} model
\footnote{This model has been considered as early as~\cite{CavalloP87}
  as well as in, eg.,~\cite{FuhrR97,Zimanyi97,GradelGH98,DalviS04}.}
for probabilistic databases. In such a model probabilistic databases are
represented by tables whose tuples $t$ are each annotated by a
probability value $p_t>0$, see Fig.~\ref{fig:example1}\subref{subfig:pdb}. Each tuple appears in a
possible world (instance) of the representation with probability $p_t$
independently of the other tuples. This defines a probability
distribution on all possible instances.  

\begin{figure}[t]
\centering 
\subfloat[]{\label{subfig:pdb}$
R=
\begin{array}{|c|c}
%\hline
%A & \\
\cline{1-1}
a_1 & 0.3 \\
\cline{1-1}
b_1 & 0.4 \\
\cline{1-1}
a_2 & 0.6 \\
\cline{1-1}
\end{array}~~~~~
S=
\begin{array}{|cc|c}
%\hline
%A & B & \\
\cline{1-2}
a_1 & c_1 & 0.1 \\
\cline{1-2}
b_1 & c_1 & 0.5 \\
\cline{1-2}
a_2 & c_2 & 0.2 \\
\cline{1-2}
a_2 & d_2 & 0.1 \\
\cline{1-2}
\end{array}~~~~~
T=
\begin{array}{|c|c}
%\hline
%B & \\
\cline{1-1}
c_1 & 0.7 \\
\cline{1-1}
c_2 & 0.8 \\
\cline{1-1}
d_2 & 0.4 \\
\cline{1-1}
\end{array}
$
} \\
\subfloat[]{\label{subfig:tablevent}$R=
\begin{array}{|c|c}
%\hline
%A & \\
\cline{1-1}
a_1 & w_1 \\
\cline{1-1}
b_1 & w_2 \\
\cline{1-1}
a_2 & w_3 \\
\cline{1-1}
\end{array}~~~~~
S=
\begin{array}{|cc|c}
%\hline
%A & B & \\
\cline{1-2}
a_1 & c_1 & v_1 \\
\cline{1-2}
b_1 & c_1 & v_2 \\
\cline{1-2}
a_2 & c_2 & v_3 \\
\cline{1-2}
a_2 & d_2 & v_4 \\
\cline{1-2}
\end{array}~~~~~
T=
\begin{array}{|c|c}
%\hline
%B & \\
\cline{1-1}
c_1 & u_1 \\
\cline{1-1}
c_2 & u_2 \\
\cline{1-1}
d_2 & u_3 \\
\cline{1-1}
\end{array}
$}\\
\subfloat[]{\label{subfig:query}$Q():- R(x),S(x,y),T(y)$}
%\end{center}
\caption{\subref{subfig:pdb} A tuple-independent probabilistic database. \subref{subfig:tablevent} Event table representation. \subref{subfig:query} An \emph{unsafe} query. %\scream{(vit.)OK?}
}  
\label{fig:example1}
\end{figure}

Manipulating all possible instances is impossibly unwieldy so
techniques have been
developed~\cite{LakshmananLRS97,FuhrR97,Zimanyi97} for obtaining the
query answers from the much smaller representation tables. This is
where boolean formulas make their entrance. The idea, by now
well-understood~\cite{DalviS04,BenjellounSHW06,DalviS07,AntovaJKO08},
is to define the relational algebra operators on tables whose tuples
are annotated with event expressions.  The event expressions are
boolean expressions whose variables annotate the tuples in the input
tables.  The computation of event expressions is the same as that used
in c-tables~\cite{ImielinskiL84} as models for incomplete and
probabilistic databases are closely related~\cite{GreenT06}.  Once the
event expressions are computed for the tuples in the representation
table of the query answer (which is in general {\em not}
tuple-independent), probabilities are computed according to the
standard laws.

The event expressions method was called {\em intensional}
``semantics'' by Fuhr and R{\"o}lleke and they observed
that with this method computing the query answer probabilities 
seems to require
exponentially many steps in general~\cite{FuhrR97}.
Indeed, the data complexity
\footnote{Here, and throughout the paper, the data
input consists of the representation tables~\cite{GradelGH98,DalviS04}
rather than the collection of possible worlds.}
of query evaluation on probabilistic
databases is \#P-complete, even for conjunctive 
queries~\cite{GradelGH98},
in fact even for quite simple boolean queries ~\cite{DalviS04} such as the query in Fig.~\ref{fig:example1}\subref{subfig:query}. 

%\scream{(vit.) problem with previous datalog notation (v) why? the equality here doesn't
%correspond to standard notation...}
%\begin{equation}
%\label{equ:RSTquery}
%{\it ans}() \datalog R(x),S(x,y),T(y)
%{\it ans}()=R(x),S(x,y),T(y) 
%\end{equation}

But Fuhr and R{\"o}lleke also observe that certain event
independences can be taken advantage of, when present, to compute
answer probabilities in PTIME, with a procedure called 
{\em extensional} ``semantics''.  The idea behind the extensional
approach is the starting point for the remarkable
results~\cite{DalviS04,DalviS07a} of Dalvi and Suciu who discovered
that the conjunctive queries can be decidably and elegantly
separated into those whose data complexity is \#P-complete and those
for whom a {\em safe plan} taking the extensional approach can be
found.

Our starting point is the observation that even when the data
complexity of a query is \#P-complete (i.e. the query is \emph{unsafe}~\cite{DalviS04}), 
there may be classes of data
inputs for which the computation can be done with the extensional
approach, and is therefore in PTIME. We illustrate with a simple example.

%\scream{Same example as Sen et al. Explain?}

\begin{example} 
\label{example1}
Consider the tuple-independent probabilistic database and the conjunctive query Q in Fig.~\ref{fig:example1}. Since the query Q is boolean
it has just one possible answer and the
event expression annotating it is
\footnote{To reduce the size of expressions
and following established tradition we use
$+$ for $\vee$ and $\cdot$ for $\wedge$,
and we even omit the latter in most terms.}
\begin{equation}
\label{equ:answer-exp}
f=w_1v_1u_1 + w_2v_2u_1 + w_3v_3u_2 + w_3v_4u_3
\end{equation}
This was obtained with the standard plan $\pi_{()}((R\bowtie S)\bowtie T)$.
However, it is equivalent to another boolean expression
\begin{equation}
\label{equ:answer-exp-sep}
(w_1v_1 + w_2v_2)u_1 + w_3(v_3u_2 + v_4u_3)
\end{equation}
which has the property that each variable
occurs exactly once.
\end{example}

Event (boolean) expressions in which each 
variable occurs exactly once are in {\em read-once form} 
(see~\cite{Newman92}).
For read-once forms the events denoted by non-overlapping subexpressions
are jointly independent, so we can use the key idea of the extensional
approach:

{\bf Fact~} If events $E_1,\ldots,E_n$ are jointly independent then
\begin{align}
P(E_1\cap\cdots\cap E_n) &~=~ P(E_1)\cdots P(E_n)  \label{eq:probinter}     \\
P(E_1\cup\cdots\cup E_n) &~=~ 1 - [1 - P(E_1)]\cdots [1 - P(E_n)]. \label{eq:probunion}
\end{align}

\begin{example}[Example \ref{example1} continued]
The probability of the answer~(\ref{equ:answer-exp-sep}) can be computed as follows
\begin{eqnarray*}
\lefteqn{P(f)=P((w_1v_1 + w_2v_2)u_1 + w_3(v_3u_2 + v_4u_3)) = } \\
 & & 1 - [1 - P(w_1v_1 + w_2v_2)P(u_1)]
    [1 - P(w_3)P(v_3u_2 + v_4u_3)]
\end{eqnarray*}
where
$$
P(w_1v_1 + w_2v_2) = 1 - [1- P(w_1)P(v_1)][1-P(w_2)P(v_2)]
$$
and
$$
P(v_3u_2 + v_4u_3) = 1 - [1-P(v_3)P(u_2)][1- P(v_4)P(u_3)].
$$
We can extend this example to an entire
class of representation tables of unbounded size. 
For each $n$, the relations $R,T$ will have $3n$ tuples 
while $S$ will have $4n$ tuples,
and the probability of the answer can be computed in time $O(n)$, see Appendix~\ref{sec:appexample}.

It is also clear that there is no relational algebra plan that 
directly yields {\rm (\ref{equ:answer-exp-sep})} above.
\end{example}

In fact, Fuhr and R{\"o}lleke (see~\cite{FuhrR97}, Thm.4.5) state that
probabilities can be computed by ``simple evaluation'' 
(i.e., by the extensional method)
if and only if the event expressions computed intensionally 
are in read-once form.
Moreover, the {\em safe plans} of~\cite{DalviS04} are such that all
the event expressions computed with the intensional method both on
intermediary relations and on the final answer, are in read-once form.

But more can be done.  Notice that the expression
(\ref{equ:answer-exp}) is not in read-once form but it is equivalent
to (\ref{equ:answer-exp-sep}) which is. Boolean expressions that are
equivalent to read-once forms have been called by various names, eg.,
separable, fanout-free~\cite{Hayes75}, repetition-free~\cite{Gurvich77},
$\mu$-expressions~\cite{Valiant84}, 
non-repeating~\cite{PeerP95},
but since the late 80's~\cite{HellersteinK89} the terminology seems to
have converged on {\em read-once}.  Of course, not all boolean
expressions are read-once, eg., $xy+yz+zx$ or $xy+yz+zu$ are not.

With this motivation we take the study of the following problem
as the goal of this paper: 

\medskip
\noindent
\textbf{Problem~} Given tuple-independent database $I$ and boolean
conjunctive query $Q$, when is $Q(I)$ read-once and if so, can its
read-once form be computed efficiently?
\eat{
\footnote{{\em We include this footnote in the submission version only!}
As it turned out, we obtained our results without being aware
of~\cite{SenDG10,GolumbicMR06,CorneilPS85,HellersteinK89,CorneilLB81} etc., 
in fact, of the whole literature on read-once and cographs. 
We found out about all these papers
very recently. We are not including this remark here as an excuse, 
but as an explanation why this submission might read somewhat as a
battle to emphasize those parts of our independent work that
distinguish it from, especially,~\cite{SenDG10}. We have, of course,
also modified our writing to fit the terminology of the read-once
and cograph literature.}
}
\medskip

It turns out that~\cite{GolumbicMR06} gives a fast algorithm
that takes a formula in irredundant disjunctive normal form, decides
whether it is read-once, and if it is, computes the read-once form
(which is in fact unique modulo associativity and commutativity).
The algorithm is based upon a characterization in terms of the formula's
co-occurrence graph given in~\cite{Gurvich91}.

Some terminology (taken up again in Section~\ref{sec:prelims}). 
Since we don't have anything to say about negation
or difference in queries we work only with {\em monotone} boolean
formulas (all literals are positive, only disjunction and conjunction
operations).  Disjunctive normal forms (DNFs) are disjunctions of {\em
  implicants}, which in turn are conjunctions of {\em distinct}
variables.  A {\em prime} implicant of a formula $E$ is one with a
minimal set of variables among all that can appear in DNFs equivalent
to $E$.  By absorption, we can retain only the prime implicants. The
result is called an {\em irredundant} DNF (IDNF) of $E$, and is unique
modulo associativity and commutativity.  The {\em co-occurrence graph}
of a boolean formula $E$ has its variables as nodes and has an edge
between $x$ and $y$ iff they both occur in the same prime implicant of
$E$.

For positive relational queries, the size of the IDNF of the boolean
event expressions is polynomial in the size of the table, but often
(and necessarily) exponential in the size of the query. This is a good 
reason for avoiding the explicit
computation of the IDNFs, and in particular for not relying on the algorithm 
in~\cite{GolumbicMR06}.  In recent and independent work 
Sen et al.~\cite{SenDG10} proved  that for the boolean expressions that
arise out of the evaluation of conjunctive queries without self-joins
the characterization in~\cite{Gurvich91} can be simplified and one
only needs to test whether the co-occurrence graph is a
``cograph''~\cite{CorneilLB81} which can be done
\footnote{Defining cographs seems unnecessary for this paper. 
It suffices to point out that most cograph recognition algorithms
produce (if it exists) something called a ``cotree'' (sigh) which in the 
case of co-occurrence graphs associated to boolean formulas is exactly
a read-once form!} in linear time~\cite{CorneilPS85}. 

It is also stated~\cite{SenDG10} that even for conjunctive queries 
without self-joins computing co-occurrence graphs likely requires obtaining  
the IDNF of the boolean expressions. One of our contributions in this paper 
is to show that an excursion through the IDNF is in fact not 
necessary because the co-occurrence graphs can be computed directly
from the {\em provenance graph}~\cite{KarvounarakisIT10,GreenKIT07}
that captures the computation of the query on a table. 
Provenance graphs are DAG representations of the event 
expressions in such a way that most common subexpressions for the entire table 
(rather than just each tuple) are not replicated. The smaller size of the 
provenance graphs likely provides practical speedups in the computations 
(compared for example with the provenance trees of~\cite{SenDG10}). 
Moreover, our approach
may be applicable to other kinds of queries, as long as their provenance graphs
satisfy a simple criterion that we identify.

To give more context to our results, we also note that 
Hellerstein and Karpinski\cite{HellersteinK89} have
shown that if $RP\neq NP$ then deciding whether an arbitrary
monotone boolean formula is read-once cannot be done in PTIME 
in the size of the formula.

The restriction to conjunctive queries without self-joins further
allows us to contribute improvements even over an approach that composes
our efficient computation of co-occurrence graphs with one of the
linear-time algorithms for cograph
recognition~\cite{CorneilPS85,HabibP05,BretscherCHP08}.
Indeed, we show that only a certain subgraph of the co-occurrence
graph (we call it the {\em co-table} graph) is relevant for our stated
problem. The co-table graph can be asymptotically
smaller than the co-occurrence graph for some classes of queries and instances.
To enable the use of only part of the co-occurrence graph
we contribute a novel algorithm that computes (when they exist) 
the read-once forms, using two new ideas: {\em row decomposition} and 
{\em table decomposition}. Using just connectivity tests (eg., DFS), our
algorithm is simpler to implement than the cograph
recognition algorithms in~\cite{CorneilPS85,HabibP05,BretscherCHP08}
and it has the potential of affecting the
implementation of probabilistic databases. 
%\scream{hopefully we can say something
%along these lines also in the conclusion}

Moreover, the proof of completeness for our algorithm does not use the
cograph characterization on which~\cite{SenDG10} relies. 
As such, the
algorithm itself provides an alternative new characterization of
read-once expressions generated by conjunctive queries without
self-joins. This may provide useful insights into extending the
approach to handle larger classes of queries.

Having rejected the use of co-occurrence graphs, Sen et
al.~\cite{SenDG10} provide a different approach that derives
efficiently the read-once form directly from the computations of the
trees underlying the boolean expressions, so called ``lineage trees'',
by merging read-once forms that correspond to
partial formulas. They provide a complexity analysis only for one
of the steps that their algorithm applies repeatedly.
%However, to the best of our understanding of the asymptotic
%complexity of their algorithm, it appears that our algorithm
%is faster by a multiplicative factor of $nk^2$ where is $n$
%is the number of tuples and $k$ is the number of tables.
%\scream{Can we say this or are we worried about beta?}
However, to the best of our understanding of the asymptotic
complexity of their algorithm, it appears that our algorithm
is faster at least by a multiplicative factor of $k^2$ where 
$k$ is the number of tables, and the benefit can often
be more.

It is also important to note that neither the results of this paper, nor
those of~\cite{SenDG10} provide complexity dichotomies as does, 
eg.~\cite{DalviS04}. It is easy to 
give a family of probabilistic databases for which the query 
in Fig.~\ref{fig:example1}\subref{subfig:query} generates event expressions
of the following form:
$$
x_1x_2+x_2x_3+\cdots +x_{n-1}x_n+x_nx_{n+1}.
$$
These formulas are not read-once, but with a simple memoization (dynamic
programming) technique we can compute their probability in time linear in
$n$ (see Appendix~\ref{sec:discussions}).\\

\smallskip
\noindent
\textbf{Roadmap.~} 
In Section~\ref{sec:prelims} we review definitions, 
explain how to compute provenance DAGs for SPJ queries, and compare
the sizes of the co-occurrence and co-table graphs. Section~\ref{sec:co-graph}
presents a characterization of the co-occurrence graphs that
correspond to boolean expressions generated by conjunctive queries
without self-joins. The characterization uses the provenance DAG.
With this characterization we give an efficient algorithm for computing
the co-table (and co-occurrence) graph. In Section~\ref{sec:algo-query}
we give an efficient algorithm that, using the co-table graph, checks if the
result of the query is read-once, and if so computes its read-once form.
Putting together these two algorithms we obtain an efficient
query-answering algorithm that is {\em complete} for boolean conjunctive queries
without self-joins and tuple-independent databases that yield read-once
event expressions. In Section~\ref{sec:complexity} we compare the
time complexity of this algorithm with that of Sen et al.~\cite{SenDG10},
and other approaches that take advantage of past work in the read-once
and cograph literature. Related work, conclusions and ideas for 
further work ensue.
\section{Preliminaries}\label{sec:prelims}

A {\bf tuple-independent probabilistic database} is represented by
a usual (set-)relational database instance $I$
in which, additionally, every tuple
is annotated with a probability in $(0,1]$,
see for example Fig.~\ref{fig:example1}\subref{subfig:pdb}. 
We call this the {\em probability table representation}. We will denote by 
$\mathbf{R} = \{R_1,\ldots,R_k\}$ the relational schema of 
the representation.
\scream{I don't think we need to say anything
about attribute notation. In fact, why does the example ins sec 4 have attributes?}
%and by $\mathbf{A_i}$ the set of attributes of $R_i\in\mathbf{R}$.
By including/excluding each tuple independently with probability of  its 
annotation, the representation defines a set of $\mathbf{R}$-instances
called {\em possible worlds} and the obvious probability distribution on 
this set, hence a discrete probability space. For a given tuple $t\in R_i$
this space's event ``$t$ occurs'' (the set of possible worlds in 
which $t$ occurs) has probability exactly the annotation of $t$ in the 
representation.
Following the intensional approach~\cite{LakshmananLRS97,FuhrR97,Zimanyi97}
we also consider the {\em event table representation} which consists of
the same tables, but in which every tuple is annotated by its unique tuple id,
for example see Fig.~\ref{fig:example1}\subref{subfig:tablevent}. 

The {\bf tuple ids} play three distinct but related roles: (1) they identify
tuples uniquely over {\em all} tables and in fact we will often call
the tuple ids just tuples, (2) they are boolean variables, 
(3) they denote the events ``the tuple occurs'' in the
probability space of all possible worlds. The last two perspectives
can be combined by saying that the tuple ids are boolean-valued random
variables over said probability space. Moreover, an {\em event expression}
is a boolean expression with the tuple ids as variables.

The intensional approach further defines the semantics of 
the {\bf relational algebra operators}
on event tables, i.e., relational instances
in which the tuples are annotated with event expressions. In this paper
we will only need monotone boolean expressions because our queries only use
selection, projection and join and these operators do not introduce negation.
Otherwise, joins produce conjunctions, projections produce disjunctions,
and selections erase the non-compliant tuples. It is worth observing that
the relational algebra on event tables is essentially a particular case
of the algebra on c-tables~\cite{ImielinskiL84}, and precisely a particular
case of the relational algebra on semiring-annotated relations~\cite{GreenKT07}.

Since by now they are well understood (see the many papers we cited so
far), we do not repeat here the definition of select, project, and
join on tables annotated with boolean event expressions but instead we
explain how they produce {\bf provenance graphs}.  The concept that we
define here is a small variation on the provenance graphs defined
in~\cite{KarvounarakisIT10,GreenKIT07} where conjunctive queries (part
of mapping specifications) are treated as a black box. It is important
for the provenance graphs used in this paper to reflect the structure
of different SPJ query plans that compute the same conjunctive query.

A provenance graph (PG) is a directed acyclic graph (DAG) $H$ such that the
nodes $V(H)$ of $H$ are labeled by variables or by the
operation symbols $\cdot$ and $+$. As we show below, each node
corresponds to a tuple in an event table that represents the
set of possible worlds of either the input database or some
intermediate database computed by the query plan.
An edge $u\rightarrow v$ is in $E(H)$ if the tuple corresponding
to $u$ is computed using the tuple corresponding to $v$ in either
a join (in which case $u$ is labeled with $\cdot$) or a projection
(in which case $u$ is labeled with $+$).
The nodes with no outgoing edges
are those labeled with variables and are called leaves while the
nodes with no incoming edges are called roots (and can be labeled with
either operation symbol). Provenance
graphs (PGs) are closely related to the {\em lineage trees} of~\cite{SenDG10}.
In fact, the lineage trees are tree representations of the boolean event
expressions, while PGs are more economical:
they represent the same expressions but without the multiplicity
of common subexpressions. Thus, they are associated with an entire table
rather than with each tuple separately, each root of the graph
corresponding to a tuple in the table. 
\footnote{Note that to facilitate the comparison
with the lineage trees
the edge direction here is the opposite of the direction 
in~\cite{KarvounarakisIT10,GreenKIT07}.}

\begin{figure}[htbp]
\begin{center}
\input{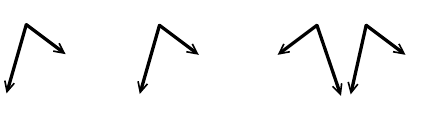_t}
\caption{Provenance graph for $R\bowtie S$.}
\label{fig:pg-partial}
\end{center}
\end{figure}

\begin{figure}[htbp]
\begin{center}
\input{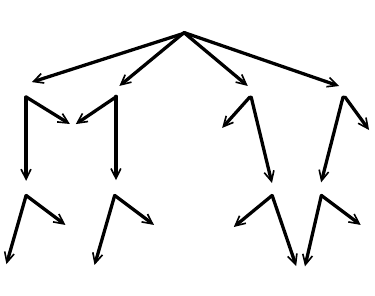_t}
\caption{Provenance graph for $\pi_{()}((R\bowtie S)\bowtie T)$.}
\label{fig:pg}
\end{center}
\end{figure}

We explain how the SPJ algebra works on tables with PGs.
If tables $R_1$ and $R_2$ have PGs $H_1$
and $H_2$ then the PG for $R_1\bowtie R_2$ is
constructed
as follows. Take the disjoint union $H$ of $H_1$ and $H_2$. For every
$t_1\in R_1$ and $t_2\in R_2$ that do join, 
add a new root labeled with $\cdot$ and make
the root of $H_1$ corresponding to $t_1$ and that of $H_2$ corresponding
to $t_2$ children of this new root. Afterwards, delete (recursively)
any remaining roots from $H_1$ and $H_2$. For example, referring again
to Fig.~\ref{fig:example1}, the PG associated with the table computed
by $R\bowtie S$ is shown in Fig.~\ref{fig:pg-partial}.

For selection, delete (recursively) the roots that correspond
to the tuples that do not satisfy the selection predicate.
For projection, consider a table $T$ with PG $H$ and 
$X$ a subset of its attributes. The PG for $\pi_X R$
is constructed as follows. For each $t\in\pi_X R$, let
$t_1,\ldots,t_m$ be all the tuples in $R$ that $X$-project to $t$.
Add to $H$ a new root labeled with $+$ and make the roots in $H$ corresponding
to $t_1,\ldots,t_m$ the children of this new root. Referring again
to Fig.~\ref{fig:example1}, the PG associated with the result
of the query plan $\pi_{()}((R\bowtie S)\bowtie T)$
is shown in Fig.~\ref{fig:pg}. Since the query
is boolean, this PG has just one root.

The boolean event expressions that annotate tuples in the event tables
built in the intensional approach can be read off the provenance
graphs. Indeed, if $t$ occurs in an (initial, intermediate, or final)
table $T$ whose PG is $H$, then, starting at the root $u$ of $H$
corresponding to $t$, traverse the subgraph induced by all the nodes
reachable from $u$ and build the boolean expression recursively
using parent labels as operation symbols and the subexpressions
corresponding to the children as operands. 
For example, we read 
$(((w_1\cdot v_1)\cdot u_1)+
(u_1\cdot (w_2\cdot v_2))+
(u_2\cdot (v_3\cdot w_3))+
((w_3\cdot v_4)\cdot u_3))$
off the PG in Fig.~\ref{fig:pg}.

The focus of this paper is the case when the boolean (event) expressions
are {\bf read-once}, i.e., they are equivalent to expressions in which
every variables occurs exactly once, the latter said to be in
{\em read-once form}. For boolean expressions that are read-once, 
the read-once form is unique (modulo associativity and commutativity)
\footnote{This seems to have been known for a long time. We could not
find an explicitly stated theorem to this effect in the literature, 
but, for example, it is clear that the result of the algorithm 
in~\cite{GolumbicMR06} is uniquely determined by the input.}.
The interest in read-once formulas derives from the fact that in tuple-independent databases
the tuples in the input representation occur {\em independently} in
possible worlds. More complex boolean expressions denote
events whose probability needs to be computed from the probabilities
of the variables, i.e., the probabilities of the independent ``tuple occurs''
events in the input. When such event expressions are in read-once form
their probability can be computed efficiently in linear time in the 
number of variables using the rules~(\ref{eq:probinter}) and~(\ref{eq:probunion})
in Section~\ref{sec:intro}.

\eat{
As observed in the introduction, the probability of a boolean expression in read-once form 
P(\sum_i f_i) & =1-\prod_i(1-P(f_i))\\
P(\prod_i f_i) & = \prod_i P(f_i), 
\end{align*}
where the $f_i$s are boolean expressions in read-once form not sharing variables, $\cap_i\var(f_i)=\phi$.
}

Given a tuple-independent probabilistic database and an SPJ query
plan, hence the resulting provenance graph, our objective in this
paper is to decide efficiently when the boolean expression(s) read off
the PG are read-once, and when they are, to compute their read-once
form(s) efficiently, hence the associated probability(es).

In this paper we consider only {\bf boolean
conjunctive queries}. We can do this
without loss of generality because we can associate to a non-boolean
conjunctive query $Q$ and an instance $I$ 
a set of boolean queries in the usual manner:
for each tuple $t$ in the answer relation $Q(I)$,
consider the boolean conjunctive query $Q_t$ which
is obtained from $Q$ by replacing the head variables
with the corresponding values in $t$.
Note that the PGs that result from boolean queries have 
{\em exactly one root}. We will also use $Q(I)$
to denote the boolean expression generated
by evaluating the query $Q$ on instance $I$,
which may have different (but equivalent) forms
based on the query plan.
%\newscream{ok?}  

Moreover, we consider only queries {\bf without self-join}.
Therefore our queries have the form
$$
Q() :- R_1(\mathbf{x}_1), \ldots , R_k(\mathbf{x}_k)
$$
where $R_1,\ldots,R_k$ are all {\em distinct} table names
while the $\mathbf{x}_i$'s are tuples of FO variables
\footnote{FO (first-order) is to emphasize the distinction
between the variables in the query subgoals and the variables
in the boolean expressions.} or constants,
possibly with repetitions, matching the arities of the tables.
\scream{BTW, along the way we must have understood also 
a characterization of boolean formulas that result from such queries
at least we know what their IDNF looks like.}
If the database has tables that do not appear in the query, they
are of no interest, so we will always assume that our queries
feature all the table names in the database schema $\mathbf{R}$.

As we have stated above, we only need to work with {\em monotone}
boolean formulas (all literals are positive, only disjunction and
conjunction operations).  Every such formula is equivalent to (many)
{\bf disjunctive normal forms} (DNFs) which are disjunctions of
conjunctions of variables.  These conjunctions are called {\em
  implicants} for the DNF.  By idempotence we can take the variables
in an implicant to be distinct and the implicants of a DNF to be
distinct from each other.  A {\em prime} implicant of a formula $f$ is
one with a minimal set of variables among all that can appear in DNFs
equivalent to $f$.  By absorption, we can retain only the prime
implicants in a DNF. The result is called {\em the irredundant} DNF
(IDNF) of $f$, as it is uniquely determined by $f$ (modulo associativity
and commutativity). We usually denote it by $f_{IDNF}$.
Note that in particular the set of prime implicants
is uniquely determined by $f$.

The {\bf co-occurrence graph}, notation $G_{co}$, of a boolean formula
$f$ is an undirected graph whose set of vertices $V(G_{co})$ is the
set $\var(f)$ of variables of $f$ and whose set $E(G_{co})$ of edges is
defined as follows: there is an edge between $x$ and $y$ iff they both
occur in the same prime implicant of $f$. Therefore, $G_{co}$ is
uniquely determined by $f$ and it can be constructed from
$f_{IDNF}$. This construction is quadratic \scream{can be improved?}
in the size of $f_{IDNF}$ but of course $f_{IDNF}$ can be
exponentially larger than $f$.  Fig. \ref{fig:cograph} shows the
co-occurrence graph for the boolean expression $f$ in equation 
(\ref{equ:answer-exp}) of Example~\ref{example1}.
As this figure shows, the co-occurrence graphs for expressions generated
by conjunctive queries without self join are always 
\emph{$k$-partite}\footnote{A graph
$(V_1 \cup \cdots \cup V_k, E)$ is
\emph{k-partite}, if for any edge $(u, v) \in E$ where $u \in V_i$
and $v \in V_j$, $i \neq j$.} graphs on tuple variables from $k$ different
tables.

\begin{figure}[htbp]
\begin{center}
\input{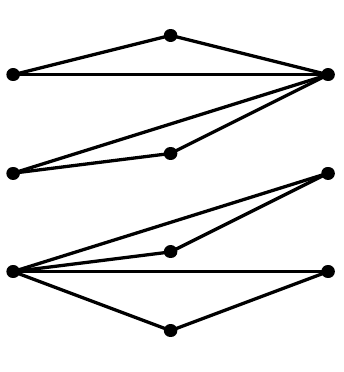_t}
\caption{$G_{co}$ for $f$ in Example \ref{example1}.}
\label{fig:cograph}
\end{center}
\end{figure}

We are interested in the co-occurrence graph $G_{co}$ of a boolean formula $f$
because it plays a crucial role in $f$ being read-once. Indeed 
\cite{Gurvich91}~has shown that a monotone $f$ is read-once iff
(1) it is ``normal'' and (2) its $G_{co}$ is a ``cograph''. We don't need
to discuss normality because \cite{SenDG10}~has shown that for formulas
that arise from conjunctive queries without self-joins it follows from
the cograph property. We will also avoid defining what a cograph is
(see~\cite{CorneilLB81,CorneilPS85}) except to note that
cograph recognition can be done in linear 
time~\cite{CorneilPS85,HabibP05,BretscherCHP08} 
and that when applied to the co-occurrence
graph of $f$ the recognition algorithms also produce, in effect, the
read-once form of $f$, when it exists.

Although the co-occurrence graph of $f$ is defined in terms of
$f_{IDNF}$, we show in Section~\ref{sec:co-graph} that when $f$ is
the event expression produced by a boolean conjunctive query
without self-joins then we can efficiently compute the $G_{co}$
of $f$ from the provenance graph $H$ of any plan for the query.
\scream{discuss that we use ANY plan in related work}
Combining this with any of the cograph recognition algorithms
we just cited, this yields one algorithm for the goal of our paper,
which we will call a {\bf cograph-help algorithm}.

Because it uses the more general-purpose step of cograph recognition
a cograph-help algorithm will not fully take advantage of the restriction
to conjunctive queries without self-joins. Intuitively, with this restriction
there may be lots of edges in $G_{co}$ that are irrelevant because they link
tuples that are not joined by the query. This leads us to the notion of
co-table graph defined below.

Toward the definition of the co-table graph we also need that
of {\bf table-adjacency graph}, notation $G_T$. 
Given a boolean query without self-joins 
$Q() :- R_1(\mathbf{x}_1), \ldots , R_k(\mathbf{x}_k)$
the vertex set $V(G_T)$ is the set of $k$ table names $R_1, \cdots, R_k$.
We will say that $R_i$ and $R_j$ are \emph{adjacent} iff 
$\mathbf{x}_i$ and $\mathbf{x}_j$ have at least one FO variable in common
i.e., $R_i$ and $R_j$ are joined by the query.
The set of edges $E(G_T)$ consists of the pairs of adjacent table names.
The table-adjacency graph $G_T$ for the query in
Example \ref{example1} is depicted in Fig.~\ref{fig:tadj}.
\scream{Define $m_T$ in sec 4 where it might be used}

\begin{figure}[t]
\begin{center}
\input{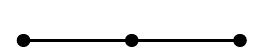_t}
\caption{$G_T$ for the relations in Example \ref{example1}.}
\label{fig:tadj}
\end{center}
\end{figure}  

\par The table-adjacency graph $G_T$ helps us remove
edges irrelevant to a query from the graph $G_{co}$.  For example, if there
is an edge between $x\in R_i$ and $x'\in R_j$ in $G_{co}$, but
there is no edge between $R_i$ and $R_j$ in $G_T$, then (i) either
there is no path connecting $R_i$ to $R_j$ in $G_{T}$ (so all tuples
in $R_i$ pair with all tuples in $R_j$), or, (ii) $x$ and
$x'$ are connected in $G_{co}$ via a set of tuples $x_1, \cdots,
x_\ell$, such that the tables containing these tuples are connected by
a path in $G_T$. Our algorithm in Section~\ref{sec:algo-query} shows
that all such edges $(x, x')$ can be safely deleted from $G_{co}$
for the evaluation of the query that yielded $G_T$.

\begin{definition}
The {\bf co-table graph} $G_C$ is the subgraph of $G_{co}$ with 
$V(G_{C})=V(G_{co})$ and such that given two tuples $x\in R_i$ and 
$x'\in R_j$ there is an edge $(x,x')\in E(G_{C})$ iff $(x,x')\in E(G_{co})$
and $R_i$ and $R_j$ are adjacent in $G_T$. 
\end{definition}

The co-table graph $G_C$ generated by the event tables
and query in Fig.~\ref{fig:example1} is shown in Fig.~\ref{fig:cotable}
(it is {\em not} a cograph!).

\begin{figure}[t]
\begin{center}
\input{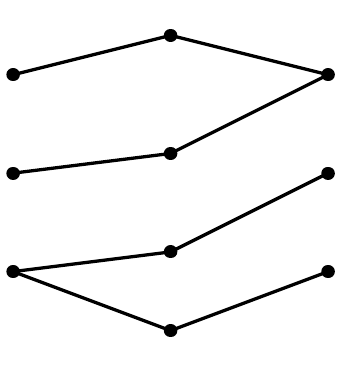_t}
\caption{$G_C$ for $f$ in Example~\ref{example1}.}
\label{fig:cotable}
\end{center}
\end{figure}

{\bf Co-occurrence graph vs. co-table graph}~
The advantage of using the co-table graph 
instead of the co-occurrence graph is most dramatic
in the following example:

\begin{example}
\label{example2}
Consider $Q() :- R_1(\mathbf{x}_1),R_2(\mathbf{x}_2)$
where $\mathbf{x}_1$ and $\mathbf{x}_2$ have no common FO variable.
Assuming that each of the tables $R_1$ and $R_2$
has $n$ tuples, $G_{co}$ has $n^2$ edges 
while $G_C$ has none. A cograph-help algorithm must spend $\Omega(n^2)$ time
even if it only reads $G_{co}$!
\end{example}

On the other hand, $G_C$ can be as big as $G_{co}$. In fact, when
$G_T$ is a complete graph (see next example), $G_C=G_{co}$.

\begin{example}
\label{example3}
Consider $Q() :- R_1(\mathbf{x}_1,y),\ldots,R_k(\mathbf{x}_k,y)$
where $\mathbf{x}_i$ and $\mathbf{x}_j$ have no common FO variable
if $i\neq j$. Here $G_T$ is the complete graph on $R_1,\ldots,R_k$ and
$G_C=G_{co}$. %\scream{HOWEVER, See below}
\end{example}
However, it can be verified that both our algorithm and the cograph-help algorithm
have the same time complexity on the above example.

\section{Computing the Co-Table Graph}\label{sec:co-graph}
In this section we show that given as input 
the provenance DAG $H$ of a boolean conjunctive query {\em plan} 
without self-joins $Q$ on a table-independent database representation $I$,
the co-table graph $G_C$ and the co-occurrence graph $G_{co}$ of the
boolean formula $Q(I)$ (see definitions in section~\ref{sec:prelims}) 
can be computed in poly-time in the sizes of $H,I$ and $Q$.
\par
It turns out that $G_C$ and $G_{co}$ are computed
by similar algorithms, one being a minor modification
of the other.
As discussed in section~\ref{sec:intro}, 
the co-occurrence graph $G_{co}$ can then be used in conjunction
with cograph recognition algorithms 
(eg.,~\cite{CorneilPS85, HabibP05,BretscherCHP08}), 
to find the read-once form of $Q(I)$ if it exists.
On the other hand, the smaller co-table graph $G_C$ is used
by our algorithm described in section~\ref{sec:algo-query}
for the same purpose.

\screamresolved{moved to intro
\scream{Maybe this should go in the intro} 
It is important to note
that our algorithms depend essentially on the fact that the boolean
expressions are generated by plans for conjunctive queries (SPJ
plans) that does not have any self-joins. It is not hard to see that adding union generates arbitrary
monotone boolean formulas. Hellerstein and
Karpinski~\cite{HellersteinK89} have shown that if $RP\neq NP$
then deciding whether an arbitrary \screamresolved{\scream{monotone??} not mentioned}
boolean formula is read-once cannot be done in poly-time.
\footnote{The algorithm in~\cite{GolumbicMR06} assumes, of course,
that the input is in IDNF.} Therefore, it is unlikely that 
a result similar to ours holds for unions of conjunctive queries.
Even it will be interesting to see whether our approach extends to
conjunctive queries that allow self-joins.
}

We use $\var({f})$ to denote the sets of variables in a monotone boolean expression $f$. 
Recall that the provenance DAG $H$ is a layered graph
where every layer corresponds to a select, project or join operation in the query plan.
We define the \emph{width} of $H$ as the maximum number of nodes at any layer
of the DAG $H$ and denote it by $\beta_H$.
The main result in this section is summarized by the following theorem.

\begin{theorem}\label{thm:lca}
Let $f=Q(I)$ be the boolean expression computed by the query plan $Q$
on the table representation $I$ ($f$ can also be read off the provenance
graph of $Q$ on $I$, $H$), $n = |\var(f)|$ be the number of variables in $f$,
$m_H = |E(H)|$ be the number of edges of $H$, $\beta_H$ be the width of 
$H$, and $m_{co} = |E(G_{co})|$ 
be the number of edges of $G_{co}$, the co-occurrence graph of $f$.
\begin{enumerate}
	\item 
$G_{co}$ can be computed in time $O(n m_H + \beta_H m_{co})$.
%where $\beta_H$ is the width of $H$.
%where $\beta$ is the maximum number of times any pair 
%of tuples (i.e., any pair of variables $x, y \in \var(f)$)
%is joined by the query plan.
\screamresolved{\scream{This $\beta$ needs more discussion}}
       \item
Further, the co-table graph $G_C$ of $f$
can be computed in time
$O(n m_H + \beta_H m_{co} + k^2\alpha \log \alpha)$
where %$\beta_H$ is as above, 
$k$ is the number of tables in $Q$,
and $\alpha$ is the maximum arity (width) of the tables in $Q$.
\end{enumerate}
\end{theorem}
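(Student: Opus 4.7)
The plan is to perform one bottom-up sweep of the provenance DAG $H$ and exploit the following observation that is specific to SPJ plans for conjunctive queries without self-joins. At a $\cdot$ node $u$ with children $c_1,c_2$, every prime implicant of the subexpression at $u$ is a product of a prime implicant of $c_1$ and one of $c_2$; since no table is repeated, $\var(c_1)\cap\var(c_2)=\emptyset$, so every pair $(x,y)\in\var(c_1)\times\var(c_2)$ co-occurs in some prime implicant at $u$. At a $+$ node no new co-occurrence is introduced. Hence the following algorithm produces exactly $E(G_{co})$: for each $\cdot$ node $u$ of $H$, insert the complete bipartite graph between $\var(c_1)$ and $\var(c_2)$; at $+$ nodes do nothing. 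Correctness follows by a straightforward structural induction on $H$.

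Two ingredients remain: cheaply obtaining $\var(u)$ at every $u$, and bounding the total number of edge insertions. I would compute $\var(\cdot)$ by a reverse BFS from each of the $n$ leaves, marking every ancestor it can reach; this visits each edge of $H$ at most once per leaf, runs in $O(n\,m_H)$ time, and simultaneously produces the list $\var(u)$ at every node. The key multiplicity claim is that each edge $(x,y)\in E(G_{co})$ with $x\in R_i,\ y\in R_j$ is inserted only at $\cdot$ nodes lying in a single layer of $H$, namely the layer $L$ of the first join in the plan whose two inputs cover $R_i$ and $R_j$ respectively. Below $L$ no $\cdot$ node has a single child whose subtree spans both $R_i$ and $R_j$; strictly above $L$, both $x$ and $y$ are always in the subtree of the same child of every common $\cdot$-ancestor, so they never reappear in distinct children. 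Since any layer has at most $\beta_H$ nodes, each edge is inserted at most $\beta_H$ times, bounding the enumeration cost by $O(\beta_H\,m_{co})$ and yielding the $O(n\,m_H+\beta_H\,m_{co})$ bound of part~(1).

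The main obstacle is making the ``single layer'' argument watertight in the presence of DAG sharing introduced by $+$ (projection) nodes; this relies on the invariant that in an SPJ plan without self-joins the set of tables contributing to the subtree of any node is determined solely by the operator that created that node, so it is uniform across sibling $\cdot$-nodes at any given layer. For part~(2), I would build $G_T$ by first sorting the FO-variable tuple of each subgoal in $O(k\alpha\log\alpha)$ total, then comparing each of the $\binom{k}{2}$ pairs of sorted tuples in $O(\alpha)$ to decide adjacency, for a total of $O(k^2\alpha\log\alpha)$. Finally $G_C$ is obtained either by filtering $E(G_{co})$ in $O(m_{co})$ -- keeping only edges whose endpoints' tables are adjacent in $G_T$ -- or by aborting the bipartite-insertion step at a $\cdot$ node whenever its two children straddle non-adjacent tables. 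Summing the three contributions yields the claimed $O(n\,m_H+\beta_H\,m_{co}+k^2\alpha\log\alpha)$ bound.
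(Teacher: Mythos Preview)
Your proposal is correct and follows essentially the same route as the paper. The paper's Algorithm \textsc{CompCoTable} does a reverse-topological sweep of $H$, computes $\var(u)$ as the union over successors (its $O(n\,m_H)$ bound matches your reverse-BFS-from-each-leaf), and at each $\cdot$-node enumerates pairs from the two children; correctness is phrased via an LCA characterization (their Lemma~2: $x,y$ co-occur iff $\lca(x,y)$ contains a $\cdot$-node) together with a unique-successor lemma (their Lemma~3, which is your ``$\var(c_1)\cap\var(c_2)=\emptyset$''), while you propose the equivalent structural induction. Your single-layer multiplicity argument is exactly their Lemma~6, and your $O(k^2\alpha\log\alpha)$ construction of $G_T$ matches Section~3.2. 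One small caveat: your second option for $G_C$, ``aborting the bipartite-insertion step whenever its two children straddle non-adjacent tables,'' is imprecise when a child spans several tables; the paper (and your first option of post-filtering) does the adjacency check per pair, which is what you need.
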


%\eat{ \scream{OLD STATEMENT OF TEHOREM}
%Given a query plan to compute final expression $f$
%in the form of a DAG $H$, where $f$ is generated by evaluating a conjunctive query without self-join
%on a set of tables,
%\begin{enumerate}
%	\item 
%the co-occurrence graph on $\var(f)$ for the irredundant DNF form $f_{DNF}$ of $f$
%can be computed in time $O(n m_H + \beta m_{co})$, where $n = |\var(f)|$,
%$m_{co}$ = the number of edges in the co-occurrence graph $G_{co}$ for $f_{DNF}$, and $\beta$
%= the maximum number of times any pair of variables $x, y \in \var(f)$ is joined by the query plan.
%	\item 
%Further, the co-table graph $G_C$ on $\var(f)$ on the given set of tables can be computed 
%	in $O(n|E(H)| + \beta m_{co} + k\alpha \log \alpha)$ time, where $k$ is the number of table
%	and $\alpha$ is the maximum size of any subgoal in the given query.
%\end{enumerate}
%}

\screamresolved{(s) yes, any pair may be multiplied times = width of the DAG,
if we keep on joining two tables with null join attributes
Note that in any reasonable query plan, any pair of tuples
will be joined at most once or a few number of times. As long as
$\beta$ is $O(1)$, we can ignore the factor $\beta$ in the above time
complexity. \scream{Hmm. Let R have one tuple x that joins with each
of n tuples of S. Then we project and get sqrt(n) tuples each annotated
by a sum of sqrt(n) products that involve x. Then each tuple of this joins
with the tuple y in T. I think we get n products of xy in various places 
in the expression}
}

%compute the co-occurrence graph $G_{co}$ from a given query plan in linear ($O(|E(H)| + m_{co})$) 
%time, and can compute the co-table graph in $O(|E(H)| + m_{co} + k\alpha \log \alpha)$
%time.
\eat{
%It is also important to note that the pre-processing step takes an arbitrary query plan for
%a given query. 
%Since our algorithm \compsepquery\ given in Section~\ref{sec:} takes the co-table graph $G_C$
%as input, we can use this pre-processing step in two different ways.
%(i) If we take an arbitrary query plan of the \emph{master query} $Q^* = R_1(A_1) \cdots R_k(A_k)$
%as input, where the query subgoals contain the original attribute set of all tables,
%then the co-table graph needs to be computed \emph{only once across all queries on the database},
%and the cost of this pre-processing step is amortized over all such queries.
%(ii) However, if query-plans for individual queries are available,
%we can compute co-table graph $G_C$  specific to that query $Q = R_1(S_1) \cdots R_k(S_k)$
% in this pre-processing step.
%\scream{when is that better? when subgoal sizes $|S_i|$-s are expected to be much smaller
%than all the attribute sizes, $|A_i|$?}
} 
\subsection{LCA-Based Characterization of the Co- Occurrence Graph}\label{sec:lca_charac}
Here we give a characterization of the presence of an edge $(x, y)$
in $G_{co}$ based on the \emph{least common ancestors} of $x$ and $y$
in the graph $H$.

Again, let $f=Q(I)$ be the boolean expression computed by the query plan $Q$
on the table representation $I$. As explained in section~\ref{sec:prelims}
$f$ can also be read off the provenance graph $H$ of $Q$ and $I$ since
$H$ is the representation of $f$ without duplication of common subexpressions.

The absence of self-joins in $Q$ implies the following.

\begin{lemma}\label{lem:dnf_irr}
The DNF generated by expanding $f$ (or $H$)
using only the distributivity
rule is in fact the IDNF of $f$ up to idempotency (i.e. repetition of the
same prime implicant is allowed). 
\end{lemma}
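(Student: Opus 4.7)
My plan is to characterize the distributivity expansion of $f$ as an enumeration of the \emph{join witnesses} of $Q$ on $I$, and then to leverage the no-self-join condition to conclude that each such witness yields a prime implicant with no absorption possible.

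First, I would argue by structural induction on the query plan (equivalently, on the provenance DAG $H$) that each implicant produced by exhaustive application of distributivity to $f$ is the conjunction of variables corresponding to a tuple $(t_1,\ldots,t_k)$ with $t_i \in R_i$ that jointly satisfies all the join and selection predicates of $Q$. Leaves contribute single variables; a $\cdot$-node aggregates variables from each of its subtrees (join combines witnesses from two sub-queries); and a $+$-node distributes into a sum where each summand corresponds to choosing one subtree (projection picks a particular witness among those sharing the same projected values). Selections simply prune non-satisfying witnesses.

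Second, I would invoke the no-self-join assumption. Because the query body $R_1(\mathbf{x}_1),\ldots,R_k(\mathbf{x}_k)$ mentions each table name exactly once, every join witness $(t_1,\ldots,t_k)$ uses exactly one tuple from each table; since tuple ids across distinct tables are distinct variables, the corresponding implicant contains exactly $k$ pairwise distinct variables, one per table. In particular, no implicant of the expansion contains a repeated variable, so idempotency within an implicant never needs to be invoked.

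Third, I would use the uniform size to rule out absorption: every implicant in the expansion has cardinality exactly $k$, so no implicant can be a proper subset of another. Hence every implicant in the expansion is prime. Conversely, every prime implicant of $f$ must appear in the expansion, because the IDNF is obtained from any DNF of $f$ by idempotency and absorption, and the expansion already contains all implicants derivable from $f$ via distributivity. Thus the set of \emph{distinct} implicants of the expansion coincides with the set of prime implicants of $f$, and the expansion agrees with $f_{IDNF}$ modulo idempotency (the same join witness may be derived along several $+$-branches, producing the same prime implicant repeatedly).

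The main obstacle is formalizing the inductive correspondence between distributive expansion and join witnesses in a way that treats projection-induced repetitions correctly and cleanly justifies the ``modulo idempotency'' clause; once that correspondence is established, the cardinality argument via no-self-joins gives primality and non-absorption almost immediately.
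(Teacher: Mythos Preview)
Your proposal is correct and follows essentially the same approach as the paper: the key idea in both is that the absence of self-joins forces every implicant of the distributive expansion to contain exactly one tuple variable from each of the $k$ tables, hence all implicants have size exactly $k$, hence no proper containment (absorption) is possible, leaving only possible duplicates (idempotency). Your version is more detailed---you spell out the structural induction linking implicants to join witnesses and explicitly argue that every prime implicant must occur in the expansion---whereas the paper's proof compresses this into two sentences, but the underlying argument is the same.
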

\begin{proof}
Let $g$ be the DNF generated from $f$ by applying distributivity
repeatedly. Due to the absence of self-joins $g$ every implicant
in $g$ will have exactly one tuple from every table.
Therefore, for any two implicants in $g$ the set of variables in one is not
a strict subset of the set of variables in the other and further
absorption (eg., $xy + xyz = xy$) does not apply. (At worst, two implicants
can be the same and the idempotence rule reduces one.)
Therefore, $g$ is also irredundant and hence {\em the} IDNF of $f$
(up to commutativity and associativity).\end{proof}

Denote by $f_{IDNF}$ the IDNF of $f$, which, as we have seen, 
can be computed from $f$ just by applying distributivity.
%
%\eat{
%Recall that the co-occurrence graph
%$G_{co}$ on $\var(f)$ adds an edge between $x, y \in \var(f)$, 
%if and only if $x, y$ together belong
%to a disjunct in $f_{IDNF}$. 
%}
%
%\eat{ For a directed edge $(u, v) \in E(H)$, $u$ is called a
%  \emph{predecessor} of $v$ and $v$ is called a \emph{successor} of
%  $u$.  An internal node $u \in V(H)$ is called an \emph{ancestor} of
%  another node $v \in V(H)$, if $u$ can reach $v$ in $H$; $v$ is
%  called a \emph{descendant} of $u$ ($u$ is an ancestor of itself).
%  For two nodes $v_1, v_2 \in V(H)$, $u \in V(H)$ is called a
%  \emph{common ancestor} of $v_1, v_2$, if $u$ can reach both $v_1$
%  and $v_2$ in $H$. Further, $u$ is a \emph{least common ancestor} of
%  $v_1, v_2$, if $u$ is a common ancestor of $v_1, v_2$, and, no
%  descendant of $u$ is a common ancestor of $v_1, v_2$. The least
%  common ancestor set of $v_1, v_2$ is denoted by $\lca(v_1, v_2)$.  
%}

As with any DAG, we can talk about the nodes of $H$ in terms of
successors, predecessors, ancestors, and descendants, and finally
about the {\em least common ancestors} of two nodes, denoted 
$\lca(x,y)$. Because $H$ has a root $\lca(x,y)$ is never empty.
When $H$ is a tree, $\lca(x,y)$ consists of a single node.
For a node $u \in V(H)$, we denote the set of leaf
variables which are descendants of $u$ by $\var(u)$ 
(overloaded notation warning!);
%(hence, $\var(x) = \{x\}$, where $x \in \var(f)$.); 
in other words, a variable $x$ belongs to $\var(u)$, 
$u \in V(H)$, if and only if $x$ is reachable
from $u$ in $H$. 
Now we prove the key lemma of this section:

\begin{lemma}\label{lem:dnf-lca}
Two variables $x, y \in \var(f)$ belong together to a (prime) implicant
of $f_{IDNF}$ if and only if the set $\lca(x, y)$ contains a $\cdot$-node.
\end{lemma}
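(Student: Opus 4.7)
The plan is to view prime implicants of $f_{IDNF}$ as \emph{selections} (``traces'') in $H$: at each $+$-node pick exactly one child, at each $\cdot$-node take all children, and then read off the leaves reachable from the root. By Lemma~\ref{lem:dnf_irr} these traces correspond (up to idempotency) to the prime implicants, so the statement ``$x,y$ belong together to a prime implicant'' becomes ``some trace contains both $x$ and $y$''. The crucial structural consequence of the no-self-join hypothesis that I will exploit throughout is: for a $\cdot$-node $u$ with children $c_1,\ldots,c_k$, the table sets $\mathrm{tab}(c_i):=\{R_j : \var(c_i)\cap R_j\neq\emptyset\}$ are pairwise disjoint, and therefore the sub-DAGs rooted at distinct $c_i$'s are node-disjoint (a shared internal node would have a non-empty table set contained in two disjoint ones, and a shared leaf inherits a unique table). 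Children of a $+$-node, by contrast, all have the same table set.

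For the $(\Leftarrow)$ direction, suppose $u\in\lca(x,y)$ is a $\cdot$-node. I first note that $x,y$ lie in distinct tables $R_\alpha,R_\beta$: otherwise the unique child of $u$ whose table set contains $R_\alpha=R_\beta$ would reach both variables, contradicting the LCA status of $u$. By disjointness there are distinct children $c_\alpha,c_\beta$ of $u$ via which $x$ and $y$ are respectively reached. I then build a trace in five parts: (i) choose a path from the root $r$ down to $u$, at each $+$-node on that path picking a child that still reaches $u$; (ii) take all children at $u$; (iii) inside $c_\alpha$'s sub-DAG, at each $+$-node pick a child from which $x$ is reachable; (iv) symmetrically for $y$ inside $c_\beta$'s sub-DAG; (v) choose arbitrarily elsewhere. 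Choices inside $c_\alpha$ and $c_\beta$ never clash because those sub-DAGs are node-disjoint, and they do not clash with the choices above $u$ because ancestors of $u$ in a DAG are not descendants of $u$. The resulting trace reaches both $x$ and $y$, witnessing a common prime implicant.

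For the $(\Rightarrow)$ direction, suppose some trace contains both $x$ and $y$. I unfold the trace into a tree (duplicating shared nodes), pick instances $\hat x,\hat y$ of $x,y$ in it, and let $\hat u$ be their tree-LCA. The label of $\hat u$ must be $\cdot$, since a $+$-label would put both $\hat x$ and $\hat y$ into the single chosen subtree and drive the tree-LCA strictly deeper. Let $u$ be the image of $\hat u$ in $H$; then $u$ is a $\cdot$-node common ancestor of $x,y$, and the tree-LCA condition forces $x$ and $y$ to be reached via distinct children of $u$ in $H$, whence by disjointness $x\in R_\alpha$, $y\in R_\beta$ with $R_\alpha\neq R_\beta$. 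It remains to argue that $u$ is actually an LCA of $x,y$ in $H$: if some proper descendant $v$ of $u$ were also a common ancestor, then $\mathrm{tab}(v)\supseteq\{R_\alpha,R_\beta\}$; but by node-disjointness of $u$'s children's sub-DAGs, $v$ sits under exactly one child $c_\ell$, so $\mathrm{tab}(v)\subseteq\mathrm{tab}(c_\ell)$, and no single child of $u$ contains both $R_\alpha$ and $R_\beta$. Contradiction, hence $u\in\lca(x,y)$.

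The main obstacle is exactly the final step of the $(\Rightarrow)$ direction: in a general monotone DAG a $\cdot$-node common ancestor need not be an LCA (e.g.\ a $\cdot$-node sitting strictly above $+$-LCAs that route the same variable through duplicated subexpressions), so the equivalence fails without the no-self-join hypothesis. Disjointness of table sets at $\cdot$-nodes, supplied by no-self-join, is precisely what rules this out.
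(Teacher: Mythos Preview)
Your proof is correct. The underlying idea is the same as the paper's: at a $\cdot$-node the children's sub-DAGs are node-disjoint (equivalently, each variable sits under a unique child---this is exactly the paper's Lemma~\ref{lem:unique_successor}), and that is what forces any $\cdot$-node that separates $x$ and $y$ between two children to actually be an LCA.

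Where you differ is mainly in packaging, and in how you run the $(\Rightarrow)$ direction. The paper gives a four-line contrapositive: assume $\lca(x,y)$ contains no $\cdot$-node; then (implicitly by the disjointness fact) no $\cdot$-node has $x,y$ under distinct children, so $x$ and $y$ are never multiplied when expanding by distributivity---contradiction. You instead go direct: take a witnessing implicant, view it as a trace, unfold to a tree, locate the tree-LCA $\hat u$, push its image $u$ back into $H$, and then explicitly verify via the table-set argument that $u$ is an LCA. Your $(\Leftarrow)$ direction likewise makes explicit the trace that the paper's one-liner ``the $\cdot$ operation multiplies all variables'' leaves to the reader.

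What each buys: the paper's argument is shorter and leans on Lemma~\ref{lem:unique_successor} (stated just after), while yours is self-contained---you isolate and prove the disjointness of $\cdot$-children's sub-DAGs up front, and your trace/unfolding formalism makes the correspondence with prime implicants precise. Substantively, both routes pivot on the same structural consequence of no self-joins.
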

\begin{proof}
(if) Suppose $\lca(x, y)$ contains a $\cdot$-node $u$, i.e., $x, y$ are 
both descendants of two distinct successors $v_1, v_2$ of $u$. 
Since the $\cdot$
operation multiplies all variables in $\var(v_1)$ with all variables 
in $\var(v_2)$,
$x$ and $y$ will appear together in some implicant in $f_{IDNF}$ which will not
be absorbed by other implicants by Lemma~\ref{lem:dnf_irr}.
\par
(only if) Suppose that $x, y$ appear together in an implicant of
$f_{IDNF}$ and $\lca(x, y)$ contains no $\cdot$-node. 
Then no $\cdot$-node in $V(H)$ has $x, y$ in $\var(v_1), \var(v_2)$, 
where $v_1, v_2$ are its two distinct successors (note that any $\cdot$-node
in a provenance DAG $H$ can have exactly two successors).
\screamresolved{\scream{How many distinct successors does a node have?}}
This implies that $x$ and $y$ can never be multiplied, contradiction.
\end{proof}

Since there are exactly $k$ tables in the query plan, 
every implicant in $f_{IDNF}$ will
be of size $k$. Therefore:

\begin{lemma}\label{lem:unique_successor}
For every variable $x \in \var(f)$ and
$\cdot$-node $u \in V(H)$, if $x \in \var(u)$,
then $x \in \var(v)$ for exactly one successor $v$ of $u$.
\end{lemma}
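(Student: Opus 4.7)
The plan is to reduce the statement to the key fact that the two successors $v_1,v_2$ of any $\cdot$-node $u$ in the provenance DAG $H$ satisfy $\var(v_1)\cap\var(v_2)=\emptyset$. Once that is established, since $\var(u)=\var(v_1)\cup\var(v_2)$ (the descendants of $u$ are precisely the union of the descendants of $v_1$ and $v_2$, because all leaves reachable from $u$ go through one of its successors, and $u$ itself is an internal $\cdot$-node), a variable $x\in\var(u)$ lies in $\var(v_1)$ or $\var(v_2)$ but not in both, giving exactly one successor as required.

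To prove the disjointness claim, I would argue by structural induction on the SPJ plan that produced $H$, following the PG construction rules described in Section~\ref{sec:prelims}. A $\cdot$-node is introduced only by a join operation, and its two children are the PG roots of the two operand tuples of that join. I will show the stronger invariant that for any node $w$ appearing as a root of an intermediate PG produced during the plan, all leaves reachable from $w$ are tuple ids drawn from a well-defined set of base tables $\mathcal{T}(w)\subseteq\mathbf{R}$, namely the set of base tables that occur in the sub-plan that produced $w$. In the base case, a leaf $x$ belongs to a unique base table and $\mathcal{T}(x)$ is just that table. For a projection node, $\mathcal{T}$ is inherited from its single child. For a join node $u$ with children $v_1,v_2$ coming from two different sub-plans $P_1,P_2$, we have $\mathcal{T}(u)=\mathcal{T}(v_1)\cup\mathcal{T}(v_2)$, and since $Q$ has no self-joins each base table appears at most once as a leaf of the overall plan, so $\mathcal{T}(v_1)\cap\mathcal{T}(v_2)=\emptyset$. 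Because each variable (tuple id) belongs to exactly one base table, this disjointness at the table level forces $\var(v_1)\cap\var(v_2)=\emptyset$.

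An alternative (and shorter) route I would consider is to invoke Lemma~\ref{lem:dnf_irr} together with Lemma~\ref{lem:dnf-lca}: the absence of self-joins guarantees that every prime implicant of $f_{IDNF}$ contains exactly $k$ distinct variables, one from each table. If $x$ lay in both $\var(v_1)$ and $\var(v_2)$, then by distributing $u=v_1\cdot v_2$ we would generate a conjunction containing $x\cdot x$ that collapses to an implicant with at most $k-1$ distinct tables represented, contradicting the one-tuple-per-table property. I would probably present this second argument as a corollary/alternative, since it is cleaner but depends on the already-proved Lemma~\ref{lem:dnf_irr}.

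The main obstacle is the bookkeeping in the structural induction: the PG construction is defined tuple-by-tuple with node deletions and sharing of subexpressions, so one must be careful that the invariant $\mathcal{T}(\cdot)$ is well-defined on shared nodes (a node that is reused across multiple higher-level joins still has a single set of base tables below it). This is a mild subtlety but needs explicit mention; after handling it, both joins and projections preserve the invariant transparently, and the disjointness at every $\cdot$-node, hence the uniqueness of the successor containing $x$, follows immediately.
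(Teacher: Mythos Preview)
Your proposal is correct. In fact, the ``alternative (and shorter) route'' you sketch at the end is precisely the argument the paper gives: if $x$ belonged to $\var(v_1)$ and $\var(v_2)$ for two distinct successors of a $\cdot$-node, then distributing would produce $x\cdot x=x$ and hence an implicant with fewer than $k$ variables, contradicting the observation (stated just before the lemma) that every prime implicant of $f_{IDNF}$ has exactly $k$ variables, one per table.

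Your primary plan --- the structural induction on the SPJ plan tracking the set $\mathcal{T}(w)$ of base tables below each node --- is also correct and is in some sense more explanatory: it pinpoints \emph{why} the two children of a $\cdot$-node have disjoint variable sets (they come from disjoint sets of base tables because the query has no self-joins). The paper's idempotence argument is terser but relies on the reader having internalized the ``one tuple per table in every implicant'' fact; your inductive argument makes that structural reason explicit and would stand on its own without invoking Lemma~\ref{lem:dnf_irr}. Either route is fine; the paper simply opted for the one-line version.
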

\begin{proof}
If $x \in \var(f)$ belongs to 
$\var(v_1), \var(v_2)$ for two distinct successors 
$v_1, v_2$ of $u$, then some implicant in $f_{IDNF}$ will have $< k$ variables
since $x\cdot x = x$ by idempotence. 
\end{proof}

The statement of Lemma~\ref{lem:dnf-lca} provides a criterion
for computing $G_{co}$ using the computation of least common ancestors
in the provenance graph, which is in often more efficient than computing 
the entire IDNF. We have shown that this criterion is satisfied in the case
of conjunctive queries without self-joins. But it may also be satisfied
by other kinds of queries, which opens a path to identifying other 
cases in which such an approach would work.

\subsection{Computing the Table-Adjacency Graph}\label{sec:comp_g_t}

\noindent
It is easier to describe the computation of $G_T$ if we use the
query in rule form $Q() :- R_1(\mathbf{x}_1), \ldots , R_k(\mathbf{x}_k)$.

The rule form can be computed in linear time from the SPJ query plan.
Now the vertex set $V(G_T)$ is the set of table names $R_1, \cdots, R_k$.
and an edge exists between $R_i, R_j$ iff $\mathbf{x}_i$
and $\mathbf{x}_j$ have at least one FO variable in common
i.e., $R_i$ and $R_j$ are joined. Whether or not
such an edge should be added can be decided in time
$O(\alpha \log\alpha)$ by sorting and intersecting
$\mathbf{x}_i$ and $\mathbf{x}_j$. Here $\alpha$ is the maximum
arity (width) of the tables $R_1, \cdots, R_k$.
Hence $G_T$ can be computed in time $O(k^2\alpha \log \alpha)$.
%\scream{Sudeepa, how come the $k^2$ here has become $k$ in the theorem?
%I guess it's because you assumed $G_T$ is PREcomputed?  But its
%computation needs the actual query, the same issue as with $G_C$.
%Unless you can do the check in algo 1 without computing all of $G_T$?
%I will make the correction there cause it's no big deal
%and if I am wrong we'll change it back and explain teh difference.}

\subsection{Computing the Co-Table Graph}

Recall that co-table graph $G_C$ is a subgraph of the co-occurrence 
graph $G_{co}$
where we add an edge between two variables $x, y$, only if the tables containing
these two tuples are adjacent in the table-adjacency graph $G_T$.
Algorithm~\ref{alg:lca}  \compcotab\ 
constructs the co-table graph $G_C$ 
by a single \emph{bottom-up} pass over the graph $H$.

\begin{algorithm}[h!t!]
\caption{\em Algorithm \compcotab}
{\bf Input: Query plan DAG $H$ and table-adjacency graph $G_T$} \\
{\bf Output: Co-table graph $G_C$.~~~~~~~~~~~~~~~~~~~~~~~~~~~~}
\begin{algorithmic}[1] \label{alg:lca}
	\STATE{-- Initialize $V(G_C) = \var(f)$, $E(G_C) = \phi$.}
	\STATE{-- For all variables $x \in \var(f)$, set $\var(x) = \{x\}$.}
%	\STATE{\textit{/*Visit nodes according to a reverse topological order and compute 
%	$\var(u)$, $u \in V(H)$	in this process*/}}
	\STATE {-- Do a \emph{topological sort} on $H$ and \emph{reverse} the sorted order.}
	\FOR{every node $u \in V(H)$ in this order}
		\STATE{\textit{/* Update $\var(u)$ set for both $+$-node and $\cdot$-node $u$*/}}
		\STATE{-- Set $\var(u) = \bigcup_{v} \var(v)$, where the union is
		over all successors $v$ of $u$.}
		\IF{$u \in V(H)$ is a $\cdot$-node}
			\STATE{\textit{/* Add edges to $G_C$ only for a $\cdot$-node*/}}
			\STATE{-- Let $v_1, v_2$ be its two successors.}
			\FOR{every two variables $x \in \var(v_1)$ and $y \in \var(v_2)$}\label{step:check}
%			, where $v_1, v_2$
%			are two distinct successors of $u$}
			 	\IF{(i) the tables containing $x, y$ are adjacent in $G_T$ and 
			 	(ii) the edge $(x, y)$ does not exist in $E(G_C)$ yet}\label{step:table}
			 		\STATE{-- Add an edge between $x$ and $y$ in $E(G_C)$.}		
			%\begin{enumerate}
					%\STATE{-- Find the unique successor $v_1$ of $u$ s.t. $x \in \var(v_1)$.}
					%\STATE{-- Find the unique successor $v_2$ of $v$ s.t. $y \in \var(v_2)$.}
					%\STATE{-- If $v_1 \neq v_2$, add an edge between $x$ and $y$ in $E(G_C)$.}
				\ENDIF
			%\end{enumerate}
			\ENDFOR
		\ENDIF
	\ENDFOR
\end{algorithmic}
\end{algorithm}

It is easy to see that a minor modification of 
the same algorithm can be used
to compute the co-occurrence graph $G_{co}$: in Step~\ref{step:table}
we simply skip the check whether the tables
containing the two tuples are adjacent in $G_T$.
Since this is the only place where $G_T$ is used, the time
for the computation of $G_C$ does not include the time related
to computing/checking $G_T$.

%\medskip
%\noindent
%\subsubsection{Correctness} 
\paragraph{Correctness.~~}
By a simple induction, it can be shown that the set $\var(u)$ is correctly computed at every step,
i.e., it contains the set of all nodes which are reachable from $u$ in $H$
(since the nodes are processed in reverse topological order and $\var(u)$ is union of 
$\var(v)$ for over all successors $v$ of $u$). 
Next lemma shows that algorithm \compcotab\ correctly builds the co-table graph $G_C$ (proof in Appendix~\ref{app:proof_sec3}).

\begin{lemma}
\label{lemma:corcompcotable}
Algorithm \compcotab\ adds an edge $(x, y)$ to $G_{C}$ if and only if $x, y$
together appear in some implicant in $f_{IDNF}$ and the tables containing $x, y$
are adjacent in $G_T$.
\end{lemma}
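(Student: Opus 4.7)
The plan is to reduce correctness of \compcotab{} to the least-common-ancestor characterization of Lemma~\ref{lem:dnf-lca}, after a short invariant on the maintained $\var(u)$ sets. First I would verify by induction in the reverse topological order that, at the moment \compcotab{} processes node $u$, the field $\var(u)$ equals the set of leaf variables reachable from $u$ in $H$: leaves are initialized to singletons, and at an internal node $u$ the field is set to the union of $\var(v)$ over its already-processed successors $v$, which is precisely the descendant leaves.

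For the ``if'' direction, suppose $\{x,y\}$ lies in a prime implicant of $f_{IDNF}$ and the tables of $x,y$ are adjacent in $G_T$. Lemma~\ref{lem:dnf-lca} supplies a $\cdot$-node $u \in \lca(x,y)$. Since $x,y \in \var(u)$, Lemma~\ref{lem:unique_successor} places $x \in \var(v_1)$ and $y \in \var(v_2)$ for uniquely determined successors $v_1,v_2$ of $u$; because $u$ is a \emph{least} common ancestor, $v_1 \neq v_2$ (otherwise that common successor would itself be a strictly smaller common ancestor of $x,y$). So when \compcotab{} processes $u$, the pair $(x,y)$ appears in the double loop at line~\ref{step:check} over $\var(v_1)\times\var(v_2)$, the table-adjacency condition in Step~\ref{step:table} holds by hypothesis, and the edge is inserted.

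For the ``only if'' direction, assume the algorithm adds $(x,y)$ to $E(G_C)$. The only line inserting edges sits inside a $\cdot$-node $u$ with distinct successors $v_1,v_2$ such that $x \in \var(v_1)$ and $y \in \var(v_2)$, and by construction the adjacency test in Step~\ref{step:table} must have passed. It remains to show $\{x,y\}$ co-occurs in some prime implicant. I claim $u \in \lca(x,y)$: certainly $u$ is a common ancestor; if some proper descendant $w$ of $u$ were also a common ancestor, then $w$ would lie strictly below exactly one of $v_1, v_2$, say $v_1$, which would force $y \in \var(w) \subseteq \var(v_1)$ and contradict Lemma~\ref{lem:unique_successor} combined with $y \in \var(v_2)$. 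Thus $u$ is a $\cdot$-node in $\lca(x,y)$, and Lemma~\ref{lem:dnf-lca} supplies the desired prime implicant.

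The delicate step is precisely this minimality argument in the ``only if'' direction: having a $\cdot$-ancestor whose children separate $x$ and $y$ is not, a priori, the same as having a $\cdot$-node in $\lca(x,y)$, so one must actively rule out a deeper common ancestor. This is where the no-self-join assumption enters crucially, via the uniqueness guaranteed by Lemma~\ref{lem:unique_successor}; after that wrinkle is handled, the rest of the proof is bookkeeping over the algorithm's iteration pattern.
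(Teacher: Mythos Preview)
Your proof is correct and follows essentially the same route as the paper's: both directions reduce to Lemma~\ref{lem:dnf-lca} via the $\var(\cdot)$ invariant and the uniqueness from Lemma~\ref{lem:unique_successor}. The only cosmetic difference is that the paper argues the ``only if'' direction by contrapositive (assume $x,y$ share no prime implicant, then for every $\cdot$-node $u$ with $x,y\in\var(u)$ some single successor already contains both, so the inner loop never pairs them), whereas you argue it directly by showing the $\cdot$-node $u$ at which the edge is inserted must itself lie in $\lca(x,y)$; these are the same argument with the quantifiers reversed. One small wording point: your claim that a proper descendant $w$ of $u$ lies below \emph{exactly} one of $v_1,v_2$ is in fact true here (since $\var(v_1)\cap\var(v_2)=\emptyset$ by Lemma~\ref{lem:unique_successor}, no node with a leaf descendant can sit below both), but all you actually need for the contradiction is that $w$ lies below \emph{at least} one of them.
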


\commentproof{
\begin{proof}
Suppose two variables $x, y$ belong to the same implicant in $f_{IDNF}$, and 
their tables are adjacent in $G_T$.
Then by Lemma~\ref{lem:dnf-lca}, there is a $\cdot$-node $u \in \lca(x, y)$,
and $x \in \var(v_1), y \in \var(v_2)$ for two distinct successors $v_1, v_2$ of $u$.
When the algorithm processes the node $u$, if an edge between $x, y$ is not added
in a previous step, the edge will be added. This shows the completeness of algorithm
\compcotab.
\par
Now we show the soundness of the algorithm.
Consider two variables $x, y$ such that either the tables containing them are not adjacent in 
$G_T$ or they do not belong together in any of the implicants in $f_{IDNF}$.
If the tables containing $x, y$ are not adjacent in $G_T$, clearly, the algorithm
never adds an edge between them -- so let us consider the case when 
 $x, y$ do not belong to the same implicant in $f_{IDNF}$.
Then by Lemma~\ref{lem:dnf-lca}, there is no $\cdot$-node $u \in \lca(x, y)$.
\par
Consider any iteration of the algorithm and consider that a node $u$ 
is processed by the algorithm in this iteration. If $u$ is a $+$-node or 
if either $x \notin \var(u)$ or $y \notin \var(u)$,  
again no edge is added between $x, y$. So assume that, $u$ is a $\cdot$-node and 
$x, y \in \var(u)$.
Then $u$ is a common ancestor of $x$ and $y$. But since $u \notin \lca(x, y)$, by definition
of least common ancestor set, there is a
successor $v$ of $u$ such that $v$ is an ancestor of both $x, y$ and therefore, $x, y \in \var(v)$.
However, by Corollary~\ref{cor:unique_successor}, since $x$ or $y$ cannot belong
to two distinct successors of node $u$, node $v$ must be the unique successor of $u$ such that $x, y \in \var(v)$.
Since \compcotab\ only joins variables from two distinct children,
no edge will be added between $x$ and $y$
in $G_C$.
\end{proof}
}

%\subsubsection{Time complexity} 
%\medskip
%\noindent
\paragraph{Time Complexity.~~} 
Here we give a sketch of the time complexity analysis, details can be found
in the appendix (Section~\ref{sec:app_time_lca}).
Computation of the table adjacency graph takes $O(k^2 \alpha \log \alpha)$ time
as shown in Section~\ref{sec:comp_g_t}. 
The total time complexity of algorithm \compcotab\ as given in Theorem~\ref{thm:lca}
is mainly due to two operations: (i) computation of the $\var(u)$ set at every internal
node $u \in V(H)$, and (ii) to perform the test for pairs $x, y$ at
 two distinct children of a $\cdot$-node, whether the edge $(x, y)$ already exists
in $G_C$, and if not, to add the edge. 
\par
We show that the total time needed
for the first operation is $O(nm_H)$ in total: for every internal node
$u \in V(H)$ we can scan the variables sets of all its immediate successor in $O(nd_u)$ time to compute
$\var(u)$, where $d_u$ is the outdegree of node $u$ in $H$. This gives total $O(nm_H)$ time.
On the other hand, for adding edges $(x, y)$ in $G_C$,
it takes total $O(m_{co}\beta_H)$ time during the execution of the algorithm, where $m_{co}$
is the number of edges in the co-occurrence graph (and not in the co-table graph, even if 
we compute the co-table graph $G_C$) and $\beta_H$ is the width of the graph $H$. 
To show this, we show that two variables $x, y$ are considered by the algorithm at Step~\ref{step:check}
if and only if the edge $(x, y)$ already exists in the co-occurrence graph $G_{co}$, however, the edge
may not be added to the co-table graph $G_C$ if the corresponding tables are not adjacent in the table adjacency
graph $G_T$. We also show that any such edge $(x, y)$ will be considered 
at a unique level of the DAG $H$.
In addition to these operations, the algorithm
does initialization and a topological sort on the vertices which take
$O(m_H + n_H)$ time ($n_H = |V(H)|$) and are dominated by the these two operations.
%\par

\commentproof{
First we prove the following two lemmas bounding the number of times any given pair of variables 
$x, y$ are considered by the algorithm. The first lemma shows that 
 the variables $x, y$ are considered
by algorithm \compcotab\ to add an edge between them in $G_{co}$ only when they together appear in an implicant in $f_{IDNF}$,
i.e. only if the edge actually should exist
in $G_{co}$.
\begin{lemma}\label{lem:bound-no-edge}
Consider any two variables $x, y$ and a $\cdot$-node $u$.
If $x, y$ do not appear together in an implicant in $f_{IDNF}$, $x, y$
do not belong to the variable sets $\var(v_1), \var(v_2)$ for two distinct successors $v_1, v_2$ of $u$.
\end{lemma}
\begin{proof}
This easily follows from Lemma~\ref{lem:dnf-lca} which says that if $x, y$ do not appear together in an implicant in $f_{IDNF}$,
then there is no $\cdot$-node in $\lca(x, y)$. So for every $\cdot$-node $u$, either (i) one of $x$ and $y$
$\notin \var(u)$, or, (ii) there is a unique successor $v$ of $u$ which is a common ancestor of $x, y$, i.e. both $x, y \in \var(v)$
(uniqueness follows from Corollary~\ref{cor:unique_successor}). 
\end{proof}
The second lemma bounds the number of times a pair $x, y$ is considered by the algorithm
to add an edge between them. 
%Recall that the provenance DAG $H$ is a layered graph
%where every layer corresponds to a select, project or join operation in the query plan.
%The \emph{width} of $H$ is 
%\eat{
%We say the pair $x, y$ is \emph{joined} by the algorithm at a node $u$ if $x \in \var(v_1)$,
%$y \in \var(v_2)$, where $v_1, v_2$ are two distinct successors of node $u$ in $H$.
%It is easy to see that $x, y$ are joined by a node $u$, if and only if $u \in \lca(x, y)$.
%}
\begin{lemma}\label{lem:bound-edge}
Suppose $x, y \in \var(f)$ be such that they together appear in an implicant $f_{IDNF}$.
Then algorithm \compcotab\ considers $x, y$ in Step~\ref{step:check} to add an edge between them maximum $\beta_H$ times,
where $\beta_H$ is the width of the provenance DAG $H$.
%$\beta_{xy}$ is the number of times the pair $x, y$ is joined in the given query plan $H$.
\end{lemma}
\begin{proof}
Note that the check in Step~\ref{step:check} is performed only when the current node $u$ is a $\cdot$-node.
Consider any $\cdot$-node $u$. (i) if either $x$ or $y$
is not in $\var(u)$, clearly, $x, y$ are not checked in this step,
otherwise, (ii) if both $x, y \in \var(u)$,
and $x, y \in \var(v)$ for a unique child $v$ of $u$, then also $x, y$ are not checked at this step,
otherwise, (iii) if $u$ joins $x, y$, i.e., $x \in \var(v_1)$, $y \in \var(v_2)$ for two distinct
children $v_1, v_2$ of $u$, then only $x, y$ are considered by the algorithm in Step~\ref{step:check}.
(and after this node $u$ is processed, both $x, y$ appear in $\var(u)$).\par
However, since the query does not have any self-joins, the only time two variables $x, y$
appear in two distinct successors of a $\cdot$-node $u$ when the query plan
joins a subset of tables containing the table for $x$ with a subset of tables containing
the table for $y$. So the pair $x, y$ is multiplied at a unique layer of $H$,
and the total number of times they are multiplied cannot exceed the total number of nodes
in the layer which is at most the width $\beta_H$ of the DAG $H$. 
\end{proof}
%From Lemma~\ref{lem:bound-no-edge} and Lemma~\ref{lem:bound-edge}, the next corollary follows.
%\begin{corollary}\label{cor:total_check}
%The total number of pairs checked by algorithm \compcotab\ in its entire run is 
%$\beta m_{co}$, where $m_{co}$ is the number of edges in its co-occurrence graph and $\beta = \max_{x,y} \beta_{xy}$
%is the maximum number of times any pair is joined by the given query plan.
%\end{corollary}
%\eat{
%In an arbitrary DAG with alternate layers of $+$ and $\cdot$ node, it may happen that
%$\beta_{x, y}$ is non-constant (as a simple example, consider a DAG with four layers, the bottom-most layer
%has two groups of $\frac{n}{2}$ variables, which are grouped by two $+$ nodes at the layer above, and these two $+$
%nodes are multiplied by $n$ other $\cdot$ nodes, and the top most layer has a single $+$ node. Here
%every pair of $O(n^2)$ nodes are joined $n$ times).  
%}
Now we complete the running time analysis of algorithm \compcotab. 
\commentproof{
\footnote{It is important to note
that the running time of the algorithm \compcotab\ is 
$O(\beta_H m_{co} + n m_H)$ where
$m_{co}$ is the number of edges in the co-occurrence graph, and not in 
the co-table graph,
although the algorithm computes the co-table graph. 
This is because all variable pairs $x, y$
which appear in some implicant together in $f_{IDNF}$ 
will appear in two distinct children
of some $\cdot$-node. We cannot discard any of the variable (say $x$) even if the tables containing
$x, y$ are not adjacent in table-adjacency graph $G_T$, since $x$ may join with some other variable $y'$
later.}
}

\begin{lemma}\label{lem:time-cotab}
Given the table-adjacency graph $G_T$ and input query plan $H$, algorithm \compcotab\ can be implemented in 
time $O(\beta_H m_{co} + n m_H)$ time, where $m_{co}$ is the number of edges in the co-occurrence graph,
$m_{H}$ is the number of edges in the DAG $H$, 
$\beta_H$ is the width of the DAG $H$ and $n = |\var(f)|$.
\end{lemma}

The proof is in Appendix~\ref{app:proof_sec3}.

\commentproof{
\begin{proof}
Initialization step can be done in $O(n)$ time.
The topological sort can be done in $O(m_H + |V(H)|)$ time by 
any standard algorithm.
\par
At every node $u \in V(H)$, to compute set $\var(u)$, the algorithm scans $O(d_{u})$ successors of $u$,
where $d_u = $ the outdegree of node $u$ in $H$. 
Although by Corollary~\ref{cor:unique_successor}, 
for every two distinct children $v_1, v_2$ of a $\cdot$-node $u$, $\var(v_1) \cap \var(v_2) = \phi$,
they may have some overlap when $u$ is a $+$-node, and here the algorithm incurs an $O(n m_H)$ cost total
as follows: (i) create an $n$-length boolean array for $u$ initialized to all zero, (ii) scan $\var(v)$
list of very successor $v$ of $u$, for a variable $x \in \var(v)$, if the entry for $x$ in the boolean array is false
mark it as true, (iii) finally scan the boolean array again to collect the variables marked as true
for variables in $\var(u)$. At every node $u \in V(H)$, the algorithm spends $O(n d_{u})$ time, 
where $d_u = $ the outdegree of node $u$ in $H$. Hence the total time across all nodes
= $\sum_{u \in V(H)} O(n d_{u})$ = $O(n m_H)$.
%\footnote{Since $O(n)$ cost suffices for every $\cdot$-node
%$u$, the time complexity of the algorithm actually is $O(n(|E(H)^+| + |V(H)|) + m_H + \beta m_{co})$, where $E(H)^+$
%is the total number of outgoing edges from all $+$-nodes in $H$.}.
%So we can
%simply append the $\var(v)$ lists over all children $v$ of $u$ to compute $\var(u)$ in $O(d_u)$ time
%by a simple pointer manipulation. Hence over all nodes $u$, computation of $\var(u)$ takes 
\par
Every check in Step~\ref{step:check}, i.e.,
whether an edge $(x, y)$ has already been added
and whether the tables containing $x, y$ are adjacent in $G_T$ can be done in $O(1)$ time using $O(n^2+k^2) = O(n^2)$
space. Further, by Lemma~\ref{lem:bound-no-edge} and \ref{lem:bound-edge}, 
the number of such checks performed is $O(\beta_H m_{co})$.
Since $\var(f) \subseteq V(H)$, and $H$ is connected, $n \leq |V(H)| \leq |E(H)|$.
Hence the total time complexity is $O(n m_H + \beta_H m_{co})$.
\end{proof}
}

We can now finish to prove Theorem~\ref{thm:lca}. As shown in Section~\ref{sec:comp_g_t}, computation of the table-adjacency graph $G_T$ takes $O(k^2 \alpha \log \alpha)$
time and this proves the second part of Theorem~\ref{thm:lca}.
The time complexity analysis in Lemma~\ref{lem:time-cotab} also holds when we modify \compcotab\ to compute the co-occurrence graph $G_{co}$
instead of the co-table graph $G_C$: the only change is that we do not check whether the tables containing $x, y$
are adjacent in $G_T$. Further, we do not need to precompute the graph $G_T$. This proves the first part 
and completes the proof of Theorem~\ref{thm:lca}.

%\scream{I just noticed a footnote in section 4 that still talks about preprocessing}

%
% At every node $u \in V(H)$, the algorithm spends $O(n^2 d_{u})$ time, 
%where $d_u = $ the outdegree of node $u$ in $H$. Hence the total time across all nodes
%= $\sum_{u \in V(H)} O(n^2 d_{u})$ = $O(n^2 m_H)$. The check whether the tables containing two 
%variables $x, y$ are adjacent in $G_T$ and whether
%an edge in $G_c$ \scream{$G_{co}$ or $G_C$?}already exists between these variables can be performed in $O(1)$ time
%using two-dimensional boolean array of size $O(k^2)$ and $O(n^2)$ respectively
%\scream{can we improve space complexity?}. 
%So the total time complexity of computing $G_C$ (or $G_{co}$)
%is $O(n^2(m_H + |V(H)|))$.
 
}

\eat{
\scream{OUTLINE OF REST:}

3) Data complexity of the 'probability query separability problem' 
       
C) Careful analysis of size of annotation (and of size of the DNF of the annotation) in terms of DB size: 
       
i) size of each relation, 

ii) number of relations
            (If there are n distinct relations each of constant size we might get a problem)
       
D) Transfer of B) in context C) 

4) Combined complexity
      
E) here B) doesn't transfer any more. Sudeepa's example: Q()=R1(A1),...,R(An). (This problem is related to C))

F) 'New' algorithm: row decomposition, table decomposition

G) New algorithm requires the final expression in some form (fixed qw?)

Possible alternative: put together G) with C)  and then write one or two sections with D), E), F). 
}
\section{Computing the Read-Once Form} \label{sec:algo-query}
%\scream{algo-query.tex}\\
%\scream{(s) change this section name?}\\
%\scream{(s) tuple $t$ and r.v. $x_t$ for $t$ may have been used equivalently, should
%be checked later.}
Our algorithm \compsepquery\ (for \emph{Compute Read-Once}) takes an instance $I$ of the schema $\tup{R} = R_1, \cdots, R_k$, 
a query $Q() :- R_1(\mathbf{x_1}), R_2(\mathbf{x_2}), \cdots, R_k(\mathbf{x_k})$ along with the
table adjacency graph $G_T$ and co-table graph $G_{C}$
computed in the previous section as input, and outputs whether $Q(I)$ is read-once. 
(if so it computes its unique read-once form). 

\eat{
We describe the algorithm in Section~\ref{sec:algo_query}, 
and discuss the correctness of the algorithm in Section~\ref{sec:query_correctness}.
In section~\ref{sec:query_time}, we give a time complexity analysis of this algorithm 
in detail. Finally, we illustrate our algorithm using an example in Section~\ref{sec:algo_eg}.
The following theorem summarizes our results in this section:
}

\begin{theorem}\label{thm:query_time}
Suppose we are given a query $Q$, a table-independent database representation $I$, the
co-table graph $G_C$ and the table-adjacency graph $G_T$ for $Q$ on $I$
as inputs.  Then
\begin{enumerate}
	\item Algorithm \compsepquery\ %(see Algorithm~\ref{alg:comp_sep_query})
decides correctly whether the %boolean 
expression generated by evaluating $Q$ on $I$ is read-once, and if yes, it returns the
unique read-once form of the expression, and,
 \item Algorithm \compsepquery\ runs in time
$O(m_T\alpha \log \alpha + (m_C+n) \min(k, \sqrt{n}))$, 
\end{enumerate}
where $m_T = |E(G_T)|$ is the number of edges in $G_T$,
$m_{C} = |E(G_C)|$ is the number of edges in $G_C$, 
$n$ is the total number of tuples in $I$,
$k$ is the number of tables, and $\alpha$ is the maximum size of any subgoal.
\end{theorem}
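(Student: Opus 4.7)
The plan is to prove both parts by analyzing how \compsepquery\ recursively applies \rowdecomp\ and \tabledecomp\ to shrink the problem until it is either a single variable (success) or exhibits an obstruction to being read-once (failure). First I would characterize, using Lemma~2, when the top of the read-once tree of a monotone boolean expression $f=Q(I)$ is AND versus OR. For a disjoint read-once OR-decomposition $f = f_1 + f_2$, no pair $x\in\var(f_1)$, $y\in\var(f_2)$ co-occurs in a prime implicant, so no such edge appears in $G_{co}$ and hence none in $G_C$ either. For an AND-decomposition $f = f_1 \cdot f_2$, every cross pair shares every prime implicant, so $(x,y)\in G_{co}$, while $(x, y)\in G_C$ iff the tables of $x, y$ are adjacent in $G_T$. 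Against this characterization I would establish two invariants: (i) each step of \rowdecomp\ and \tabledecomp\ is a semantically sound rewriting of the current subformula, and (ii) if the current subformula is read-once and nontrivial, at least one of the two decomposition attempts succeeds. The uniqueness of read-once forms modulo associativity and commutativity then guarantees that the algorithm's output is independent of the order of its choices.

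For the running time I would split the cost into preprocessing and recursion. The one-time $O(m_T\alpha \log \alpha)$ term pays for preprocessing the FO-variable matching across each edge of $G_T$, e.g.\ by sorting the attribute lists on each joined pair so that later calls to \rowdecomp\ can decide in essentially constant amortized time whether two tuples of the same table have equivalent join patterns with a neighbor table. The recursion itself performs only local scans of the induced co-table subgraph and tuple set currently owned by the call, summing to $O(m_C + n)$ work per level. The recursion depth is bounded by $\min(k, \sqrt{n})$: by $k$ because every successful \tabledecomp\ strictly reduces the number of distinct tables touched by at least one side of the split, and by $\sqrt{n}$ via the standard amortized balancing argument on recursion trees whose total tuple count is $n$ and whose every nontrivial split produces at least two pieces, which forces the geometric-sized sides to exhaust after $O(\sqrt{n})$ levels.

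The main obstacle will be the completeness direction of correctness: showing that whenever both \rowdecomp\ and \tabledecomp\ fail on a nontrivial subproblem, the induced subformula cannot be read-once. Rather than lean on the cograph characterization used in~\cite{SenDG10}, I would give a direct structural argument tailored to conjunctive queries without self-joins. Using Lemma~2 together with the self-join-free assumption and Lemma~3, I would isolate an explicit local obstruction in the $G_C$-structure of a stuck subproblem --- a configuration analogous to $xy+yz+xz$ filtered through the join skeleton $G_T$ --- and show that any candidate read-once tree over it would force some variable to appear twice. Combining this obstruction lemma with the invariants described above yields the correctness claim of part~(1), and together with the bookkeeping of part~(2) completes the proof of the theorem.
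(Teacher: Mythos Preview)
Your correctness outline is in the right spirit but diverges from the paper's route, and your time-complexity sketch has two concrete gaps.

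\textbf{Correctness.} The paper does not argue by contrapositive via an obstruction. It proves completeness by induction on the height of the (canonical, alternating) read-once tree $T^*$: if the root of $T^*$ is $+$, the variable sets under distinct children share no prime implicant, hence lie in different connected components of $G_C$, so \rowdecomp\ succeeds and in fact reproduces exactly the root's partition; if the root is $\cdot$, the self-join-free assumption forces the children to partition the set of \emph{tables}, every cross-table edge in $G_T$ gets marked ``$+$'', and \tabledecomp\ succeeds with exactly that table partition. This direct argument avoids having to exhibit a forbidden configuration. Your proposed obstruction ``analogous to $xy+yz+xz$'' is also the wrong shape here: for self-join-free conjunctive queries every prime implicant has exactly one variable per table, so the triangle pattern cannot arise; the relevant obstruction in $G_{co}$ is a $P_4$ (i.e., $xy+yz+zu$), and you would still need to argue it survives the passage from $G_{co}$ to the sparser $G_C$.

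\textbf{Running time.} Two points are off. First, the $O(m_T\alpha\log\alpha)$ term is not preprocessing for \rowdecomp; \rowdecomp\ is just connected components on $G_C'$ and never touches FO variables. That term pays for computing the \emph{modified subqueries} $\widehat{Q_j}$ inside \tabledecomp, amortized over the whole recursion by charging each edge of $G_T$ once (an edge is processed only when it crosses a ``$+$''-cut, after which it is discarded). Second, your $\sqrt{n}$ depth argument as stated does not work: merely having $\ge 2$ children at every split gives no $\sqrt{n}$ bound (unbalanced splits could give linear depth). The paper's bound uses the specific interleaving: between consecutive \tabledecomp\ layers there is a \rowdecomp, and each surviving row-component contains at least $k_j$ tuples (one per current table), so along any root-to-leaf path $n_j \le n_{j-1} - k_{j-1}$. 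Combining this with the fact that $k_j$ drops by at least one per \tabledecomp\ layer yields the $O(\sqrt{n})$ bound via a telescoping sum; a generic ``balancing'' argument does not.
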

\subsection{Algorithm \compsepquery}\label{sec:algo_query}

In addition to the probabilistic database with tables $R_1, \cdots, R_k$
and input query $Q$, our algorithm also takes the  \emph{table-adjacency graph $G_T$}
and the \emph{co-table graph $G_C$} computed in the first phase as discussed in
Section~\ref{sec:co-graph}. 
The co-table graph $G_C$ also helps us to remove \emph{unused tuples} from 
all the tables which do not appear in the final expression -- every unused tuple
won't have a corresponding node in $G_C$.
So from now on we can assume wlog. that every tuple in every table 
appears in the final expression $f$. 
\par
The algorithm $\compsepquery$ uses two decomposition operations: \emph{Row decomposition}
is a \emph{horizontal} decomposition operation which partitions the rows or tuples in 
every table into the same number of groups and forms a set of sub-tables from every table. 
On the other hand, \emph{Table decomposition} is a \emph{vertical}
decomposition operation. It partitions the set of tables
into groups and a \emph{modified sub-query} is evaluated in every group.
For convenience, we will represent the instance $I$ as $R_1[T_1], \cdots, R_k[T_k]$,
where $T_i$ is the set of tuples in table $R_i$. Similarly, for a subset of tuples
$T_i' \subseteq T_i$, $R_i[T_i']$ will denote the instance of relation $R_i$
containing exactly the tuples in $T_i'$.
The algorithm \compsepquery\ is given in Algorithm~\ref{alg:comp_sep_query}.

\begin{algorithm*}[h!t!]
\caption{\em \compsepquery($Q$, %\angb{Q() :- R_1(\mathbf{x_1}), \cdots, R_k (\mathbf{x_k})}$, 
$I = \angb{R_1[T_1], \cdots, R_k[T_k]},$ $G_C$, $G_T$, $\flag$)}
{\bf Input: Query $Q$, %$Q():- R_1(\mathbf{x_1}), \cdots, R_k(\mathbf{x_k})$ ($k \geq 1$), 
tables $R_1[T_1], \cdots, R_k[T_k]$,
co-table graph $G_C$, table-adjacency graph $G_T$, and a boolean parameter \flag\ which is true
if and only if row decomposition is performed at the current step.\\} 
{\bf Output: If successful, the unique read-once form $f^*$ of the expression for $Q(I)$} 

\begin{algorithmic}[1] \label{alg:comp_sep_query}
\IF { $k = 1$}
	\RETURN{$\sum_{x \in T_1} x$ with success. (\textit{/* all unused tuples are already removed */})}\label{step:base}
\ENDIF
\IF[\textit{/* Row decomposition */}]{\flag\ = \true}
	\STATE{-- Perform $\rowdecomp(\angb{R_1[T_1], \cdots, R_k[T_k]}, G_C)$.}
	\IF [\textit{/* \rowdecomp\ partitions every table and $G_C$ into $\ell \geq 2$ disjoint groups*/}] {row decomposition returns with success}
	\STATE{-- Let the groups returned be $\angb{\angb{T_1^j, \cdots, T_k^j}, G_{C,j}}$, $j \in [1, \ell]$.}
		\STATE{-- $\forall j \in [1, \ell]$, let $f_{j} = $ $\compsepquery(Q$, %\angb{Q() :- R_1(\mathbf{x_1}), \cdots, R_k (\mathbf{x_k})}$, 
		$\angb{R_1[T_1^j], \cdots, R_k[T_k^j]}$, $G_{C,j}, G_T, \false)$.}
		\RETURN{ $f^* = f_1 + \cdots + f_\ell$ with success.}
	\ENDIF
\ELSE[\textit{/* Table decomposition */}] %[\textit{/*\flag\ = \false*/}]
	\STATE{-- Perform $\tabledecomp(\angb{R_1[T_1], \cdots, R_k[T_k]}$, $Q$, %$\angb{Q() :- R_1(\mathbf{x_1}), \cdots, R_k(\mathbf{x_k})}$,
	$G_T, G_C)$.}
	\IF[\textit{/* \tabledecomp\ partitions $I, G_C$ and $G_T$ into $\ell \geq 2$ disjoint groups, 
	$\sum_{j = 1}^\ell k_{j} = k$ */}]{table decomposition returns with success}
	\STATE{Let the groups returned be $\angb{\angb{R_{j, 1}, \cdots, R_{j, k_j}}, \widehat{Q_j}, G_{C,j}, G_{T, j}}$,
	$j \in [1, \ell]$.}
	\STATE{-- $\forall j \in [1, \ell]$, $f_{j} = $ $\compsepquery(\widehat{Q_j}$, 
		$\angb{R_1[T_1], \cdots, R_k[T_k]}$, $G_{C, j}, G_{T, j}, \true)$.}
	\RETURN{ $f^* = f_1 \cdot \ldots \cdot f_\ell$ with success.}
	\ENDIF
\ENDIF	
\IF[\textit{/* Current row or table decomposition is not successful and $k > 1$*/}]{the current operation is not successful}
	\RETURN{ with failure: ``$Q(I)$ is not read-once''.} %Return ``There is no separate form for $E$''. Exit.}
	%\STATE{}
\ENDIF	
\end{algorithmic}
\end{algorithm*}

\par

\medskip
\noindent
\textbf{Row Decomposition. ~}
 The row decomposition operation partitions the tuples variables
in \emph{every} table into $\ell$ disjoint groups. In addition, it decomposes the co-table
graph $G_C$ into $\ell \geq 2$ disjoint \emph{induced subgraphs}\footnote{A
subgraph $H$ of $G$ is an induced subgraph, if for any two vertices 
$u, v \in V(H)$, if $(u, v) \in E(G)$, then $(u, v) \in E(H)$.} corresponding to the
above groups. For every pair of distinct groups
$j, j'$, and for every pair of distinct tables $R_{i}, R_{i'}$, 
no tuple in group $j$ of $R_{i}$ ever joins with a tuple in group $j'$ of $R_{i'}$
(recall that the query does not have any self-join operation).
The procedure for row decomposition is given in Algorithm~\ref{alg:query_row_decomp}\footnote{It should be noted that
the row decomposition procedure  may be called on $R_{i_1}[T_{i_1}'], \cdots, R_{i_p}[T_{i_p}']$ and $G_C'$, 
where $R_{i_1}, \cdots, R_{i_p}$ is a subset of the relations from $R_1, \cdots, R_k$, 
$T_{i_1}', \cdots, T_{i_p}'$ are subsets of the respective set of tuples
$T_{i_1}, \cdots, T_{i_p}$, and $G_C'$ is the induced subgraph of $G_C$
on $T_{i_1}', \cdots, T_{i_p}'$.  For simplicity in notations, we use $R_{1}[T_{1}], \cdots, R_{k}[T_{k}]$. 
This holds for table decomposition as well.}.

\begin{algorithm}[h!t!]
\caption{{\em $\rowdecomp(\angb{R_{1}[T_{1}], \cdots, R_{k}[T_{k}]}, G_C')$}
}
 %Procedure for Row Decomposition}
{{\bf Input: Tables $R_{1}[T_{1}], \cdots, R_{k}[T_{k}]$, and induced subgraph 
$G_C'$ of $G_C$ on $\bigcup_{i=1}^k T_i$} }\\
% and the induced subgraph $G_C'$ of $G_C$ on $\bigcup_{j = 1}^p T_{i_j}'$} \\
{\bf Output: If successful, the partition of $G_C'$ and 
tuple variables of every input tables into $\ell \geq 2$ connected components: 
$\angb{\angb{T_{1,j}, \cdots, T_{k,j}}, G'_{C, j}}$, $j \in [1, \ell]$} 

\begin{algorithmic}[1] \label{alg:query_row_decomp}
\STATE{-- Run BFS or DFS to find the connected components in $G_C'$.}
\STATE{-- Let $\ell$ be the number of connected components.}
\IF[\textit{/* there is only one connected component */}]{$\ell = 1$}
	\RETURN{ with failure: ``Row decomposition is not possible''.}
\ELSE	
	%\STATE{-- Collect the tuples in every table in connected component $j$, for every $j = 1$ to $\ell$. }
	\STATE{-- Let the tuples (vertices) of table $R_i$ in the $j$-th connected component $j$ of $G_C'$ be $T_{i,j}$}
	\STATE{-- Let the induced subgraph for connected component $j$ be $G_{C, j}$.}
	\RETURN{ $\angb{\angb{T_{1,1}, \cdots, T_{k,1}}, G'_{C, 1}}$, $\cdots$, 
		$\angb{\angb{T_{1,\ell}, \cdots, T_{k,\ell}}, G'_{C, \ell}}$ with success.}
\ENDIF	
\end{algorithmic}
\end{algorithm}

%\eat{
%Then 
%%operation partitions the tuples in every table into 
%%sub-tables (if possible) and the \emph{same query} is evaluated 
%the \emph{same query} is evaluated on every set of $k$ sub-tables originated from
%different tables, (ii) \emph{Table decomposition} is a \emph{vertical}
%decomposition operation. It partitions the tables
%into groups and a \emph{modified sub-query} is evaluated in every group.
%The groups are formed in such a way that for every two groups,
%all tuples in one group join with all tuples in the other group (equivalently, if 
%there is an edge between two tables in two different groups in table-adjacency graph $G_T$,
%then there is an edge between every pair of tuples in those two tables in co-table graph $G_C$).
%The modified sub-query is formed by making some variables in the query \emph{free}, i.e.
%by replacing some of the original variables with new variables not used elsewhere in the query.
%}

%\scream{(s). last line correct?}
%\scream{verify}\scream{(v.)correct!}).

\medskip
\noindent
\textbf{Table Decomposition. ~}
%\subsubsection{Table Decomposition}\label{sec:col_decomp}
On the other hand, the table decomposition operation partitions the set of tables $\mathbf{R} = R_1, \cdots, R_k$
into $\ell \geq 2$ disjoint groups $\mathbf{R_1}, \cdots, \mathbf{R_\ell}$.  
It also decomposes the table-adjacency graph 
$G_T$ and co-table graph $G_C$ into $\ell$ disjoint induced subgraphs $G_{T,1}, \cdots, G_{T, \ell}$,
and, $G_{C,1}, \cdots, G_{C,\ell}$ respectively
corresponding to the above groups. The groups are  selected in such a way that
all tuples in the tables in one group join with all tuples in the tables in another
group. This procedure also modifies the
sub-query to be evaluated on every group by making the subqueries of different
groups mutually independent by introducing free variables, i.e., they do not share any common variables
after a successful table decomposition.  Algorithm~\ref{alg:query_table_decomp}
describes the table decomposition operation.
Since the table decomposition procedure changes the input query $Q$ to $\widehat{Q} = \widehat{Q_1}, \cdots, \widehat{Q_\ell}$,
it is crucial to ensure that changing the query to be evaluated does not change
the answer to the final expression. This is shown in Lemma~\ref{lem:td_correct} in Appendix~\ref{app:proofsec4}.
\par
%
%returns a pivot set $S$ such that
%for every two groups $\mathbf{R_j}$ and $\mathbf{R_{j'}}$,
%and for any two tables $R_{i} \in \mathbf{R_{j}}$ and $R_{i'} \in \mathbf{R_{j'}}$,
%$(S_i \setminus S) \cap (S_{i'} \setminus S) = \phi$. Further, if $S_i \cap S_{i'} \neq \phi$,
%then for every pair of tuple variables $x_t \in T_i$ and $x_{t'} \in T_j$, there is an edge
%$(x_t, x_{t'})$ in $G_C$. In other words, all tuples in $T_i$ joins with all tuples in $T_{i'}$
%(whether or not $S_{i} \cap S_{i'} = \phi$).

\begin{algorithm*}[h!t!]
\caption{\em $\tabledecomp(\angb{R_{1}[T_{1}], \cdots, R_{k}[T_{k}]}$, 
$\angb{Q() :- R_{1}(\mathbf{x_{1}}), \cdots, R_{k}(\mathbf{x_{k}})}$, $G_C', G_T')$} %Procedure for Table Decomposition}
{\bf Input: Tables $R_{1}[T_{1}], \cdots, R_{k}[T_{k}]$ %($\forall j,  T_{i_j} \subseteq T_{i_j}$), 
query $Q() :- R_{1}(\mathbf{x_{1}}), \cdots, R_{k}(\mathbf{x_{k}})$ %($S_i' \subseteq S_i$),
induced subgraph $G_T'$ of $G_T$ on $\bigcup_{i=1}^k R_{i}$,
induced subgraph $G_C'$ of $G_C$ on $\bigcup_{i=1}^k T_{i}$}\\
{\bf Output: If successful, a partition of input tables, $G_T'$, $G_C'$ into $\ell$ groups,
 and an updated sub-query for every group} %the pivot set $S$} 

\begin{algorithmic}[1] \label{alg:query_table_decomp}
\FOR{all edges $e = (R_i, R_{j})$ in $G_T'$}
	\STATE{-- Annotate the edge $e$ with common variables $C_e$ in the vectors $\mathbf{x_i}$, $\mathbf{x_{j}}$.}
	\STATE{-- Mark the edge $e$ with a ``$+$'' if for every pair of tuple variables $x \in T_i$ and $y \in T_{j}$,
the edge $(x, y)$ exists in $G_C'$. Otherwise mark the edge with a ``$-$''.}\label{step:t_g_c}
\ENDFOR
\STATE{-- Run BFS or DFS to find the connected components in $G_T$ 
w.r.t ``$-$'' edges}
%, i.e., two tables $R_{i}$ and $R_{j}$ belong to the same component 
%if and only if they are connected by a path of ``$-$'' edges.}
\STATE{-- Let $\ell$ be the number of connected components.}
\IF[\textit{/* there is only one connected component */}]{$\ell = 1$}
	\RETURN{ with ``Failure: Table decomposition is not possible''.}
\ELSE			
	\STATE{-- Let $G'_{T, 1}, \cdots, G'_{T, \ell}$ be the induced subgraphs of $\ell$ connected components of $G_T'$
	and $G'_{C,1}, \cdots, G'_{C, \ell}$ be the corresponding induced subgraph for $G_C'$.}
	\STATE{-- Let $\mathbf{R_p} = \angb{R_{p, 1}, \cdots, R_{p, k_p}}$ be the subset of tables
	in the $p$-th component of $G_T'$, $p \in [1, \ell]$.}
	\STATE{\textit{/* Compute a new query for every component */}}
	
	\FOR{every component $p$} \label{step:pivot}
     \FOR{every table $R_i$ in this component $p$}
%				\STATE{-- Consider an edge $e = (R_i, R_{j})$, where $R_{j}$ belongs to a component $q \neq p$, so this edge must be marked
%			with a ``$+$''. }
				\STATE{-- Let $C_i = \bigcup_e C_e$ be the union of common variables $C_e$ over all edges $e$ from $R_i$
			to tables in different components of $G_{T'}$ (all such edges are marked with `+')}
				\STATE{-- For every common variable $z \in C_i$, generate a \emph{new (free)} variable $z^i$, and replace \emph{all}
			occurrences of $z$ in vector $\mathbf{x_i}$ by $z'$. Let $\widehat{\mathbf{x_i}}$ be the new vector.}
				\STATE{-- Change the query subgoal for $R_i$ from $R_i(\mathbf{x_i})$ to $R_i(\widehat{\mathbf{x_i}})$.}
		\ENDFOR
	\STATE{Let $\widehat{Q_p}() :- R_{p, 1}(\widehat{\mathbf{x_{p,1}}}), \cdots, R_{p, k_p}(\widehat{\mathbf{x_{p,k_p}}})$ 
	be the new query for component $p$.}

	\ENDFOR
	\RETURN{$\angb{\angb{R_{1, 1}, \cdots, R_{1, k_1}}, \widehat{Q_1}, G_{C, 1}, G_{T, 1}}$, $\cdots$, 
	$\angb{\angb{R_{\ell, 1}, \cdots, R_{\ell, k_{\ell}}}, \widehat{Q_\ell}, G_{C, \ell}, G_{T, \ell}}$ with success.}
\ENDIF	
\end{algorithmic}
\end{algorithm*}

%The recursive algorithm \compsepquery\ is given in Algorithm~\ref{alg:comp_sep_query}.
The following lemma shows that if row-decomposition is successful, then table
decomposition cannot be successful and vice versa. However, both of them may be unsuccessful in case
the final expression is not read-once. The proof of the lemma is in Appendix~\ref{app:proofsec4}).

\begin{lemma}\label{lem:either_r_or_t}
At any step of the recursion, if row decomposition is successful then table decomposition is unsuccessful and
vice versa.
\end{lemma}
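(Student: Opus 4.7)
The plan is to proceed by contradiction: assume that at some step both $\rowdecomp$ and $\tabledecomp$ succeed. Row success gives at least two connected components $C_1, C_2$ of $G_C$ (the current co-table graph), while table success gives at least two connected components $D_1, D_2$ of $G_T$ restricted to its ``$-$''-marked edges. The key structural fact I will exploit is the meaning of a ``$+$'' mark in Step~\ref{step:t_g_c} of $\tabledecomp$: if $(R_a,R_b)\in E(G_T)$ is marked ``$+$'', then by definition every pair $x\in T_a,y\in T_b$ is an edge in $G_C$, and hence $T_a\cup T_b$ lies entirely in one connected component of $G_C$. By the way the ``$-$''-components of $G_T$ are produced, every $G_T$-edge that crosses between two different $D_p$'s is necessarily marked ``$+$''.

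Next, maintain as an invariant of the recursion that the induced $G_T$ on the current set of tables is connected: this holds at the very first call when the query is connected, is preserved by $\rowdecomp$ (which leaves $G_T$ unchanged), and is preserved by $\tabledecomp$ (each returned group is in particular $G_T$-connected, since it is even ``$-$''-connected within $G_T$). Under this invariant, because $G_T$ is connected but its ``$-$''-subgraph splits into $D_1, D_2$, there exists a ``$+$''-edge $(R_a,R_b)$ with $R_a\in D_1$ and $R_b\in D_2$. By the structural fact above, $T_a\cup T_b$ lies in a single $G_C$-component, WLOG $C_1$.

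The contradiction is then obtained as follows. Pick any tuple $z\in C_2$; say $z\in R_i$. Because every tuple appears in the IDNF (unused tuples have been removed via $G_C$), there is an implicant of $f_{IDNF}$ containing $z$ and exactly one tuple $z_j\in R_j$ from each $R_j$, in particular a tuple $z_a\in R_a$. Fix a $G_T$-path $R_{i_0}=R_i,R_{i_1},\dots,R_{i_m}=R_a$, which exists by the connectivity invariant. For each consecutive pair $R_{i_t},R_{i_{t+1}}$ the tuples $z_{i_t},z_{i_{t+1}}$ of the fixed implicant together appear in a (prime) implicant and the two tables are $G_T$-adjacent, so by Lemma~\ref{lem:dnf-lca} and the definition of $G_C$ we get $(z_{i_t},z_{i_{t+1}})\in E(G_C)$. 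Concatenating these edges yields a $G_C$-path from $z$ to $z_a$, forcing $z_a\in C_2$. But $z_a\in T_a\subseteq C_1$, contradicting $C_1\neq C_2$. The argument is completely symmetric in the roles of the two decompositions, so it establishes both directions at once.

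The main obstacle I anticipate is the careful handling of the connectivity invariant: one has to justify both that each initial $\tabledecomp$ cut leaves $G_T$-connected pieces (immediate from the algorithm) and that $\rowdecomp$ does not silently produce a group with an empty table (any such group would violate the invariant that each tuple still appears in $f_{IDNF}$ of the sub-instance, so it must be excluded a priori). Once these bookkeeping items are in place, the contradiction above runs cleanly using only Lemma~\ref{lem:dnf-lca} together with the ``$+$''/``$-$'' marking rule of Step~\ref{step:t_g_c}.
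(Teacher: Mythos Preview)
Your proof is correct. The paper argues the two implications separately (they are in fact contrapositives, so a single ``not both succeed'' contradiction suffices, as you do). For ``row succeeds $\Rightarrow$ table fails'' the paper takes an arbitrary edge $(R_i,R_j)\in E(G_T')$ and picks tuples $x\in T_i^1$, $x'\in T_j^2$ from two different $G_C'$-components, concluding that every $G_T'$-edge is marked ``$-$'' so the ``$-$''-subgraph equals the connected $G_T'$; this implicitly uses that every $G_C'$-component meets every table. Your route instead starts from a crossing ``$+$''-edge (as in the paper's second direction) and then constructs an explicit $G_C'$-path from an arbitrary tuple $z\in C_2$ back into $C_1$ by fixing a prime implicant through $z$ and walking along a $G_T$-path to $R_a$; this makes rigorous precisely the step the paper only sketches in its second direction (``$G_C'$ will be connected through these tuples''). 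One minor remark: your closing appeal to ``symmetry'' is unnecessary---since the two directions are contrapositives, the single contradiction already establishes both.
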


%The \emph{recursion tree} of the procedure \compsepquery\ 
%will perform row and table decomposition procedure at the alternate levels.

Therefore, in the top-most level of the recursive procedure, we can verify which operation
can be performed -- if both of them fail, then the final expression
is not read-once which follows from the correctness of our algorithm. 
If the top-most recursive
call performs a successful row decomposition
initially the algorithm $\compsepquery$ is called as 
\compsepquery($Q$, %\angb{Q() :- R_1(\mathbf{x_1}), \cdots, R_k (\mathbf{x_k})}$,\\ 
$\angb{R_1[T_1], \cdots, R_k[T_k]},$ $G_C$, $G_T$, $\true$).
%where $T_i$ is the set of tuple variables in table $R_i$. 
The last boolean argument is \true\ if and only if row decomposition is performed at the 
current level of the recursion tree. If in the first step table decomposition 
is successful, then the value of the last boolean
variable in the initial call will be \false.

\par

\commentproof{
\begin{proof}
Consider any step of the recursive procedure, where the input tables
are $R_{i_1}[T_{i_1}'], \cdots, R_{i_q}[T_{i_q}']$ ($\forall j, T_{i_j}' \subseteq T_{i_j}$),
input query is $Q'() :- R_{i_1}(\mathbf{x_{i_1}}), \cdots, R_{i_q}(\mathbf{x_{i_q}})$, and the induced subgraphs
of $G_C$ and $G_T$ on current sets of tuples and tables are $G_C'$ and $G_T'$ respectively.
\par
Suppose row decomposition is successful, i.e., it is possible to decompose 
the tuples in $G_C'$ into $\ell \geq 2$ connected components. Consider any two tables $R_i, R_j$
such that the edge $(R_i, R_j)$ exists in $G_T'$, and consider their sub-tables $R_i[T_i^1]$
and $R_j[T_j^2]$ taken from two different connected components in $G_C'$.  Consider two arbitrary tuples
$x \in T_i^1$ and $x' \in T_j^2$. Since $x$ and $x'$ belong to two different
connected components in $G_C'$, then there is no edge $(x, x')$ in $G_C'$.
Hence by Step~\ref{step:t_g_c} of the table decomposition procedure, this edge $(R_i, R_j)$
will be marked by ``$-$''. Since $(R_i, R_j)$ was an arbitrary edge in $G_T'$, all edges in $G_T'$
will be marked by ``$-$'' and there will be a unique component in $G_T'$ using ``$-$'' edges.
Therefore, the table decomposition procedure will be unsuccessful.
\par
Now suppose table decomposition is successful, i.e., $G_T'$ can be decomposed into $\ell \geq 2$
components using ``$-$'' edges. 
Note that wlog. we can assume that the initial table-adjacency graph $G_T$ is connected. %\scream{verify}\scream{(vit.) should be OK}.
Otherwise, we can run the algorithm on different components of $G_T$ and multiple the final expressions 
from different components at the end. %(all components are mutually independent).
Since the procedure \tabledecomp\ returns induced subgraphs for every connected components,
%Hence it can be verified that 
the input subgraph $G_T'$ is always a connected graph
at every step of the recursion. %\scream{explain more?}\scream{(vit.) enough for me}
Now consider any two tables $R_i$ and $R_j$ from two different groups components such that $(R_i, R_j)$
edge exists in $G_T'$ (such a pair must exist since the graph $G_T'$ is connected).
Since this edge is between two components of a successful table decomposition procedure,
it must be marked with ``$+$''. This implies that for any tuple $x \in R_i$
and any tuple $x' \in R_j$, the edge $(x, x')$ exists in $G_C'$ (which follows from
Step~\ref{step:t_g_c} of this procedure). This in turn implies that row decomposition
must fail at this step since the tables $R_i, R_j$ cannot be decomposed into two disjoint
components and the graph $G_C'$ will be connected through these tuples. %\scream{explain further?}\scream{(vit.) enough for me}.
\end{proof}
}

%\subsection{Correctness}\label{sec:query_correctness}

\medskip
\noindent
\textbf{Correctness.~~}
The following two lemmas respectively show the soundness and completeness
of the algorithm \compsepquery\ (proofs are in Appendix~\ref{app:proofsec4}).
\begin{lemma}
\label{lemma:soundness}
\textbf{(Soundness)~~}If the algorithm returns with success, then the expression $f^*$ 
returned by the algorithm \compsepquery\
is equivalent to the expression $Q(I)$ generated by evaluation of query $Q$ on instance $I$. 
Further, the output expression $f^*$ is in read-once form.
\end{lemma}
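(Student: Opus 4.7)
The plan is to proceed by induction on the recursion depth, or equivalently on $k+n$ where $n$ is the number of tuples in the current instance. I will establish two invariants simultaneously at each recursive call: (i) the returned expression $f^*$ equals the expression obtained by evaluating the current (possibly modified) query on the current input, and (ii) $f^*$ is in read-once form.

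For the base case $k=1$, the only operation in the query plan is a projection, so the event expression for the query is precisely $\sum_{x\in T_1} x$. Every variable appears exactly once, so this is trivially in read-once form and equivalent to $Q(I)$.

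For the row decomposition case, suppose $\rowdecomp$ partitions the tuples into $\ell\ge 2$ connected components of $G_C$, yielding sub-instances $I_j=\angb{R_1[T_1^j],\ldots,R_k[T_k^j]}$. The key claim is that $Q(I)\equiv \sum_{j=1}^\ell Q(I_j)$. To prove this, I will use Lemma~\ref{lem:dnf-lca}: two tuple variables co-occur in a prime implicant of $f_{IDNF}$ only if they are adjacent in $G_{co}$ (and hence, for variables from adjacent tables, in $G_C$). Consequently, every prime implicant of $f_{IDNF}$ lies entirely inside a single connected component of $G_C$ (because an implicant uses one tuple per table, and any two of its tuples coming from adjacent tables must be $G_C$-adjacent, so by transitivity the whole implicant sits in one component). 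Therefore $f_{IDNF}$ splits as a disjoint union over components, i.e. $Q(I)\equiv Q(I_1)+\cdots+Q(I_\ell)$. By the induction hypothesis each $f_j$ equals $Q(I_j)$ and is read-once. Since distinct components use disjoint variable sets, $f_1+\cdots+f_\ell$ is in read-once form.

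For the table decomposition case, suppose $\tabledecomp$ partitions the tables into $\ell\ge 2$ groups of tables $\mathbf{R}_j$ connected only through $+$-marked $G_T$-edges. A $+$-mark on $(R_i,R_{i'})$ certifies that every tuple of $R_i$ joins with every tuple of $R_{i'}$, so the joins between different groups form a complete bipartite-like pattern. This is exactly what lets the expression factor: I will argue $Q(I)\equiv \widehat{Q_1}(I)\cdot\,\cdots\,\cdot\widehat{Q_\ell}(I)$, where $\widehat{Q_j}$ is the modified sub-query returned for group $j$. Here the quoted Lemma~\ref{lem:td_correct} from Appendix~\ref{app:proofsec4} supplies the critical fact that the renaming of shared FO variables to fresh ones leaves the event expression unchanged (the complete join structure means every original tuple tuple-pair is still realized across the Cartesian combination). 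By the induction hypothesis each $f_j$ equals $\widehat{Q_j}$ on the restricted instance and is read-once; because the table groups are disjoint and the renaming introduces no new shared FO variables across groups, the variable sets of the $f_j$ are disjoint, so $f_1\cdot\,\cdots\,\cdot f_\ell$ is read-once.

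The main obstacle is the table-decomposition argument: proving that $Q(I)$ genuinely factors into the product of the modified sub-queries' expressions. One has to show both that (a) the $+$-mark guarantees a full bipartite join so that the sum-of-products expression of $Q(I)$ distributes over the groups, and (b) renaming the shared FO variables in $\widehat{Q_j}$ does not create or destroy any tuples, which is precisely the content of Lemma~\ref{lem:td_correct}. I would invoke that lemma rather than reproving it, and supply a short derivation that, on the complete-bipartite side, distributivity converts the joint expression over $\mathbf{R}_1\cup\cdots\cup\mathbf{R}_\ell$ into the product of the per-group expressions. Combined with Lemma~\ref{lem:either_r_or_t} (which ensures the two branches never both fire and keeps the induction well-defined), this completes the inductive step and hence the proof.
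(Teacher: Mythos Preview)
Your proposal is correct and follows essentially the same route as the paper: induction with a trivial $k=1$ base case, the row-decomposition step justified by showing every prime implicant lies in a single connected component of $G_C$, and the table-decomposition step handled by invoking Lemma~\ref{lem:td_correct}. Your explicit use of $G_T$-connectivity to argue that an implicant's variables form a connected subgraph of $G_C$ is in fact more detailed than the paper's proof, which simply asserts that tuples in different components cannot co-occur in a monomial; the paper inducts on $n$ alone rather than $k+n$, but this is a cosmetic difference since both decompositions strictly reduce the tuple count.
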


%The proof of this lemma can be found in Appendix~\ref{app:proofsec4}.

\commentproof{
\begin{proof}
We prove the lemma by induction on $n$, where $n = \bigcup_{i = 1}^k |T_i|$.
The base case follows when $n = 1$. In this case there is only one tuple $x$, hence 
$k$ must be 1 as well, and therefore, the algorithm returns $x$ in Step~\ref{step:base}. Here the algorithm trivially returns with success
and outputs a read-once form. The output is also correct, since computation of co-table graph
ensures that there is no unused tuple in the tables, and the unique tuple $x$ 
is the answer to query $Q$ on database $I$.
\par
Suppose the induction hypothesis holds for all databases with number of tuples $\leq n - 1$
and consider a database with $n$ tuples. If $k = 1$, then irrespective of the query, all tuples
in table $R_1$ satisfies the query $Q() :- R_1(\mathbf{x_1})$ (again, there are no
unused tuples), and therefore the algorithm correctly returns
$\sum_{x \in T_1} x$ as the answer which is also in read-once form.
So let us consider the case when $k \geq 2$. 
\par
(1) Suppose the current recursive call successfully performs row decomposition, and $\ell \geq 2$ components
$\angb{R_1[T_1^1], \cdots, R_k[T_k^1]}$,
	$\cdots,$ $\angb{R_1[T_1^\ell], \cdots, R_k[T_k^\ell]}$ are returned.
	By the row decomposition algorithm , it follows that for $x \in T_i^j$
	and $x' \in T_{i'}^{j'}$, $x$ and $x'$ do not appear together in any monomial
	in the DNF equivalent for $f$. So
	the tuples which row decomposition puts in different 
	components do not join with each other and then the final answer of 
	the query is the union of the answers of the queries on the different 
	components. Then the final expression is the sum of the final expressions 
	corresponding to the different components. Since all components have $<n$ tuples 
	and the algorithm did not return with error in any of the recursive calls, by the 
	inductive hypothesis all the expressions returned by the recursive calls are 
	the correct expressions and are in read-once form. Moreover these 
	expressions clearly do not share variables --  they correspond to tuples from 
	different tables since the query does not have a self-join. We conclude that the final 
	expression computed by the algorithm is the correct one and is in read-once form.
	 \par
	(2) Otherwise, suppose the current step successfully performs table decomposition.
	Let $\ell \geq 2$ groups 
	$\mathbf{R_1}, \cdots, \mathbf{R_\ell}$ 
	are returned. Correctness of table decomposition procedure, i.e., correctness of the expression
	$f^* = f_1 \cdot \cdots \cdots f_\ell$, when all the recursive calls return successfully
	follows from Lemma~\ref{lem:td_correct} using the induction hypothesis
 (the algorithm multiplies the expressions returned by different groups which
 themselves are correct by the inductive hypothesis).
	Further, since all components have $<n$ tuples, and the algorithm did not return with error
	in any of the recursive calls, all expressions returned by the recursive calls
	are in read-once form. Since they do not share any common variable, the final output
	expression is also in read-once form.
%	Consider the $j$-th group $\mathbf{R_j}$ returned by 
%	the table decomposition algorithm (Algorithm~\ref{alg:query_table_decomp}).
%	with tables $R_{j,1}, \cdots, R_{j, k_j}$ along with modified subgoals 
%	$R_{j,1}(\widehat{\mathbf{x_{j, 1}}}), \cdots, R_{j, k_j}(\widehat{\mathbf{x_{j, k_j}}})$.
%	First we argue that evaluation of modified sub-queries in these sub-groups and 
%	multiplying them together give the same answer as the evaluation of the original 
%	query on the database. 
%	Consider two tables $R_{j, p}$ and $R_{j',p'}$ from two different groups $j, j'$.
%	If there is no edge $(R_{j,p}, R_{j',p'})$ in $G_T$, clearly 
%%	By the table decomposition algorithm, \scream{(vit.) delete from here to the end of the sentece?}all attributes in the pivot set $S$
%%	have the same value across all tables and they can be deleted from 
%%	the join attributes in the query. 
%	Therefore, for the $j$-th group,
%	the query is modified to  $Q_j = R_{j,1}(S_{j,1} \setminus S), \cdots, R_{j,{k_j}}(S_{j,k_j} \setminus S)$.
%	Also since only the attributes in $S$ are the common join attributes in 
%	the original query across different groups, 
%	all tuples in one table in one group join with all tuples in another table
%	in another group (therefore, note that, row decomposition cannot be performed).
%	%\scream{(vit.) The following should be summarized}
%	\par
\end{proof} 
}

%\subsubsection{Completeness}\label{sec:query_completeness}

%Next we prove the completeness of the algorithm. Note that 
 %The proof is in Appendix~\ref{app:proofsec4}.%\scream{cite: \cite{uniqueness-lemma}}.

\begin{lemma}
\label{lemma:completeness}
 \textbf{(Completeness)~~} If the expression $Q(I)$
 is read-once, then the algorithm \compsepquery\ returns the unique read-once form $f^*$ of the expression.
\end{lemma}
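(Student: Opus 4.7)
My plan is to prove completeness by strong induction on the total number $n$ of variables in the input to a single recursive call of \compsepquery, showing that whenever the sub-expression to be computed is read-once, the algorithm does not return failure. Combined with Lemma~\ref{lemma:soundness} and the fact that read-once forms are unique up to associativity and commutativity, this will yield that the algorithm's output equals the unique read-once form $f^*$. The base case ($k = 1$) is immediate. Without loss of generality I may also assume at the top level that $G_T$ is connected, since a disconnected $G_T$ can be pre-processed into independent sub-problems. For the inductive step, write $f^*$ in canonical alternating form (flatten repeated operators) so that its root is either $+$ or $\cdot$, with children of the opposite type; at the top-most call the algorithm tries both operations, and subsequent calls receive a \flag\ enforcing the correct alternation.

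Case A: $f^*$ has root $+$, say $f^* = g_1 + \cdots + g_\ell$ with $\ell \geq 2$ and pairwise disjoint variable sets $V_1, \ldots, V_\ell$. I would establish that $V_1, \ldots, V_\ell$ are exactly the connected components of $G_C$. One direction is easy: since variables in different $V_i, V_j$ share no prime implicant, no $G_C$-edge crosses them. For the converse (each $V_i$ is connected in $G_C$), use that every monomial of $g_i$ contains one tuple per table (no self-joins) together with a spanning tree of the connected $G_T$ to produce $G_C$-edges linking all tuples inside any single monomial; then if $V_i$ split into parts $C$ and $D$, every monomial of $g_i$ would lie entirely in $C$ or entirely in $D$, yielding a decomposition $g_i = g_{i,C} + g_{i,D}$ with disjoint variables, contradicting the alternation assumption that $g_i$ is a leaf or has root $\cdot$. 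Hence row decomposition succeeds and returns exactly $V_1, \ldots, V_\ell$; the recursive call on each $V_j$ with $\flag = \false$ evaluates $Q$ on a sub-instance whose monomials are precisely those of $g_j$, so induction finishes the job.

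Case B: $f^*$ has root $\cdot$, say $f^* = g_1 \cdot g_2 \cdots g_\ell$ with $\ell \geq 2$. The no-self-join assumption together with the fact that every prime implicant has one tuple per table forces the table sets $S_1, \ldots, S_\ell$ used by the $g_p$'s to partition the set of tables. I would show every $G_T$-edge $(R_a, R_b)$ with $R_a \in S_p$, $R_b \in S_q$, $p \neq q$, is marked ``$+$'' by the algorithm: for any $x \in R_a$ and $y \in R_b$, picking a monomial of $g_p$ containing $x$, one of $g_q$ containing $y$, and one monomial from each remaining $g_i$, produces a monomial of $f_{IDNF}$ containing both $x$ and $y$, and $G_T$-adjacency of $R_a, R_b$ then delivers the $G_C$-edge $(x,y)$. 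Hence no ``$-$''-edge crosses distinct $S_p$'s, the ``$-$''-subgraph of $G_T'$ has at least $\ell \geq 2$ components, and table decomposition succeeds. Lemma~\ref{lem:td_correct} guarantees that each modified sub-query $\widehat{Q_p}$ evaluates to $g_p$, whose root is $+$ (or which is a leaf), and the recursive call with $\flag = \true$ closes the induction.

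The main obstacle is the connectivity sub-claim in Case A that each $V_i$ is a single component of $G_C$. This is where the argument genuinely couples the no-self-join assumption, the connectedness of $G_T$, and the alternating structure of the read-once tree: each of these ingredients is needed to rule out a ``phantom'' finer splitting of $V_i$ by $G_C$, and the argument proceeds by contradiction with the alternation of operators in $f^*$. A secondary care point is verifying that the \flag\ parameter is tracked correctly through the recursion, which follows from the canonical alternation of $f^*$ once the top-level flag has been matched to $f^*$'s root operator.
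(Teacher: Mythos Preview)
Your approach is essentially the same as the paper's (induction along the structure of the canonical read-once tree, matching row decomposition to $+$ nodes and table decomposition to $\cdot$ nodes), and your Case~A argument for why each $V_i$ is connected in $G_C$ is in fact more explicit than the paper's own justification.

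There is, however, a genuine gap in Case~B. You show that every ``$-$''-edge lies inside some $S_p$, so the ``$-$''-subgraph has \emph{at least} $\ell$ components; you then immediately assert that ``each modified sub-query $\widehat{Q_p}$ evaluates to $g_p$, whose root is $+$''. This presumes the components are \emph{exactly} $S_1,\ldots,S_\ell$, which you have not established. If some $S_j$ were split into two ``$-$''-components $C_a,C_b$, then Lemma~\ref{lem:td_correct} gives $g_j = \widehat{Q_a}(I_a)\cdot \widehat{Q_b}(I_b)$ with disjoint variable sets, so the canonical read-once root of $g_j$ would be $\cdot$, contradicting the alternation hypothesis that $g_j$ has root $+$ (or is a leaf). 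That one line is all that is missing, but without it the flag-tracking claim you flag at the end does not go through: a finer split would hand a sub-expression with root $\cdot$ to a recursive call with $\flag=\true$, which would then fail. The paper's proof makes exactly this ``no further decomposition'' argument (tersely) to conclude $\ell'=\ell$; you should do the same.
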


For completeness, it suffices to show that
if %the answer $f$ to the query on the database 
$Q(I)$ is read-once,
then the algorithm does not exit with error. Indeed, if the algorithm returns with success,
as showed in the soundness lemma, the algorithm returns an expression $f^*$ in read-once form
which is the unique read-once form of $Q(I)$ \cite{GolumbicMR06},\cite{CorneilLB81}.

%The proof of this lemma is in Appendix~\ref{app:proofsec4}.

\commentproof{
\begin{proof}
Suppose the expression is read-once and consider the tree representation $T^*$ of the unique read-once
form $f^*$ of the expression ($T^*$ is in \emph{canonical form} and has alternate levels of $+$ and $\cdot$ nodes,
which implies that every node in $T^*$ must have at least two children.). 
We prove the lemma by induction on the height $h$ of tree $T^*$.
\par
First consider the base case. If $h = 1$, then the tree must have a single node for a single tuple 
variable $x$. Then $k$ must be 1 and the algorithm returns the correct answer. So consider $h \geq 2$. 
\par
(1) Consider the case when root of the tree is a $+$ node. If $h = 2$, since we do not allow
union operation, $k$ must be 1 and all the tuples must belong to the same table $R_1$. 
This is taken care of by Step~\ref{step:base} of \compsepquery. If $h > 2$, then $k$ must be $\geq 2$
and the answer to the join operation must be non-empty. Every child of the root node
corresponds to a set of monomials which will be generated by the equivalent DNF 
expression $f_{DNF}$ for the subtree rooted at that child. Note that no two variables in two different
children of the root node can belong to any monomial together since the tree $T^*$
is in read-once form. 
In other words, they do not share an edge in $G_C$. Hence the component formed 
by the set of variables at a child will not have any edge to the set of variables
at another child of the root node. This shows that all variables at different children of the root
node will belong to different components by the row decomposition procedure.
\par
Now we show that variables at different children of the root node are put  to different
components by the row decomposition procedure, which shows that
the row decomposition algorithm will divide the tuples \emph{exactly} the same was
as the root of $T^*$ divides tuples among its children.
Since $T^*$ is in canonical read-once form and has alternate levels of $+$ and $\cdot$ nodes, 
then row decomposition cannot be done within the same subtree of a $+$ node. So all variables
in a subtree must form a connected component.
Since the root has $\geq 2$
children, in this case we will have a successful row decomposition operation. 
By inductive hypothesis, since the subtrees rooted at the children of the root are 
all in read-once form, the recursive calls of the algorithm on the 
corresponding subtrees are successful. 
Hence the overall algorithm at the top-most level will be successful. 
\par
(2) Now consider the case when root of the tree is a $\cdot$ node. Note that the $\cdot$
operator can only appear as a result of join operation. If the root has $\ell' \geq 2$
children $c_1, \cdots, c_{\ell'}$, then every  tuple $x$ in the subtree at $c_{j}$
joins with every tuple $y$ in the subtree at $c_{j'}$ for every pair $1 \leq j\neq j' \leq \ell'$.
Moreover since the query does not have a self join, $x, y$ must belong to two different tables,
which implies that there is an edge $(x, y)$ in $G_C$ between every pair of tuples
$x, y$ from subtrees at $c_j, c_{j'}$ respectively.
Again, since we do not allow self-join, and $T^*$ is in read-once form, the tuples in the subtrees
at $c_j, c_{j'}$ must  belong to different tables if $j \neq j'$. In other words,
the tables $R_1, \cdots, R_k$ are partitioned into $\ell'$ disjoint groups 
$\mathbf{R_1'}, \cdots, \mathbf{R_{\ell'}'}$. 
\par
Next we argue that $\ell = \ell'$ and 
the partition returned by
the table decomposition procedure $\mathbf{R_1}, \cdots, \mathbf{R_\ell}$ is identical
to $\mathbf{R_1'}, \cdots, \mathbf{R_{\ell'}'}$ upto a permutation of indices. 
Consider any pair $\mathbf{R_j'}$ and $\mathbf{R_{j'}'}$. Since the tuple variables in these two groups are 
connected by a $\cdot$ operator, all tuples in all tables in $\mathbf{R_j'}$ join with
all tuples in all tables in $\mathbf{R_{j'}'}$. In other words, for any pair of tuples 
$x, x'$ from $R_{i_1} \in \mathbf{R_{j}'}$ and $R_{i_2} \in \mathbf{R_{j'}'}$, 
there is an edge $(x, x')$ in co-occurrence graph $G_{co}$.%\scream{(vit.) $G_{co}$? The rest of the proof doesn't change.}.
Hence if there is a common subset of join attributes between $R_{i_1}$ and $R_{i_2}$,
i.e. the edge $(R_{i_1}, R_{i_2})$ exists in $G_T$, it will be marked by a ``$+$''
(all possible edges between tuples will exist in the co-table graph $G_T$).
So the table decomposition procedure will put $\mathbf{R_j'}$ and $\mathbf{R_{j'}'}$ in two different components.
This shows that $\ell \geq \ell'$.
However, since $T^*$ is in read-once form and has alternate levels of $+$ and $\cdot$ nodes,
no $\mathbf{R_j'}$ can be decomposed further using join operation (i.e. using ``$+$'' marked edges
by the table decomposition procedure); therefore, $\ell' = \ell$.
Hence our table
decomposition operations exactly outputs the groups $\mathbf{R_1'}, \cdots, \mathbf{R_{\ell'}'}$.
By the inductive hypothesis the algorithm returns with success in all recursive calls,
and since $\ell' = \ell \geq 2$, the table decomposition returns with success. So
the algorithm returns with success.
\end{proof}
}

%\subsection{Time Complexity}\label{sec:query_time}
\paragraph{Time Complexity.} %\label{sec:query_time}
%In this section we analyze the time complexity of the algorithm \compsepquery.
%\begin{theorem}\label{thm:query_time}
%Given $Q, \mathbf{R}, G_C$ and $G_T$ as inputs, algorithm \compsepquery\ can be implemented
%to run in time $O(m_T\alpha \log \alpha + (m_C+n) \min(k, \sqrt{n}))$, where $\alpha = $ maximum number of attributes in any subgoals.
%\end{theorem}
%To prove Theorem~\ref{thm:query_time}, 
Consider the recursion tree of the algorithm \compsepquery.
Lemma~\ref{lem:either_r_or_t} shows that at any level of the recursion tree,
either all recursive calls use the row decomposition procedure, 
or all recursive calls use the column decomposition procedure.
The time complexity of \compsepquery\ given in Theorem~\ref{thm:query_time}
is analyzed in the following steps.  
If $n' = $ the total number of input tuples at the current recursive
call and $m_C'$ = the number of edges in the induced subgraph of $G_C'$
on these $n'$ vertices, we show that row decomposition takes $O(m_C' + n')$ time
and, \emph{not considering the time needed to compute the modified queries $\widehat{Q_{j}}$}
(Step~\ref{step:pivot} in Algorithm~\ref{alg:query_table_decomp}), the table decompositions procedure takes 
$O(m_C' + n')$ time.
Then we consider the time needed to compute the
modified queries and show that these steps over \emph{all} recursive calls of the algorithm  
take $O(m_T \alpha \log \alpha)$ time in total, where $\alpha$
is the maximum size of a subgoal in the query $Q$.
Finally, we give a bound of $O(\min(k, \sqrt{n}))$ on 
the height of the recursive tree for the algorithm $\compsepquery$. 
However, note that at every step, for row or table decomposition, every tuple in $G_C'$
goes to exactly one of the recursive calls, and every edge in $G_{C}'$ goes to at most
one of the recursive calls. So for both row and table decomposition at every level
of the recursion tree the total time is $O(m_C + n)$. Combining all these observations,
the total time complexity of the algorithm is $O(m_T \alpha \log \alpha + (m_C+n) \min(k, \sqrt{n}))$
as stated in Theorem~\ref{thm:query_time}.
The details can be found in Appendix ~\ref{sec:app_time_query}). 
%This shows the time
%complexity of \compsepquery\ given in Theorem~\ref{thm:query_time}.
%\eat{
%ignoring the time complexity of computing the modified queries $\widehat{Q_{j}}$
%in the table decomposition procedure,
%row and table decompositions can be performed 
%in $O(m_C' + n')$ time, where $n' = $ the total number of input tuples at the current recursive
%call and $m_C'$ = the number of edges in the induced subgraph of $G_C'$
%on these $n'$ vertices, %(i.e. linear in the size of co-table graph $G_C$), 
%Then we show that 
%the height of the recursive tree for the algorithm $\compsepquery$ is $O(\min(k, \sqrt{n}))$.
%Finally, the time required to compute the modified queries $\widehat{Q_j}$-s 
%over all recursive calls of the algorithm will be $O(m_T \alpha \log \alpha)$, where $\alpha$
%is the maximum size of a subgoal in the query $Q$.
%So the total time complexity of the algorithm will be $O(m_T \alpha \log \alpha + (m_C+n) \min(k, \sqrt{n}))$.
%}
\commentproof{
%%%%%%%%%%%%%%%%%%%%%%%%% ALL PROOFS COMMENTED %%%%%%%%%%%%%%%%%%%%%%%%%%%%%%%%%%%%%%%%%%%%
\begin{lemma}\label{lem:row_decomp_poly}
The row decomposition procedure as given in Algorithm~\ref{alg:query_row_decomp} runs in 
time $O(m_C' + n')$, where $n' = \sum_{j = 1}^q |T_{i_j}|$ = the total number of input 
tuples to the procedure, and $m_C'$ = the number of edges in the induced subgraph of $G_C$
on these $n'$ tuples.
\end{lemma}
\begin{proof}
The row decomposition procedure only runs a connectivity algorithm like BFS/DFS to
compute the connected components. Then it collects and returns the tuples and computes the induced
subgraphs in these components. All these can be done in linear time in the size of the input graph
which is $O(m_C' + n')$.
\end{proof}
Next we show that the table decomposition can be executed in time $O(m_C' + n')$ as well.

\begin{lemma}\label{lem:table_decomp_poly}
The table decomposition procedure as given in Algorithm~\ref{alg:query_table_decomp} runs in 
time $O(m_C' + n')$, ignoring the time required to compute the modified queries $\widehat{Q_j}$
where $n' = \sum_{j = 1}^q |T_{i_j}'|$ = the total number of input 
tuples to the procedure, and $m_C'$ = the number of edges in the induced subgraph of $G_C$
on these $n'$ tuples.
\end{lemma}

The proof is in Appendix~\ref{app:proofsec4}.

%\commentproof{
\begin{proof}
Step~\ref{step:t_g_c} in the table decomposition procedure marks edges in $G_T'$
using $G_C'$. Let us assume that $G_C'$ has been represented in a standard adjacency list.
Consider a table $R_{j}[T_{j}']$, where $T_{j}' \subseteq T_{j}$ and let $d$ be the degree of
$R_{j}$ in $G_T'$.
Now a linear scan over the edges in $G_C'$ can partition the edges $e$ from a tuple $x \in T_{j}'$
in table $R_{j}$ into $E_1, \cdots, E_{d}$, where $E_q$ ($q \in [1, d]$) contains all edges from $x$ to tuples $x'$,
belonging to the $q$-th neighbor of $R_j$. A second linear scan on
these grouped adjacency lists computed in the previous step is sufficient to mark every edge in $G_T'$
with a ``$+$'' or a ``$-$'':  for every neighbor $q$ of $R_j$, say $R_{j'}$, for every tuple $x$ in $T_j'$, 
scan the $q$-th group in adjacency list to check if $x$ has edges with all tuples in $R_{j'}[T_{j'}']$. 
If yes, then all tuples in $R_{j'}$ also have edges to all tuples in $R_j$, and the edge $(R_j, R_{j'})$
is marked with a ``$+$''. Otherwise, the edge is marked with a ``$-$''. Hence the above two steps
take $O(m_C' + n' + m_T' + k')$ time, where $k'$ and
$m_t'$ are the number of vertices (number of input tables) and edges in the subgraph $G_T'$.
\par
Finally returning the induced subgraphs of $G_T'$ for the connected components and decomposition of the tuples
takes $O(m_C' + n' + m_T' + k')$ time. Since $n' \geq k'$ and $m_C' \geq m_T'$, ignoring the modified query computation
step, the total time complexity is bounded by $O(m_C' + n')$.
\end{proof}

In Lemma~\ref{lem:time_comp_S}, Appendix~\ref{app:proofsec4}, we bound the total 
time required to compute the modified queries over all calls to the recursive algorithm.

%\commentproof{
The next lemma bounds the total time required to compute the modified queries over all
calls to the recursive algorithm.
\begin{lemma}\label{lem:time_comp_S}
The modified queries $\widehat{Q_j}$ over all steps can be computed in time
$O(m_T\alpha \log \alpha)$, where $\alpha$ is the maximum size of a subgoal.
\end{lemma}
\begin{proof}
We will use a simple charging argument to prove this lemma.
For an edge $e = (R_i, R_j)$ in $G_T$, the common variable set $C_e = \mathbf{x_i} \cap \mathbf{x_j}$\footnote{We
abuse the notation and consider the \emph{sets} corresponding
to \emph{vectors} $\mathbf{x_i}, \mathbf{x_j}$ to compute $C_e$}
can be computed by (i) first sorting the variables in $\mathbf{x_i}, \mathbf{x_j}$
in some fixed order, and then (ii) doing a linear scan on these sorted lists to compute the common variables.
Here we  to compute the set $C_e$. Hence this step takes $O(\alpha \log \alpha)$ time.
Alternatively, we can use a hash table
to store the variables in $\mathbf{x_i}$, and then by a single scan of variables in $\mathbf{x_j}$
and using this hash table we can compute the common attribute set $C_e$ in $O(\alpha)$ expected time.
When $C_e$ has been computed in a fixed sorted order for every edge $e$ incident on $R_i$ to a different component,
the lists $C_e$-s can be repeatedly merged to compute the variables set $C_i = \bigcup_e C_e$
in $O(d_i \alpha)$ time (note that even after merging any number of $C_e$ sets, the individual
lists length are bounded by the subgoal size of $R_i$ which is bounded by $\alpha$).  
However, instead of considering the total time $O(d_i \alpha)$
for the node $R_i$ in $G_T$, we will \emph{charge} every such edge
 $e = (R_i, R_j)$ in $G_T$ for this merging procedure an amount of $O(\alpha)$.
 So every edge $e$ from $R_i$ to an $R_j$ in different component gets a charge of $O(\alpha \log \alpha)$.
\par
Suppose we charge the outgoing edges $(R_i, R_j)$ from $R_i$ to different components
by a fixed cost of $P$, $P = O(\alpha)$ in the above process.
From the table decomposition procedure it follows that,
the common join attributes are computed, and the query is updated, only when the edge $(R_i, R_j)$
belongs to the \emph{cut} between two connected components formed by the ``$-$'' edges.
These edges are then get \emph{deleted} by the table decomposition procedure: 
all the following recursive calls consider the edges \emph{inside} these 
connected components and the edges \emph{between} two connected components are never considered later.
So each edge in the graph $G_T$ can be \emph{charged} at most once for
computation of common join attributes and this gives $O(m_C)\alpha \log \alpha$ as 
the total time required for this process. 
\par
Finally,
the variables in $\mathbf{x_i}$ can also be replaced by new variables using the sorted list for $C_i$
in $O(\alpha)$ time, so the total time needed is $O(m_C\alpha \log \alpha + n \alpha) = O(m_C\alpha \log \alpha)$
(since we assumed $G_T$ for the query $Q$ is connected without loss of generality). 
\end{proof}

%\scream{What is the following? Sudeepa, we should put this whole part together with the proof of lemma 14 in the appendix}
Finally, we show that the depth of the recursion tree is $O(\min(k, \sqrt{n}))$
and in every level of the tree, the total time required is at most $O(m_C + n)$.

%\begin{lemma}\label{lem:algo_query_poly}
%Ignoring the time complexity to compute the pivot step by the table decomposition procedure,
%the algorithm \compsepquery\ runs in time $O((m_C + n)\min(k, \sqrt{n}))$. \scream{(vit.)new claim?}
%\end{lemma}
%\begin{proof}
Let us consider the recursion tree of the algorithm \compsepquery\ and wlog.
assume that the top-most level performs a row decomposition. Since the size of the 
table-adjacency subgraph $G_T'$ is always dominated by the co-table subgraph $G_C'$
at any recursive call of the algorithm, we express the time complexity of the algorithm
with $k$ tables, and, $n$ tuples and $m$ edges in the subgraph $G_C'$ as $T_1(n, m, k)$, 
where the top-most operation is a row decomposition.
Further, every component must have at least $k$ tuples. 
Similarly, $T_2(n, m, k)$ denotes the time complexity when the top-most operation is 
a table decomposition operation.
Note that at every step, for row decomposition, every tuple and every edge in $G_C'$
goes to exactly one of the recursive calls
of the algorithm; however, the number of tables $k$ remains unchanged. 
On the other hand, for table decomposition operation, every tuple goes to exactly one
recursive call, every edge goes to at most one such calls (edges between connected components
are discarded), and every table goes to exactly one call.
Recall that the row and table decomposition alternates at every step, and the time
required for both steps is $O(m+n)$ (ignoring modified query computation)
so we have the following recursive formula for $T_1(n, m, k)$ and $T_2(n, m, k)$.
\begin{eqnarray*}
T_1(n, m, k) & = & O(m + n) + \sum_{j = 1}^\ell T_2(n_{j}, m_{j}, k)\\
& & \quad \quad \quad \quad \text{where }\sum_{j = 1}^\ell n_j = n, \sum_{j = 1}^{\ell} m_j = m, n_j \geq k \forall j \\
T_2(n, m, k) & = & O(m + n) + \sum_{j = 1}^\ell T_1(n_{j}, m_{j}, k_{j})\\
& & \quad \quad \quad \quad \text{where }\sum_{j = 1}^\ell n_j = n, \sum_{j = 1}^\ell m_j \leq m, \sum_{j = 1}^\ell k_j = k 
\end{eqnarray*}
where $n_j, m_j$ and $k_j$ are the total number of tuples and edges in $G_C'$,
and the number of tables for the $j$-th recursive call (for row decomposition, $k_j = k$).
For the base case, we have $T_2(n_{j}, m_{j}, 1) = O(n_j)$ -- for $k = 1$, to compute the the read once form, 
$O(n_j)$ time is needed; also in this case $m_j = 0$
(a row decomposition cannot be a leaf in the recursion tree for a successful completion
of the algorithm). Moreover, it is important to note that for a successful row or table decomposition, $\ell \geq 2$.
\par
If we draw the recursion tree for $T_1(n, m_C, k)$ (assuming the top-most operation is a row-decomposition
operation), at every level of the tree we pay cost at most $O(m_C + n)$. This is because the tuples and edges go to 
at most one of the recursive calls and $k$ does not play a role at any node of the recursion tree 
(and is absorbed by the term $O(m_C + n)$). 
\par
Now we give a bound on the height of the recursion tree.

%\scream{(vit.) correct the following}
\begin{lemma}
The height of the recursion tree is upper bounded by $O(\min(k, \sqrt{n}))$.
\end{lemma}

\begin{proof} %\scream{correct? to be completed...}
Every internal node has at least two children and there are at most $k$ leaves (we return
from a path in the recursion tree when $k$ becomes 1). Therefore, there are $O(k)$ 
nodes in the tree and the height of the tree is bounded by $O(k)$ (note that
both the number of nodes and the height may be $\Theta(k)$ when the tree is not balanced). 
\par
Next we show that the height of the recursion tree is also bounded by $4\sqrt{n}$.
The recursion tree has alternate layers of table and row
decomposition. We focus on only the table decomposition layers, the
height of the tree will be at most twice the number of these layers.
Now consider any arbitrary path $P$ in the recursion tree from the root
to a leaf
where the number of table decompositions on $P$ is $h$. Suppose that in the calls $T_2(n, m, k)$, the values of $n$ and $k$
along this path (for the table decomposition layers) are $(n_0, k_0), (n_1, k_1),\ldots, (n_h, k_h)$,
where $k_0 = k$ and $n_0 \leq n$ (if the top-most level has a table-decomposition operation,
then $n_0 = n$). We show that $h \leq 2\sqrt{n}$.
\par
Let's assume the contradiction that $h > 2\sqrt{n}$ and
let's look at the first $p = 2\sqrt{n}$ levels along the path $P$. 
If at any $j$-th layer, $j \in [1, p]$, 
$k_j \leq 2\sqrt{n} - j$, then the number of table decomposition steps along $P$
is at most $2\sqrt{n}$: every node in the recursion tree has at least two children, so the value 
 of $k$ decreases by at least 1.
 The number of table-decomposition layers after the $j$-th node is at most $k_j$,
 and the number of table-decomposition layers before the $j$-th node is exactly $j$.
 Therefore, the total number of table-decomposition layers is $\leq 2\sqrt{n}$). 
 \par
 Otherwise, for all $j \in [1, p]$, $k_j > 2\sqrt{n} - j$. 
 Note that $n_j \leq n_{j-1} - k_{j-1}$: there is a row decomposition step between two table decompositions,
 and every component in the $j$-th row decomposition step will have at least $k_j$ nodes. 
 If this is the case, we show that $n_p < 0$.
 However,
 \begin{eqnarray*}
 n_p & \leq & n_{p-1} - k_{p-1}\\
 		 & \leq & n_{p-2} - k_{p-2} - k_{p-1}\\
 		 & \vdots & \\
 		 & \leq & n_0 - \sum_{j = 0}^{p-1} k_j\\
 		 & \leq & n - \sum_{j = 0}^{p-1} k_j\\
 		 & \leq & n - \sum_{j = 0}^{2\sqrt{n}-1} (2\sqrt{n} - j)\\
 		 & = & n - \sum_{j = 1}^{2\sqrt{n}} j\\
 		 & = & n - \frac{2\sqrt{n}(2\sqrt{n}+1)}{2}\\
 		 & = & n - 2n - \sqrt{n} \\ 
 		 & < & 0
 \end{eqnarray*}
 which is a contradiction since $n_p$ is the number of nodes at a recursive call
 and cannot be negative. This shows that along any path from root to leaves, the number of table decomposition
 layers is bounded by $2\sqrt{n}$ which in turn shows that the height of the tree is bounded by $4\sqrt{n}$.
%  or, by telescoping, $n - \sum_{j = 1}^p k_{j}\leq 1$. It is sufficient to show
%=> $n - \sum_{j = 1}^p (\sqrt{n} - j)\leq 1$
%
%=> $n - \sum_{j = 1}^p j <= 1$
%Hence it is sufficient to show that $\sum_{j = 1}^{2\sqrt{n}} j \geq n$,
% $\sum_{j = 1}^{2\sqrt{n}} j = \frac{2\sqrt{n}(2\sqrt{n}+1)}{2} = 2n + \sqrt{n}$.
%
\end{proof}
%
%Now we need to argue that $h = O(\sqrt{n})$, which should be easy
%bwcause $\sum_j j = j(j+1)/2 = O(j^2)$
%
%\scream{(s.) I am a little unsure about looking at the first p nodes in the path,
%we need this otherwise $\sqrt{n}-j$ becomes negative}

Since total time needed at every step of the recursion tree is $O(m_C + n)$, 
we have the following corollary,  \scream{Why a thm that says 'ignoring something' then proves a property that does not ignore that something???}
\begin{corollary}\label{cor:algo_query_poly}
Ignoring the time complexity to compute the modified queries by the table decomposition procedure,
the algorithm \compsepquery\ runs in time $O((m_C + n)\min(k, \sqrt{n}))$. 
\end{corollary}

The above corollary together with Lemma~\ref{lem:time_comp_S} completes the proof of Theorem~\ref{thm:query_time}.

}

\medskip
\noindent

%\subsection{Example}\label{sec:algo_eg}
\paragraph{Example.~~} %\label{sec:query_time}
Here we illustrate our algorithm. 
Consider the query $Q$ and instance $I$ from Example~\ref{example1} in the introduction. 
%We denote $\{A\},$ $\{A,B\},$ $\{B\}$ the set of attributes of $R,S,T$ respectively. 
The input query is $Q() :- R(x)S(x,y)T(y)$. %The IDNF form of the query output is $f_{IDNF} = w_1v_1u_1 + w_2v_2u_1$.
In the first phase, the table-adjacency graph $G_T$ and the co-table graph $G_{C}$ are computed.
 These graphs are depicted in Fig.~\ref{fig:tadj} and Fig.~\ref{fig:cotable} respectively.
\par
Now we apply \compsepquery. There is a successful row decomposition at the top-most recursive call
that decomposes $G_{C}$ into the two subgraphs $G_{C,1},G_{C,2}$ shown in Fig.~\ref{fig:cotables}. 
So the final expression $f^*$ 
annotating the answer $Q(I)$ will be the \emph{sum} of the expressions $f_1, f_2$ 
annotating the answers of $Q$ applied to the relations corresponding to  $G_{C,1}$ and $G_{C,2}$ respectively.

\begin{figure}[t]
\begin{center}
\input{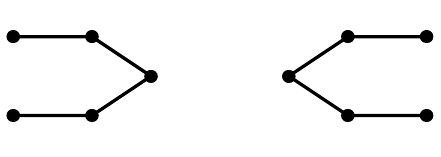_t}
\caption{$G_{C,1}$ and $G_{C,2}$.}
\label{fig:cotables}
\vspace{-.3in}
\end{center}
\end{figure}

%\eat{
%\begin{figure}
%  \centering
%  \subfloat[]{\label{fig:ctb1}\input{cotable1.pdf_t}}                
%
%  \subfloat[]{\label{fig:ctb2}\input{cotable2.pdf_t}}
%  \caption{(a) $G_{C, 1}$, (b) $G_{C, 2}$}
%  \label{fig:cotables}
%\end{figure}
%}
%%
%\begin{figure}[t]
%\centering
%\begin{tabular}{cc}
%%\begin{center}
%\input{cotable1.pdf_t}
%%\caption{$G_{C,1}$.}
%%\label{fig:cotable1}
%%\end{center}
%%\end{figure}
%
%
%%\begin{figure}[t]
%%\begin{center}
%\input{cotable2.pdf_t}
%\caption{$G_{C,2}$.}
%\label{fig:cotable2}
%\end{tabular}
%%\end{center}
%\end{figure}

\eat{
\begin{figure}[t]
\begin{center}
\input{cotable1.pdf_t}
\caption{$G_{C,1}$.}
\label{fig:cotable1}
\end{center}
\end{figure}

\begin{figure}[t]
\begin{center}
\input{cotable2.pdf_t}
\caption{$G_{C,2}$.}
\label{fig:cotable2}
\end{center}
\end{figure}
}
The relations corresponding to $G_{C,1}$ are
$$
R =
\begin{array}{|c|c}
%\hline
%A & \\
%\cline{1-1}
%A & \\
\cline{1-1}
a_1 & w_1 \\
\cline{1-1}
b_1 & w_2 \\
\cline{1-1}
\end{array}~~~~~
S =
\begin{array}{|cc|c}
%\hline
%A & B & \\
%\cline{1-2}
%A & B & \\
\cline{1-2}
a_1 & c_1 & v_1 \\
\cline{1-2}
b_1 & c_1 & v_2 \\
\cline{1-2}
\end{array}~~~~~
T =
\begin{array}{|c|c}
%\hline
%B & \\
%\cline{1-1}
%B & \\
\cline{1-1}
c_1 & u_1 \\
\cline{1-1}
\end{array}
$$
and, the relations corresponding to $G_{C,2}$ are 
$$
R =
\begin{array}{|c|c}
%\hline
%A & \\
%\cline{1-1}
%A & \\
\cline{1-1}
a_2 & w_3 \\
\cline{1-1}
\end{array}~~~~~
S =
\begin{array}{|cc|c}
%\hline
%A & B & \\
%\cline{1-2}
%A & B & \\
\cline{1-2}
a_2 & c_2 & v_3 \\
\cline{1-2}
a_2 & d_2 & v_4 \\
\cline{1-2}
\end{array}~~~~~
T =
\begin{array}{|c|c}
%\hline
%B & \\
%\cline{1-1}
%B & \\
\cline{1-1}
c_2 & u_2 \\
\cline{1-1}
d_2 & u_3 \\
\cline{1-1}
\end{array}
$$

Now we focus on the first recursive call at the second level of recursion tree
with input co-table subgraph $G_{C, 1}$.
Note that the table-adjacency graph for this call is the same as $G_T$.
% ($R$ becomes $R_1$ and similarly for $S$ and $T$) . 
At this level the table decomposition procedure is invoked and the edges of the table-adjacency graph are marked with $+$ and $-$ signs, see Fig.~\ref{fig:tadjmark1}. 
In this figure the common variable set for $R, S$ on the edge $(R, S)$ is $\{x\}$, and for $S, T$ on the edge $(S, T)$
is $\{y\}$. Further, the edge $(S, T)$ is marked with a ``$+$'' because there are all possible edges
between the tuples in $S$ (in this case tuples $v_1, v_2$) and the tuples in $T$ (in this case $u_1$).
However, tuples in $R$ (here $w_1, w_2$) and tuples in $S$ (here $v_1, v_2$) do not have all possible
edges between them so the edge $(R, S)$ is marked with a ``$-$''.
\par
Table decomposition procedure performs a connected component decomposition 
using ``$-$''-edges, that 
decomposes $G_T$ in two components $\{R, S\}$ and $\{T\}$.
%The edges $(R, S)$ and $(S, T)$ are annotated with common variables from corresponding subgoals, $x$ and $y$ respectively.
The subset $C$ of common variables collected from the ``$+$''-edges across different components
will be the variables on the single edge $(S, T)$, $C = \{y\}$.
This variable $y$ is replaced by new \emph{free variables} in all subgoals containing it, which are $S$ and $T$ in our case.
So the modified queries for disjoint components returned by the table decomposition procedure are $\widehat{Q_1}() :- R(x) S(x, y_1)$
and $\widehat{Q_2}() :- T(y_2)$. The input graph $G_{C,1}$ is decomposed further into $G_{C,1,1}$
and $G_{C,1,2}$, where $G_{C, 1, 1}$ will have the edges $(w_1, v_1)$ and $(w_2, v_2)$, whereas
$G_{C, 1, 2}$ will have no edges and a single vertex $u_1$.
Moreover, the expression $f_1$ is the product of $f_{11}$ and $f_{12}$ generated by these two queries
respectively. Since the number of tables for $\widehat{Q_2}$ is only one, and $T$ has a single tuple,
by the base step (Step~\ref{step:base}) of \compsepquery,
$f_{12} = u_1$. For expression $f_{11}$ from $\widehat{Q_1}$, now the graph $G_{C,1,1}$ can be decomposed using a row decomposition
to two connected components with single edges each ($(w_1, v_1)$ and $(w_2, v_2)$ respectively).
There will be recursive subcalls on these two components and each one of them will perform
a table decomposition (one tuple in every table, so the single edges in both calls will be marked with ``$+$'').
Hence $f_{11}$ will be evaluated to  $f_{11} = w_1v_1 + w_2v_2$.
So $f_1 = f_{11}\cdot f_{12} = (w_1v_1 + w_2v_2)u_1$.
%corresponding to the queries $R(\mathbf{x_1}),S_1(\mathbf{x_2})$ \scream{(vit.) is this the right new notation?} and $T_1()$. %(table with no tuples), the set of pivots being $\{B\}$. 

\begin{figure}[t]
\begin{center}
\input{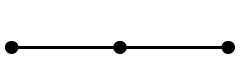_t}
\caption{Marked table-adjacency graph for $R,S,T$.}
\label{fig:tadjmark1}
\end{center}
\end{figure}

%The expression annotating the answer of $Q$ on $R_1,S_1,T_1$ will be the sum of the final expressions for the queries $\mbox{ ans}()=R_1(x),S_1(x)$ and $\mbox{ ans}()=T_1()$ \scream{(vit.) check notation!}. Now we iterate the algorithm on each of these queries. Applying row decomposition and then table decomposition we see that the final expression for the first query is $x_1z_1+y_1z_2$ \scream{(vit.) explain this in detail?}. The second query produces $u_1$. Therefore the algorithm applied to $Q$ on $R_1,S_1,T_1$ outputs $(x_1z_1+y_1z_2)u_1$.
\par

By a similar analysis as above, it can be shown that the same query $Q$ evaluated on the tables $R, S, T$
given in the above tables give $f_2 = w_3(v_3v_4+u_2u_3)$. So the
overall algorithm is successful and outputs the read-once form $f^* = f_1 + f_2$
$= (w_1v_1+w_2v_2)u_1+w_3(v_3u_2+v_4u_3)$. 

\section{Discussion of Time Complexity of %the 
Query-Answering
Algorithm}
\label{sec:complexity}
\scream{file complex-discussion.tex}

Putting together our results from Sections~\ref{sec:co-graph} 
and~\ref{sec:algo-query}, we propose the following algorithm
for answering boolean conjunctive queries without self-joins
on tuple-independent probabilistic databases.

\smallskip
\noindent
{\bf Phase 0} (Compute provenance DAG)\\
{\bf Input:} query $Q$, event table rep $I$\\
{\bf Output:} provenance DAG $H$\\
{\bf Complexity:} $O((\frac{n \rm{e}}{k})^k)$ \scream{see below}\\
\screamresolved{Nice: Choose ANY plan}

\smallskip
\noindent
{\bf Phase 1} (Compute co-table graph)\\
{\bf Input:} $H$, $Q$\\
{\bf Output:} table-adjacency graph $G_T$, co-table graph $G_C$\\
{\bf Complexity:} $O(n m_H + \beta_H m_{co} + k^2\alpha \log \alpha)$
(Thm.~\ref{thm:lca})\\
\screamresolved{\scream{See what can we do with $\beta$}}

\smallskip
\noindent
{\bf Phase 2} (Compute read-once form)\\
{\bf Input:} event table rep $I$, $Q$, $G_T$, $G_C$\\
{\bf Output:} read-once form $f^*$ or FAIL\\
{\bf Complexity:} $O(m_T\alpha \log \alpha + (m_C+n) \min(k, \sqrt{n}))$ 
(Thm.~\ref{thm:query_time})\\

\eat{
In this section first we 
discuss how the size of the input expression given as a DAG $H$ (represented by $m_H$)
relate to the other terms in this expression and then
the data complexity of our algorithm given a fixed query.
Then we compare the time complexity of our algorithm with that of previous algorithms
that computes the read-once form (if possible) of expressions generated by 
conjunctive queries without self-join.
}
\smallskip
\noindent
{\bf Size of the provenance DAG $H$.}~
Let $f$ be the boolean event expression generated by some query plan
for $Q$ on the database $I$.  The number of edges $m_H$ in the DAG $H$
represents the size of the expression $f$.  Since there are exactly
$k$ subgoals in the input query $Q$, one for every table, every
prime implicant of $f_{IDNF}$ will have exactly $k$
variables, so the size of $f_{IDNF}$ is at most ${n \choose k} \leq
(\frac{n \rm{e}}{k})^k$.  Further, the size of the input expression $f$ is
maximum when $f$ is already in IDNF.  So size of the
DAG $H$ is upper bounded by $m_H \leq (\frac{n \rm{e}}{k})^k$.  Again, the
``leaves'' of the DAG $H$ are exactly the $n$ variables in $f$. So
$m_H \geq n-1$, where the lower bound is achieved when the DAG $H$ is
a tree (every %leaf variable 
node in $H$ has a unique predecessor); in that
case $f$ must be read-once and $H$ is the unique read-once tree of
$f$.  \par Therefore  $n-1 \leq m_H \leq (\frac{n \rm{e}}{k})^k$.
Although the upper bound is quite high, it is our contention that
for practical query plans the size of the provenance graph is much
smaller than the size of the corresponding IDNF.\\  
\screamresolved{listed in my todolist :) dont have time to
include it to future work.
\scream{Good lesson for further research:
find plans that minimize the size of $H$. Of course this was done for safe
queries but here we see that it is good also for unsafe queries.}
}

\smallskip
\noindent
{\bf Data complexity.~} The complexity dichotomy of~\cite{DalviS04}
is for data complexity, i.e., the size of the query is bounded
by a constant. This means that our $k$ and $\alpha$ are $O(1)$.
Hence the time complexities of the Phase 1 and Phase 2 are
$O(n m_H + \beta_H m_{co})$ and
$O(m_C+n) \min(k, \sqrt{n}))$ respectively.  As discussed above,
$m_H$ is $\Omega(n)$ and $O((\frac{n \rm{e}}{k})^k)$.  
So one of these two terms may dominate the other
based on the relative values of $m_H, m_C$ and $m_{co}$ and of
$\beta_H$.  For example, when $m_H = \theta(n)$, $m_{co} = \theta(m_{C})
= \theta(n^2)$, and $\beta_H = O(1)$, the first phase takes $O(n^2)$
time, whereas the second phase may take $O(n^{\frac{5}{2}})$ time.
However, when $m_H = \Omega(n^{\frac{3}{2}})$, the first phase always
dominates.

In any case, %\scream{following sentence?} 
we take the same position as~\cite{SenDG10} that
for unsafe~\cite{DalviS04}
queries the competition comes from the approach that does 
{\em not} try to detect whether the formulas are read-once and
instead uses probabilistic inference~\cite{KollerF09} which
is in general EXPTIME. In contrast, our algorithm runs in PTIME,
and works for a larger class of queries than the safe 
queries~\cite{DalviS04} (but of course, not on all instances).\\

\smallskip
\noindent
{\bf Comparisons with other algorithms.~} For these comparisons
we do not restrict ourselves to data complexity, instead taking the various
parameters of the problem into consideration.

First consider the {\bf general read-once detection algorithm}. This
consists of choosing some plan for the query, computing the answer boolean
event expression $f$, computing its IDNF, and then using the (so far, best)
algorithm~\cite{GolumbicMR06} to check if $f$ is read-once and if so to
compute its read-once form. The problem with this approach is that
the read-once check is indeed done in time a low polynomial, but
{\em in the size of $f_{IDNF}$}. For example, consider a boolean query
like the one in Example~\ref{example2}. %\ref{ex 3 in sec 2}. 
This is a query
that admits a plan (the safe plan!) that would generate
the event expression $(x_1+y_1)\cdots(x_n+y_n)$ on an instance
in which each $R_i$ has two tuples $x_i$ and $y_i$. This is a read-once
expression easily detected by our algorithm, which avoids the computation
of the IDNF.

Next consider the {\bf cograph-help algorithm} that we have already mentioned
and justified in Section~\ref{sec:prelims}. 
This consists of our Phase 0 and a slightly 
modified Phase 1 that computes the co-occurrence graph $G_{co}$,
followed by checking if $G_{co}$ is a cograph
using one of the linear-time algorithms given 
in~\cite{CorneilPS85, HabibP05,BretscherCHP08} which also
outputs the read-once form if possible. Since Phase 0 and Phase 1
are common we only need to compare the last phases.
 
The co-graph recognition algorithms will all run in 
time $O(m_{co} + n)$. Our Phase 2 complexity is better than this
when $m_{C}\min(k, \sqrt{n}) = o(m_{co})$. %\scream{check the sqrt}
Although in the worst case this algorithm performs at least as well
as our algorithm 
(since $m_C$ may be $\theta(m_{co})$),
(i) almost always the time required in first phases will 
dominate, so the asymptotic running time
of both these algorithms will be comparable, 
(ii) as we have shown earlier, the ratio $\frac{m_{co}}{m_C}$
can be as large as $\Omega(n^2)$, and the benefit of this could 
be significantly exploited by caching co-table graphs
computed for other queries (see discussions in Section~\ref{sec:conclusions}),
%\scream{REFER WHERE?}
%pre-processing step (see Section~\ref{sec:pre}), 
and (iii) these linear time
algorithms use complicated data structures, 
whereas we use simple graphs given as adjacency lists
and connectivity-based algorithms, 
so our algorithms are simpler to implement and may run faster in practice.

\par
Finally we compare our algorithm against that given in~\cite{SenDG10}.
Let us call it the {\bf lineage-tree algorithm} since 
they take the lineage tree of the result as input
as opposed to the provenance DAG as we do.
Although~\cite{SenDG10} does not give a complete running time 
analysis of the lineage tree algorithm, for the brief discussion
we have, we can make, to the best of our understanding,
the following observations.

%\scream{Some of this stuff is readable only if you are familiar
%with~\cite{SenDG10}. We need to move such stuff to an appendix.}
Every join node in the lineage tree has two children, and every project 
node can have arbitrary number of children. 
When the recursive algorithm computes the read-once trees
of every child of a project node, every pair of such read-once trees are 
merged which may take $O(n^2k^2)$
time for every single merge (since the variables in the read-once trees 
to be merged are repeated).
Without counting the time to construct the lineage tree
this algorithm may take $O(Nn^2k^2)$ time in the worst case,
where $N$ is the number of nodes in the lineage tree.
\par
Since~\cite{SenDG10} does not discuss constructing the lineage
tree we will also ignore our Phase 0.  
We are left with comparing $N$ with $m_H$. 
It is easy to see that the number of edges in the provenance DAG $H$,
$m_H = \theta(N)$, where $N$ is the number of nodes in the lineage tree, when both
originate from the same query plan\footnote{If we ``unfold'' provenance DAG $H$
to create the lineage tree, the tree will have exactly $m_H$ edges,
and the number of nodes in the tree will be $N = m_H + 1$.} 
%We can show that (see Appendix~\ref{app_example}) %\scream{see what we will move to appendix}
%that the number of edges in the DAG $H$, 
%$m_H$, is of the same order \scream{better said?}
%of the number of nodes $N$
%in the lineage tree. 
Since the lineage-tree algorithm takes $O(Nn^2k^2)$ time in the worst case,
and we use $O(n m_H + \beta_H m_{co} + k^2\alpha \log \alpha) + O((m_C + n)\min(k, \sqrt{n}))$
$= O(nN + \beta_H n^2 + k^2\alpha \log \alpha + n^{\frac{5}{2}})$.
The width $\beta_H$ of the DAG $H$ in the worst case can be the number of nodes in $H$.
So our algorithm \emph{always} gives an $O(k^2)$ improvement in time complexity
over the lineage-tree algorithm given in~\cite{SenDG10} 
whereas the benefit can often be more.
%our algorithm gives substantial improvement in time
%over the lineage-tree algorithm for all small values of $\beta_H$.

%for any reasonable input expression which multiplies a given pair of variables a small number of times.
%They require the lineage tree of the result tuple to be deep (every join
%node has at least one of the base variable), and their re
%This algorithm may take $O(Nn^2k^2)$ time in the worst case,
%where $N$ is the number of join nodes in the \emph{lineage tree} used as an input by their algorithm.

%\scream{Difference between taking as input a query or a query plan}

\section{Related Work}
\label{sec:related}

\scream{file related.tex}

The beautiful complexity dichotomy result of~\cite{DalviS04}
classifying conjunctive queries without self-joins 
on tuple-independent databases into ``safe'' and
``unsafe'' has spurred and intensified interest in probabilistic 
databases.  Some
papers have extended the class of safe relational
queries~\cite{DalviS07a,OlteanuH08,OlteanuH09,DalviSS10}.  Others have
addressed the question of efficient query answering for unsafe queries
on {\em some} probabilistic databases. This includes mixing the
intensional and extensional approaches, in effect finding subplans
that yield read-once subexpressions in the event
expressions~\cite{JhaOS10}.  The technique identifies ``offending''
tuples that violate functional dependencies on which finding safe
plans relies and deals with them intensionally. It is not clear that
this approach would find the read-once forms that our algorithm finds.
The OBDD-based approach in~\cite{OlteanuH08} works also for some
unsafe queries on some databases. The SPROUT secondary-storage
operator~\cite{OlteanuHK09} can handle efficiently some unsafe queries
on databases satisfying certain functional dependencies.

Exactly like us,~\cite{SenDG10} looks to decide efficiently when the
extensional approach is applicable given a conjunctive query without
self-joins and a tuple-independent database. We have made comparisons
between the two papers in various places, especially in
Section~\ref{sec:complexity}. Here we only add that that our
algorithm deals with different query plans uniformly, while the
lineage tree algorithm needs to do more work for non-deep plans.
The graph structures used in our approach
bear some resemblance to the graph-based synopses for relational
selectivity estimation in \cite{SpiegelP06}.

The read-once property has been studied for some time, albeit under various
names~\cite{Hayes75,Gurvich77,Valiant84,HellersteinK89,Gurvich91,KarchmerLNSW93,PeerP95}. 
It was shown~\cite{HellersteinK89} that if RP$\neq$NP then
read-once cannot be checked in PTIME for arbitrary monotone boolean
formulas, but for formulas in IDNF (as input) read-once can be checked in 
PTIME~\cite{GolumbicMR06}. Our result here sheds new light on another
class of formulas for which such an efficient check can be done.

%
%
%\scream{(s). should we add this one with proof in the appendix?
%alternative read-once form from DNF, we show that natural
%factorization $\phi = xf + yf + g$ or $xyf + g$ works, and prove by
%hypotheses of positive boolean algebra without using P4-free and
%normality. May be of independent interest.}

\section{Conclusions and Future Work}
\label{sec:conclusions}

\scream{file conclusions.tex}

We have investigated the problem of efficiently deciding when a
conjunctive query {\em without self-joins} applied to a {\em
  tuple-independent} probabilistic database representation yields
result representations featuring {\em read-once} boolean event
expressions (and, of course, efficiently computing their read-once
forms when they exist). We have given a complete and simple to
implement algorithm of low polynomial data complexity for this
problem, and we have compared our results with those of other
approaches.

As explained in the introduction, the
results of this paper do not constitute complexity dichotomies.
However, there is some hope that the novel proof of completeness that
we give for our algorithm may be of help for complexity dichotomy
results in the space coordinated by the type of queries and the
type of databases we have studied.

Of independent interest may be that we have also implicitly performed a
study of an interesting class of monotone boolean formulas, those that
can be represented by the provenance graphs of conjunctive queries
without self-joins (characterizations of this class of formulas that
do not mention the query or the database can be easily given).  We
have shown that for this class of formulas the read-once property is
decidable in low PTIME (the problem for arbitrary formulas is unlikely
to be in PTIME, unless RP=NP). Along the way we
have also given an efficient algorithm
for computing the co-occurrence graph of such formulas (in all the other
papers we have examined, computing the co-occurrence graph entails
an excursion through computing a DNF; this, of course, may be the best
one can do for arbitrary formulas, if RP$\neq$NP).
It is likely that nicely tractable class of boolean formulas may 
occur in other database applications, to be discovered.

For further work one obvious direction is to extend our study to
larger classes of queries and probabilistic
databases~\cite{DalviS07a,DalviSS10}. Recall from the discussion
in the introduction however, that the class of queries considered
should not be able to generate arbitrary monotone boolean expressions.
Thus, SPJU queries are too much (but it seems that our approach might
be immediately useful in tackling unions of conjunctive queries without
self-joins, provided the plans do the unions last).

On the more practical side, work needs to be done to apply our
approach to non-boolean queries, i.e., they return actual tables.
Essentially, one would work with the provenance graph associated with
each table (initial, intermediate, and final) computing simultaneously
the co-table graphs of the event expressions on the graph's roots. It
is likely that these co-table graphs can be represented together, with
ensuing economy.

However we believe that the most practical impact would have the {\em
  caching of co-table graphs} at the level of the system, over batches
of queries on the same database, since the more expensive step in our
algorithm is almost always the computation of the co-table graph
(see discussion in Section~\ref{sec:complexity}).

\eat{
, the expensive
step in our algorithm is to compute the co-table graph $G_C$, 
since that introduces a factor $m_H = \Omega(n)$ in the total time complexity.
}

This would work as follows, for a fixed database $I$.  When a (let's
say boolean for simplicity) conjunctive query $Q_1$ is processed,
consider also the query $\bar{Q_1}$ which is obtained from $Q_1$ by
replacing each {\em occurrence} of constants with a distinct fresh
FO variable. Moreover if an FO variable $x$ occurs several
times in a subgoal $R_i(\mathbf{x}_i)$ of $Q$ but does {\em not}
occur in any of the other subgoals (i.e., $x$ causes selections
but not joins), replace also each occurrence of $x$ with a distinct
fresh FO variable. In other words, $Q_1$ is doing what $\bar{Q_1}$
is doing, but it first applies some selections on the various tables
of $I$. We can say that $\bar{Q_1}$ is the ``join pattern'' behind
$Q_1$. Next, compute the co-table graph for $\bar{Q_1}$ on $I$ and 
{\em cache} it together with $\bar{Q_1}$.
It is not hard to see that the co-table graph for $Q_1$
can be efficiently computed from that of $\bar{Q_1}$
by a ``clean-up'' of those parts related to tuples of $I$ 
that do not satisfy the select conditions of $Q_1$.

When another query $Q_2$ is processed, check if its join-pattern
$\bar{Q_2}$ {\em matches} any of the join-patterns previously cached
(if not, we further cache {\em its} join-pattern and co-table graph).
Let's say it matches $\bar{Q_1}$. Without defining precisely what
``matches'' means, its salient property is that the co-table graph of
$\bar{Q_2}$ can be efficiently obtained from that of $\bar{Q_2}$ by
another clean-up, just of edges, guided by the table-adjacency graph
of $\bar{Q_2}$ (which is the same as that of $Q_2$).  It can be shown
that these clean-up phases add only an $O(n\alpha)$ to the running
time.

There are two practical challenges in this approach.  The first one is
efficiently finding in the cache some join-pattern that matches that
of an incoming query. Storing the join-patterns together into some
clever data structure might help. The second one is storing largish
numbers of cached co-table graphs. Here we observe that they can all
be stored with the same set of nodes and each edge would have a list
of the co-table graph it which it appears. Even these lists can be
large, in fact the number of all possible joint-patterns is
exponential in the size of the schema. More ideas are needed and
ultimately the viability of this caching technique can only be
determined experimentally.\\

\medskip
\noindent
\paragraph{Acknowledgement.~~} We thank Amol Deshpande and Dan Suciu
for useful discussions.

\eat{
Let us consider the class of \emph{typed queries}: a boolean conjunctive query
$Q() :- R_1(\mathbf{x_1}) \cdots R_k(\mathbf{x_k})$ is a typed query, if every variable
and constant $z \in \bigcup_{i = 1}^k \mathbf{x_i}$ (considering vectors $\mathbf{x_i}$-s as sets)
corresponds to at most one attribute in $\bigcup_{i = 1}^k \mathbf{A_i}$ where $\mathbf{A_i}$
is the attribute set for table $R_i$. For \emph{typed queries}, we can build a repository of
co-table graphs computed by the previous queries, where every such query has only variables
in its subgoals. Then instead of computing the co-table from the expression DAG $H$,
we can search for the co-table graph in the repository.
If there are constants in the given query, the look-up phase 
can be followed by an actual query-specific \emph{clean-up} phase,
where only the tuple-variables along with their incident edges in $G_C$
in every table are retained s.t. the corresponding
tuples match the constants in its subgoal. 

It can be shown that this clean-up phase
correctly computes the actual co-table graph $G_C$ by introducing an additional
term $O(n\alpha)$ in the running time of the computation of read-once form along with the time
needed for the look-up phase.
However, the number of possible queries containing only variables in their subgoals is exponential in $k$ and $\alpha$.
For a particular attribute $A$, if $S_A$ is the set of tables containing the attribute $A$,
then every possible non empty partition of $S_A$ may correspond to a new variable in different queries.
As an example, consider four tables
$R_1(A),R_2(A),$ $R_3(A), R_4(A)$ all having single and the same attribute $A$. Then the set
possible queries include
$Q_1() :- R_1(x)R_2(x)R_3(x)R_4(x)$, $Q_2() :- R_1(x)R_2(x)R_3(y)R_4(y)$ or $R_1(x)R_2(x)R_3(x)R_4(y)$ and many others.
If $k_A$ is the number of possible ways to partition $S_A$ based on the identity of variables for $A$,
total number of queries will be $\prod_{A} k_A$, where the product is over all possible
attributes $A$. This number is exponential in $k, \alpha$ and therefore coming up with a space and time efficient
index structure to retrieve queries from the repository will be a challenging question.
\par
Another related problem to study is whether we need to store all possible queries in the repository.
In other words, whether it suffices to store co-tables for a small number of queries
in the repository and build the actual co-table graph for the actual input query efficiently.
Let us consider the most
restricted typed conjunctive query (upto select operations) $Q^*() :- R_1(\mathbf{x_1^*}) \cdots R_k(\mathbf{x_k^*})$,
where for every attribute $A \in \bigcup_{i = 1}^\mathbf{A_i}$, there is exactly one variable $x_A \in \bigcup_{i = 1}^\mathbf{x_i^*}$.
Let us call the co-table graph returned by this pre-processing step for $Q^*$ to be
$G_{C}^*$ that has $m_C^*$ edges.
It can be shown that if we consider a subclass of typed queries where for every 
pair of tables $R_i, R_j$, either the set of common variables in their subgoals is either empty
or is identical to $\mathbf{x_i^*} \cap \mathbf{x_j^*}$ (upto renaming of variables), then
the actual co-table graph $G_C$ for any query in this subclass can be computed from $G_C*$
in $O(m_C^*)$ additional time. An interesting question to study will be whether the actual co-table graphs
for all possible typed queries can be efficiently constructed
using, say, linear number of such pre-processed co-table graphs.
}

%
%
%\scream{Read-once Formula for Poly-time Probability Computation
%WHAT DO WE DO HERE?}

%A second direction of future direction is whether the read-once expressio
% 
%The pre-processing step is performed on the database $I$ with the most
%restricted typed conjunctive query (upto select operations) $Q^* = R_1(\mathbf{x_1^*}) \cdots R_k(\mathbf{x_k^*})$,
%where for every attribute $A \in \bigcup_{i = 1}^\mathbf{A_i}$, there is exactly one variable $x_A \in \bigcup_{i = 1}^\mathbf{x_i^*}$.
%Ideally we would like to perform the pre-processing step only
%a small number of times for the database $I$ and use those over all future (typed) conjunctive queries.

%\input{example}

%{\bf Acknowledgments} \scream{GRANTS, etc}

\bibliographystyle{abbrv}
\bibliography{bib} 
% You must have a proper ".bib" file
%  and remember to run:
% latex bibtex latex latex
%APPENDICES are optional
%\balancecolumns

\newpage
\appendix
%Appendix A
\section{Generalization of Example 1}
\label{sec:appexample}

\begin{example} 
$$
R=
\begin{array}{|c|c}
%\hline
%A & \\
\cline{1-1}
a_1 & x_1 \\
\cline{1-1}
b_1 & y_1 \\
\cline{1-1}
a_2 & x_2 \\
\cline{1-1}
\ldots & \dots\\
\cline{1-1}
a_{2n-1} & x_{2n-1} \\
\cline{1-1}
b_{2n-1} & y_{2n-1} \\
\cline{1-1}
a_{2n} & x_{2n} \\
\cline{1-1}
\end{array}~~~~~
T=
\begin{array}{|c|c}
%\hline
%B & \\
\cline{1-1}
c_1 & u_1 \\
\cline{1-1}
c_2 & u_2 \\
\cline{1-1}
d_2 & v_2 \\
\cline{1-1}
\ldots & \dots\\
\cline{1-1}
c_{2n-1} & u_{2n-1} \\
\cline{1-1}
c_{2n} & u_{2n} \\
\cline{1-1}
d_{2n} & v_{2n} \\
\cline{1-1}
\end{array}
$$

$$
S=
\begin{array}{|cc|c}
%\hline
%A & B & \\
\cline{1-2}
a_1 & c_1 & z_1 \\
\cline{1-2}
b_1 & c_1 & z_2 \\
\cline{1-2}
a_2 & c_2 & z_3 \\
\cline{1-2}
a_2 & d_2 & z_4 \\
\cline{1-2}
\ldots & \dots & \ldots\\
\cline{1-2}
a_{2n-1} & c_{2n-1} & z_{4n-3} \\
\cline{1-2}
b_{2n-1} & c_{2n-1} & z_{4n-2} \\
\cline{1-2}
a_{2n} & c_{2n} & z_{4n-1} \\
\cline{1-2}
a_{2n} & d_{2n} & z_{4n} \\
\cline{1-2}
\end{array}
$$

Answer event expression with standard plan:
\begin{equation}
x_1z_1u_1 + y_1z_2u_1  + \cdots + x_{2n}z_{4n-1}u_{2n} +
x_{2n}z_{4n}v_{2n}
\end{equation}
Equivalent to:
\begin{equation}
(x_1z_1 + y_1z_2)u_1 + \cdots + x_{2n}(z_{4n-1}u_{2n} + z_{4n}v_{2n})
\end{equation}
which is in read-once form and whose probability can be computed in time
$O(n)$.
\end{example}

% Appendix B
\section{PTIME Probability Computation for Non-Read-Once
Expressions}\label{sec:discussions}
%\scream{discussions.tex}\\
%\scream{(s) This example should be oved to other sections}
The following example shows that 
there exists a query $Q$ and a probabilistic database instance $D$ such that the 
expression $E$ for evaluation of query $Q$ 
on database $D$ is not read-once but still the probability of $E$ being true
can be computed in poly-time.
\begin{example}
The database $D$ has three tables $R(A), S(A,B), T(B)$.
Table $S$ has $n$ tuples. The tuple in the $2i-1$-th row
has tuple $(a_{i}, b_i)$ and the tuple in the $2i$-th row
has tuple $(a_{i+1},b_{i})$. We suppose that $S$ is a \emph{deterministic} relation, i.e. all the tuples is $S$ belong to $S$ with probability one and are annotated with true. If $n=2k$ then table $R$ has $k+1$ tuples, if $n=2k-1$ then $R$ has $k$ tuple. The tuple in row $j$ of $R$ is $a_j$ and is annotated by $x_{2j-1}$.
Table $T$ has $k$ tuples, where $n=2k-1$ or $n=2k$: the tuple in row $j$ is $b_j$ and is annotated by $x_{2j}$. 
It can be verified that the expression $E_n$ annotating the answer to the query $Q()=R(A),S(A,B),T(B)$ is 
$$
E_n=x_1x_2+x_2x_3+\ldots x_{n-1}x_n+x_nx_{n+1}.
$$
%$E = z_1x_1y_1 + \sum_{i = 1}^{n-1}(z_{2i}x_{i+1}y_i + z_{2i+1}x_{i+1}y_{i+1})$.
An example with $n = 2$ is given in Figure~\ref{fig:eg-poly-non-sep}.
%For simplicity, by renaming variables, we rewrite the expression $E$ as
%$E = E_q = z_1 X_1 X_2 + z_2 X_2 X_3 + z_3 X_3 X_4 + \cdots + z_q X_q X_{q+1}$,
%where $q = 2n-1$. 
It can be easily verified that for all $n > 1$, the expression $E_n$
is not read-once. 
\begin{figure}[htbp]
\begin{center}
$$
S=
\begin{array}{|c || c|}
%\hline
%A & \\
\hline
a_1 & x_1 \\
\hline
a_2 & x_3 \\
\hline
\end{array}~~~~~
R=
\begin{array}{|c c || c|}
%\hline
%A & B & \\
\hline
a_1 & b_1 & z_1=1_{\mathbb{B}} \\
\hline 
a_2 & b_1 & z_2=1_{\mathbb{B}} \\
\hline
a_2 & b_2 & z_3=1_{\mathbb{B}} \\
\hline
\end{array}~~~~~
T=
\begin{array}{|c || c|}
%\hline
%B & \\
\hline
b_1 & x_2 \\
\hline
b_2 & x_4 \\
\hline
\end{array}
$$
\end{center}
\caption{Illustration with $n = 3$, $E_3= x_1x_2+x_2x_3+x_3x_4$.}  
\label{fig:eg-poly-non-sep}
\end{figure}
\par

Next we show that $P_n = P(E_n)$ can be computed in poly-time in $n$ by dynamic programming.
Note that $P_1$ can be computed in $O(1)$ time. % since each implicant has three variables.
Suppose for all $\ell < n$, $P_\ell$ is computed and stored in an array. 
Then 

\begin{eqnarray}
P_n&=&P(x_1x_2+\ldots+x_{n-2}x_{n-1}+x_{n-1}x_{n}+x_{n}x_{n+1})\nonumber\\
& = &P(x_1x_2+\ldots+x_{n-2}x_{n-1}+x_{n-1}x_{n})+P(x_nx_{n+1})\nonumber\\
& &- P(x_{n}x_{n+1}[x_1x_2+\ldots+x_{n-2}x_{n-1}+x_{n-1}x_{n}])\nonumber\\
& = & P_{n-1} + P(x_nx_{n+1}) \\
&  & - P(x_{n}x_{n+1}[x_1x_2+\ldots+x_{n-2}x_{n-1}+x_{n-1}x_{n}])\label{equn:1}
\end{eqnarray}

Observe that:
\begin{eqnarray}
 & & P(x_{n}x_{n+1}[x_1x_2+\ldots+x_{n-2}x_{n-1}+x_{n-1}x_{n}]) \nonumber \\
 &=&  P(x_{n+1})P(x_{n}[x_1x_2+\ldots+x_{n-2}x_{n-1}+x_{n-1}x_{n}]) \nonumber\\
 &= &P(x_{n+1})P(x_1x_2x_n+\ldots+x_{n-2}x_{n-1}x_n+x_{n-1}x_n) \mbox{ (idempotency) } \nonumber\\
 & = & P(x_{n+1})P(x_1x_2x_n+\ldots+x_{n-3}x_{n-2}x_n+x_{n-1}x_n) \mbox{ (absorption) } \nonumber\\
 & = & P(x_{n+1})P(x_n)P(x_1x_2+\ldots+x_{n-3}x_{n-2}+x_{n-1}) \nonumber\\
 & = & P(x_nx_{n+1})[P(x_1x_2+\ldots+x_{n-3}x_{n-2})+P(x_{n-1})] \nonumber\\
 & = & P(x_nx_{n+1})[P_{n-3} + P(x_{n-1})]\label{equn:2}
\end{eqnarray}

From (\ref{equn:1}) and (\ref{equn:2}),
\begin{eqnarray*}
P_n &= & P_{n-1} + P(x_nx_{n+1}) - P(x_nx_{n+1})[P_{n-3} + P(x_{n-1})]\\
& = & P_{n-1}+ P(x_nx_{n+1})[1 - P_{n-3} - P(x_{n-1})]
\end{eqnarray*} 

Since the variables $x_i$-s are independent, $P(x_n x_{n+1}) = P(x_n) P(x_{n+1})$,
and while computing $P_n$, $P_{n-1}$ and $P_{n-3}$ are already available.
Hence $P_n$ can be computed in linear time.

\end{example}

%Appendix C
\section{Proofs from Section~3} %\ref{sec:co-graph}}
\label{app:proof_sec3}
\noindent
\textbf{Proof of Lemma~\ref{lemma:corcompcotable}}.\\
\medskip
\noindent\textsc{Lemma~\ref{lemma:corcompcotable}.}
{\it Algorithm \compcotab\ adds an edge $(x, y)$ to $G_{C}$ if and only if $x, y$
together appear in some implicant in $f_{IDNF}$ and the tables containing $x, y$
are adjacent in $G_T$.}

\begin{proof}
Suppose two variables $x, y$ belong to the same implicant in $f_{IDNF}$, and 
their tables are adjacent in $G_T$.
Then by Lemma~\ref{lem:dnf-lca}, there is a $\cdot$-node $u \in \lca(x, y)$,
and $x \in \var(v_1), y \in \var(v_2)$ for two distinct successors $v_1, v_2$ of $u$.
When the algorithm processes the node $u$, if an edge between $x, y$ is not added
in a previous step, the edge will be added. This shows the completeness of algorithm
\compcotab.
\par
Now we show the soundness of the algorithm.
Consider two variables $x, y$ such that either the tables containing them are not adjacent in 
$G_T$ or they do not belong together in any of the implicants in $f_{IDNF}$.
If the tables containing $x, y$ are not adjacent in $G_T$, clearly, the algorithm
never adds an edge between them -- so let us consider the case when 
 $x, y$ do not belong to the same implicant in $f_{IDNF}$.
Then by Lemma~\ref{lem:dnf-lca}, there is no $\cdot$-node $u \in \lca(x, y)$.
\par
Consider any iteration of the algorithm and consider that a node $u$ 
is processed by the algorithm in this iteration. If $u$ is a $+$-node or 
if either $x \notin \var(u)$ or $y \notin \var(u)$,  
again no edge is added between $x, y$. So assume that, $u$ is a $\cdot$-node and 
$x, y \in \var(u)$.
Then $u$ is a common ancestor of $x$ and $y$. But since $u \notin \lca(x, y)$, by definition
of least common ancestor set, there is a
successor $v$ of $u$ such that $v$ is an ancestor of both $x, y$ and therefore, $x, y \in \var(v)$.
However, by Corollary~\ref{lem:unique_successor}, since $x$ or $y$ cannot belong
to two distinct successors of node $u$, node $v$ must be the unique successor of $u$ such that $x, y \in \var(v)$.
Since \compcotab\ only joins variables from two distinct children,
no edge will be added between $x$ and $y$
in $G_C$.
\end{proof}

\subsection{Time Complexity of \compcotab}\label{sec:app_time_lca} 
First we prove the following two lemmas bounding the number of times any given pair of variables 
$x, y$ are considered by the algorithm. The first lemma shows that 
 the variables $x, y$ are considered
by algorithm \compcotab\ to add an edge between them in $G_{co}$ only when they together appear in an implicant in $f_{IDNF}$,
i.e. only if the edge actually should exist
in $G_{co}$.
\begin{lemma}\label{lem:bound-no-edge}
Consider any two variables $x, y$ and a $\cdot$-node $u$.
If $x, y$ do not appear together in an implicant in $f_{IDNF}$, $x, y$
do not belong to the variable sets $\var(v_1), \var(v_2)$ for two distinct successors $v_1, v_2$ of $u$.
\end{lemma}
\begin{proof}
This easily follows from Lemma~\ref{lem:dnf-lca} which says that if $x, y$ do not appear together in an implicant in $f_{IDNF}$,
then there is no $\cdot$-node in $\lca(x, y)$. So for every $\cdot$-node $u$, either (i) one of $x$ and $y$
$\notin \var(u)$, or, (ii) there is a unique successor $v$ of $u$ which is a common ancestor of $x, y$, i.e. both $x, y \in \var(v)$
(uniqueness follows from Corollary~\ref{lem:unique_successor}). 
\end{proof}
The second lemma bounds the number of times a pair $x, y$ is considered by the algorithm
to add an edge between them. 
%Recall that the provenance DAG $H$ is a layered graph
%where every layer corresponds to a select, project or join operation in the query plan.
%The \emph{width} of $H$ is 
%\eat{
%We say the pair $x, y$ is \emph{joined} by the algorithm at a node $u$ if $x \in \var(v_1)$,
%$y \in \var(v_2)$, where $v_1, v_2$ are two distinct successors of node $u$ in $H$.
%It is easy to see that $x, y$ are joined by a node $u$, if and only if $u \in \lca(x, y)$.
%}
\begin{lemma}\label{lem:bound-edge}
Suppose $x, y \in \var(f)$ be such that they together appear in an implicant $f_{IDNF}$.
Then algorithm \compcotab\ considers $x, y$ in Step~\ref{step:check} to add an edge between them maximum $\beta_H$ times,
where $\beta_H$ is the width of the provenance DAG $H$.
%$\beta_{xy}$ is the number of times the pair $x, y$ is joined in the given query plan $H$.
\end{lemma}
\begin{proof}
Note that the check in Step~\ref{step:check} is performed only when the current node $u$ is a $\cdot$-node.
Consider any $\cdot$-node $u$. (i) if either $x$ or $y$
is not in $\var(u)$, clearly, $x, y$ are not checked in this step,
otherwise, (ii) if both $x, y \in \var(u)$,
and $x, y \in \var(v)$ for a unique child $v$ of $u$, then also $x, y$ are not checked at this step,
otherwise, (iii) if $u$ joins $x, y$, i.e., $x \in \var(v_1)$, $y \in \var(v_2)$ for two distinct
children $v_1, v_2$ of $u$, then only $x, y$ are considered by the algorithm in Step~\ref{step:check}.
(and after this node $u$ is processed, both $x, y$ appear in $\var(u)$).\par
However, since the query does not have any self-joins, the only time two variables $x, y$
appear in two distinct successors of a $\cdot$-node $u$ when the query plan
joins a subset of tables containing the table for $x$ with a subset of tables containing
the table for $y$. So the pair $x, y$ is multiplied at a unique layer of $H$,
and the total number of times they are multiplied cannot exceed the total number of nodes
in the layer which is at most the width $\beta_H$ of the DAG $H$. 
\end{proof}
%From Lemma~\ref{lem:bound-no-edge} and Lemma~\ref{lem:bound-edge}, the next corollary follows.
%\begin{corollary}\label{cor:total_check}
%The total number of pairs checked by algorithm \compcotab\ in its entire run is 
%$\beta m_{co}$, where $m_{co}$ is the number of edges in its co-occurrence graph and $\beta = \max_{x,y} \beta_{xy}$
%is the maximum number of times any pair is joined by the given query plan.
%\end{corollary}
%\eat{
%In an arbitrary DAG with alternate layers of $+$ and $\cdot$ node, it may happen that
%$\beta_{x, y}$ is non-constant (as a simple example, consider a DAG with four layers, the bottom-most layer
%has two groups of $\frac{n}{2}$ variables, which are grouped by two $+$ nodes at the layer above, and these two $+$
%nodes are multiplied by $n$ other $\cdot$ nodes, and the top most layer has a single $+$ node. Here
%every pair of $O(n^2)$ nodes are joined $n$ times).  
%}
Now we complete the running time analysis of algorithm \compcotab.
%
%\footnote{It is important to note
%that the running time of the algorithm \compcotab\ is 
%$O(\beta_H m_{co} + n m_H)$ where
%$m_{co}$ is the number of edges in the co-occurrence graph, and not in 
%the co-table graph,
%although the algorithm computes the co-table graph. 
%This is because all variable pairs $x, y$
%which appear in some implicant together in $f_{IDNF}$ 
%will appear in two distinct children
%of some $\cdot$-node. We cannot discard any of the variable (say $x$) even if the tables containing
%$x, y$ are not adjacent in table-adjacency graph $G_T$, since $x$ may join with some other variable $y'$
%later.}

\begin{lemma}\label{lem:time-cotab}
Given the table-adjacency graph $G_T$ and input query plan $H$, algorithm \compcotab\ can be implemented in 
time $O(\beta_H m_{co} + n m_H)$ time, where $m_{co}$ is the number of edges in the co-occurrence graph,
$m_{H}$ is the number of edges in the DAG $H$, 
$\beta_H$ is the width of the DAG $H$ and $n = |\var(f)|$.
\end{lemma}

%The proof is in Appendix~\ref{app:proof_sec3}.

%\commentproof{
\begin{proof}
Initialization step can be done in $O(n)$ time.
The topological sort can be done in $O(m_H + |V(H)|)$ time by 
any standard algorithm.
\par
At every node $u \in V(H)$, to compute set $\var(u)$, the algorithm scans $O(d_{u})$ successors of $u$,
where $d_u = $ the outdegree of node $u$ in $H$. 
Although by Corollary~\ref{lem:unique_successor}, 
for every two distinct children $v_1, v_2$ of a $\cdot$-node $u$, $\var(v_1) \cap \var(v_2) = \phi$,
they may have some overlap when $u$ is a $+$-node, and here the algorithm incurs an $O(n m_H)$ cost total
as follows: (i) create an $n$-length boolean array for $u$ initialized to all zero, (ii) scan $\var(v)$
list of very successor $v$ of $u$, for a variable $x \in \var(v)$, if the entry for $x$ in the boolean array is false
mark it as true, (iii) finally scan the boolean array again to collect the variables marked as true
for variables in $\var(u)$. At every node $u \in V(H)$, the algorithm spends $O(n d_{u})$ time, 
where $d_u = $ the outdegree of node $u$ in $H$. Hence the total time across all nodes
= $\sum_{u \in V(H)} O(n d_{u})$ = $O(n m_H)$.
%\footnote{Since $O(n)$ cost suffices for every $\cdot$-node
%$u$, the time complexity of the algorithm actually is $O(n(|E(H)^+| + |V(H)|) + m_H + \beta m_{co})$, where $E(H)^+$
%is the total number of outgoing edges from all $+$-nodes in $H$.}.
%So we can
%simply append the $\var(v)$ lists over all children $v$ of $u$ to compute $\var(u)$ in $O(d_u)$ time
%by a simple pointer manipulation. Hence over all nodes $u$, computation of $\var(u)$ takes 
\par
Every check in Step~\ref{step:check}, i.e.,
whether an edge $(x, y)$ has already been added
and whether the tables containing $x, y$ are adjacent in $G_T$ can be done in $O(1)$ time using $O(n^2+k^2) = O(n^2)$
space. Further, by Lemma~\ref{lem:bound-no-edge} and \ref{lem:bound-edge}, 
the number of such checks performed is $O(\beta_H m_{co})$.
Since $\var(f) \subseteq V(H)$, and $H$ is connected, $n \leq |V(H)| \leq |E(H)|$.
Hence the total time complexity is $O(n m_H + \beta_H m_{co})$.
\end{proof}

We can now finish to prove Theorem~\ref{thm:lca}. As shown in Section~\ref{sec:comp_g_t}, computation of the table-adjacency graph $G_T$ takes $O(k^2 \alpha \log \alpha)$
time and this proves the second part of Theorem~\ref{thm:lca}.
The time complexity analysis in Lemma~\ref{lem:time-cotab} also holds when we modify \compcotab\ to compute the co-occurrence graph $G_{co}$
instead of the co-table graph $G_C$: the only change is that we do not check whether the tables containing $x, y$
are adjacent in $G_T$. Further, we do not need to precompute the graph $G_T$. This proves the first part 
and completes the proof of Theorem~\ref{thm:lca}.

%\scream{I just noticed a footnote in section 4 that still talks about preprocessing}

%
% At every node $u \in V(H)$, the algorithm spends $O(n^2 d_{u})$ time, 
%where $d_u = $ the outdegree of node $u$ in $H$. Hence the total time across all nodes
%= $\sum_{u \in V(H)} O(n^2 d_{u})$ = $O(n^2 m_H)$. The check whether the tables containing two 
%variables $x, y$ are adjacent in $G_T$ and whether
%an edge in $G_c$ \scream{$G_{co}$ or $G_C$?}already exists between these variables can be performed in $O(1)$ time
%using two-dimensional boolean array of size $O(k^2)$ and $O(n^2)$ respectively
%\scream{can we improve space complexity?}. 
%So the total time complexity of computing $G_C$ (or $G_{co}$)
%is $O(n^2(m_H + |V(H)|))$.

%Appendix D
\section{Proofs from Section~4} %\ref{sec:algo_query}}
\label{app:proofsec4}
\noindent
\textbf{Modified query in Algorithm~\ref{alg:query_table_decomp}
evaluates the same expression:}\\
\noindent
\begin{lemma}\label{lem:td_correct}
Suppose $I = R_{i_1}[T_{i_1}'], \cdots, R_{i_p}[T_{i_p}']$ be the set of input tables to 
the table decomposition procedure \tabledecomp\ and let $Q() :- R_{i_1}(\mathbf{x_{i_1}}), \cdots, R_{i_p}(\mathbf{x_{i_p}'})$
be the input query. Then the expression $g$ generated by evaluating query $Q$ on $I$
is exactly the same as evaluating $\widehat{Q}$ on $I$, where $\widehat{Q'} = \widehat{Q_1}, \cdots, \widehat{Q_\ell}$
is the conjunction of modified queries $\widehat{Q_j}$ returned by the procedure \tabledecomp\
for groups $j = 1$ to $\ell$.
\end{lemma}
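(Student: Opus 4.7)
My plan is to show that $Q$ and $\widehat{Q} \equiv \widehat{Q_1}() \wedge \cdots \wedge \widehat{Q_\ell}()$ have identical sets of satisfying full-join tuple assignments on the input tables $I$. Since the intensional boolean expression produced by a conjunctive query is the disjunction, over its satisfying full-join combinations, of the products of annotating tuple variables; since the fresh FO variables introduced by \tabledecomp\ are \emph{independent} across the $\widehat{Q_j}$ and do not cross groups; and since no new tuple variables are introduced by the modification, coincidence of the satisfying assignments yields coincidence of the boolean expressions.

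\textbf{Key step.} I will first prove the following claim: \emph{for every edge $e = (R_i, R_j) \in E(G_T')$ marked ``$+$'' by \tabledecomp\ (Step~\ref{step:t_g_c}) and every FO variable $z \in C_e = \mathbf{x_i} \cap \mathbf{x_j}$, there is a unique value $v_z$ such that every tuple of $T_i'$ has $v_z$ in the position(s) of $z$ in $\mathbf{x_i}$ and every tuple of $T_j'$ has $v_z$ in the position(s) of $z$ in $\mathbf{x_j}$.} The proof rests on Lemma~\ref{lem:dnf-lca}: the ``$+$'' marking says that every pair $(x,y) \in T_i' \times T_j'$ is an edge of $G_C'$, so $x$ and $y$ co-occur in some prime implicant of $g_{IDNF}$; since $Q$ has no self-joins, such an implicant corresponds to a complete satisfying join of $Q$ containing both $x$ and $y$, forcing $x.\mathrm{pos}_i(z) = y.\mathrm{pos}_j(z)$. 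As this equality holds over all such pairs, all values at position $z$ in $T_i'$ agree with all values at position $z$ in $T_j'$, yielding the single value $v_z$.

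\textbf{Finishing the equivalence.} Fix any FO variable $z$ of $Q$ and classify it: either (A) all subgoals containing $z$ lie in a single ``$-$''-component of $G_T'$ (so $z$ is never added to any $C_i$ and stays untouched in $\widehat{Q}$), or (B) $z$ appears in subgoals from two or more different components. In case (A) the equality constraints on $z$ are identical in $Q$ and $\widehat{Q}$. In case (B), for every subgoal $R_i(\mathbf{x_i})$ with $z \in \mathbf{x_i}$ there exists some $R_j$ in a different component with $z \in \mathbf{x_j}$; the edge $e = (R_i, R_j)$ is in $G_T'$ and is marked ``$+$'' (otherwise $R_i, R_j$ would belong to the same ``$-$''-component), so $z \in C_i$ and every occurrence of $z$ in $\mathbf{x_i}$ is replaced by a fresh $z^i$. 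Hence all cross-subgoal equalities on $z$ in $Q$ are dropped in $\widehat{Q}$, but by the key step there is a single forced value $v_z$ shared by every such $T_i'$ at $z$'s positions, so those dropped equalities hold automatically. The two directions now follow: $(\Rightarrow)$ a satisfying assignment of $Q$ satisfies $\widehat{Q}$ by setting each fresh $z^i$ to $t_i.\mathrm{pos}_i(z)$; $(\Leftarrow)$ a satisfying assignment of $\widehat{Q}$ automatically satisfies the dropped cross-subgoal equalities on $z$ (both sides equal $v_z$), and therefore satisfies $Q$. The main obstacle is the key step, which is precisely where the combinatorial ``$+$'' condition on $G_C'$ is turned into the semantic statement that the join column for $z$ is effectively constant across the affected tables, via Lemma~\ref{lem:dnf-lca}.
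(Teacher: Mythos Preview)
Your proof is correct and follows essentially the same approach as the paper's own argument: both reduce the lemma to showing that the satisfying full-join assignments of $Q$ and $\widehat{Q}$ coincide, and both hinge on the observation that a ``$+$''-marked edge $(R_i,R_j)$ forces every tuple of $T_i'$ and $T_j'$ to agree on a single value for each shared FO variable in $C_e$. Your write-up is in fact more careful than the paper's---you make explicit the appeal to Lemma~\ref{lem:dnf-lca} to justify why a $G_C'$-edge entails an actual join, and your case split (A)/(B) on whether a variable crosses components cleanly delineates which equalities are dropped---but the underlying idea is identical.
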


\begin{proof}
We prove that a set of $p$ tuple variables taken from $p$ tables satisfy the original input query $Q'$
if and only if they satisfy the modified query $\widehat{Q}$. 
Since the new query subgoals make some of the original variables \emph{free}, by replacing them 
with new variables, clearly, if a set of tuples satisfy the original query they also satisfy the 
modified query. So we prove that the modified query does not introduce any erroneous
collection of tuples in the final answer.
\par
Consider a set of tuple variables s.t. the corresponding tuples satisfy the modified query.
Let us partition these variables according to the $\ell$ groups of tables as computed
by procedure \tabledecomp. Consider component $j$ of the partition and any table $R_i$
in component $j$. Recall that $C_i$
is the set of all variables on the ``$+$''-edges having one end at the table $i$ in component $j$.
A ``$+$'' edge between table $R_i$ and $R_{i'}$ implies that the edges 
between every tuple in $R_i$ and every tuple in $R_j$ exist, which in turn implies that,
all tuples in $R_i$ and $R_{i'}$ must have the \emph{same} values of the attributes
corresponding to $C_e = \mathbf{x_i} \cap \mathbf{x_j}$. Then any set of $p$ tuples
taken from $p$ tables must have the same value of attributes corresponding to variables
in $C_e$. In other words, every variable $z \in C_i$ can be replaced by 
a new free variable $z^i$ in every table $R_i$ in component $j$  (note that $C_i \subseteq \mathbf{x_i})$
without changing the final solution.
\end{proof}

\medskip
\noindent
\textbf{Proof of Lemma~\ref{lem:either_r_or_t}}.\\
\medskip
\noindent
\textsc{Lemma~\ref{lem:either_r_or_t}.}
{\it At any step of the recursion, if row decomposition is successful then table decomposition is unsuccessful and
vice versa.}

\begin{proof}
Consider any step of the recursive procedure, where the input tables
are $R_{i_1}[T_{i_1}'], \cdots, R_{i_q}[T_{i_q}']$ ($\forall j, T_{i_j}' \subseteq T_{i_j}$),
input query is $Q'() :- R_{i_1}(\mathbf{x_{i_1}}), \cdots, R_{i_q}(\mathbf{x_{i_q}})$, and the induced subgraphs
of $G_C$ and $G_T$ on current sets of tuples and tables are $G_C'$ and $G_T'$ respectively.
\par
Suppose row decomposition is successful, i.e., it is possible to decompose 
the tuples in $G_C'$ into $\ell \geq 2$ connected components. Consider any two tables $R_i, R_j$
such that the edge $(R_i, R_j)$ exists in $G_T'$, and consider their sub-tables $R_i[T_i^1]$
and $R_j[T_j^2]$ taken from two different connected components in $G_C'$.  Consider two arbitrary tuples
$x \in T_i^1$ and $x' \in T_j^2$. Since $x$ and $x'$ belong to two different
connected components in $G_C'$, then there is no edge $(x, x')$ in $G_C'$.
Hence by Step~\ref{step:t_g_c} of the table decomposition procedure, this edge $(R_i, R_j)$
will be marked by ``$-$''. Since $(R_i, R_j)$ was an arbitrary edge in $G_T'$, all edges in $G_T'$
will be marked by ``$-$'' and there will be a unique component in $G_T'$ using ``$-$'' edges.
Therefore, the table decomposition procedure will be unsuccessful.
\par
Now suppose table decomposition is successful, i.e., $G_T'$ can be decomposed into $\ell \geq 2$
components using ``$-$'' edges. 
Note that wlog. we can assume that the initial table-adjacency graph $G_T$ is connected. %\scream{verify}\scream{(vit.) should be OK}.
Otherwise, we can run the algorithm on different components of $G_T$ and multipled the final expressions 
from different components at the end. %(all components are mutually independent).
Since the procedure \tabledecomp\ returns induced subgraphs for every connected components,
%Hence it can be verified that 
the input subgraph $G_T'$ is always a connected graph
at every step of the recursion. %\scream{explain more?}\scream{(vit.) enough for me}
Now consider any two tables $R_i$ and $R_j$ from two different groups components such that $(R_i, R_j)$
edge exists in $G_T'$ (such a pair must exist since the graph $G_T'$ is connected).
Since this edge is between two components of a successful table decomposition procedure,
it must be marked with ``$+$''. This implies that for any tuple $x \in R_i$
and any tuple $x' \in R_j$, the edge $(x, x')$ exists in $G_C'$ (which follows from
Step~\ref{step:t_g_c} of this procedure). This in turn implies that row decomposition
must fail at this step since the tables $R_i, R_j$ cannot be decomposed into two disjoint
components and the graph $G_C'$ will be connected through these tuples. %\scream{explain further?}\scream{(vit.) enough for me}.
\end{proof}

\medskip
\noindent
\textbf{Proof of Lemma~\ref{lemma:soundness}}.\\
\medskip
\noindent\textsc{Lemma~\ref{lemma:soundness}.}
{\it \textbf{(Soundness)~~}If the algorithm returns with success, then the expression $f^*$ 
returned by the algorithm \compsepquery\
is equivalent to the expression $Q(I)$ generated by evaluation of query $Q$ on instance $I$. 
Further, the output expression $f^*$ is in read-once form.}

\begin{proof}
We prove the lemma by induction on $n$, where $n = \bigcup_{i = 1}^k |T_i|$.
The base case follows when $n = 1$. In this case there is only one tuple $x$, hence 
$k$ must be 1 as well, and therefore, the algorithm returns $x$ in Step~\ref{step:base}. Here the algorithm trivially returns with success
and outputs a read-once form. The output is also correct, since computation of co-table graph
ensures that there is no unused tuple in the tables, and the unique tuple $x$ 
is the answer to query $Q$ on database $I$.
\par
Suppose the induction hypothesis holds for all databases with number of tuples $\leq n - 1$
and consider a database with $n$ tuples. If $k = 1$, then irrespective of the query, all tuples
in table $R_1$ satisfies the query $Q() :- R_1(\mathbf{x_1})$ (again, there are no
unused tuples), and therefore the algorithm correctly returns
$\sum_{x \in T_1} x$ as the answer which is also in read-once form.
So let us consider the case when $k \geq 2$. 
\par
(1) Suppose the current recursive call successfully performs row decomposition, and $\ell \geq 2$ components
$\angb{R_1[T_1^1], \cdots, R_k[T_k^1]}$,
	$\cdots,$ $\angb{R_1[T_1^\ell], \cdots, R_k[T_k^\ell]}$ are returned.
	By the row decomposition algorithm , it follows that for $x \in T_i^j$
	and $x' \in T_{i'}^{j'}$, $x$ and $x'$ do not appear together in any monomial
	in the DNF equivalent for $Q(I)$. So
	the tuples which row decomposition puts in different 
	components do not join with each other and then the final answer of 
	the query is the union of the answers of the queries on the different 
	components. Then the final expression is the sum of the final expressions 
	corresponding to the different components. Since all components have $<n$ tuples 
	and the algorithm did not return with error in any of the recursive calls, by the 
	inductive hypothesis all the expressions returned by the recursive calls are 
	the correct expressions and are in read-once form. Moreover these 
	expressions clearly do not share variables --  they correspond to tuples from 
	different tables since the query does not have a self-join. We conclude that the final 
	expression computed by the algorithm is the correct one and is in read-once form.
	 \par
	(2) Otherwise, suppose the current step successfully performs table decomposition.
	Let $\ell \geq 2$ groups 
	$\mathbf{R_1}, \cdots, \mathbf{R_\ell}$ 
	are returned. Correctness of table decomposition procedure, i.e., correctness of the expression
	$f^* = f_1 \cdot \cdots \cdots f_\ell$, when all the recursive calls return successfully
	follows from Lemma~\ref{lem:td_correct} using the induction hypothesis
 (the algorithm multiplies the expressions returned by different groups which
 themselves are correct by the inductive hypothesis).
	Further, since all components have $<n$ tuples, and the algorithm did not return with error
	in any of the recursive calls, all expressions returned by the recursive calls
	are in read-once form. Since they do not share any common variable, the final output
	expression is also in read-once form.
%	Consider the $j$-th group $\mathbf{R_j}$ returned by 
%	the table decomposition algorithm (Algorithm~\ref{alg:query_table_decomp}).
%	with tables $R_{j,1}, \cdots, R_{j, k_j}$ along with modified subgoals 
%	$R_{j,1}(\widehat{\mathbf{x_{j, 1}}}), \cdots, R_{j, k_j}(\widehat{\mathbf{x_{j, k_j}}})$.
%	First we argue that evaluation of modified sub-queries in these sub-groups and 
%	multiplying them together give the same answer as the evaluation of the original 
%	query on the database. 
%	Consider two tables $R_{j, p}$ and $R_{j',p'}$ from two different groups $j, j'$.
%	If there is no edge $(R_{j,p}, R_{j',p'})$ in $G_T$, clearly 
%%	By the table decomposition algorithm, \scream{(vit.) delete from here to the end of the sentece?}all attributes in the pivot set $S$
%%	have the same value across all tables and they can be deleted from 
%%	the join attributes in the query. 
%	Therefore, for the $j$-th group,
%	the query is modified to  $Q_j = R_{j,1}(S_{j,1} \setminus S), \cdots, R_{j,{k_j}}(S_{j,k_j} \setminus S)$.
%	Also since only the attributes in $S$ are the common join attributes in 
%	the original query across different groups, 
%	all tuples in one table in one group join with all tuples in another table
%	in another group (therefore, note that, row decomposition cannot be performed).
%	%\scream{(vit.) The following should be summarized}
%	\par
\end{proof} 

\medskip

\noindent
\textbf{Proof of Lemma~\ref{lemma:completeness}}.\\
\medskip
\noindent\textsc{Lemma~\ref{lemma:completeness}.}
{\it \textbf{(Completeness)~~} If the expression $Q(I)$
 is read-once, then the algorithm \compsepquery\ returns the unique read-once form $f^*$ of the expression.}

\begin{proof}
Suppose the expression is read-once and consider the tree representation $T^*$ of the unique read-once
form $f^*$ of the expression ($T^*$ is in \emph{canonical form} and has alternate levels of $+$ and $\cdot$ nodes,
which implies that every node in $T^*$ must have at least two children.). 
We prove the lemma by induction on the height $h$ of tree $T^*$.
\par
First consider the base case. If $h = 1$, then the tree must have a single node for a single tuple 
variable $x$. Then $k$ must be 1 and the algorithm returns the correct answer. So consider $h \geq 2$. 
\par
(1) Consider the case when root of the tree is a $+$ node. If $h = 2$, since we do not allow
union operation, $k$ must be 1 and all the tuples must belong to the same table $R_1$. 
This is taken care of by Step~\ref{step:base} of \compsepquery. If $h > 2$, then $k$ must be $\geq 2$
and the answer to the join operation must be non-empty. Every child of the root node
corresponds to a set of monomials which will be generated by the equivalent DNF 
expression $f_{DNF}$ for the subtree rooted at that child. Note that no two variables in two different
children of the root node can belong to any monomial together since the tree $T^*$
is in read-once form. 
In other words, they do not share an edge in $G_C$. Hence the component formed 
by the set of variables at a child will not have any edge to the set of variables
at another child of the root node. This shows that all variables at different children of the root
node will belong to different components by the row decomposition procedure.
\par
Now we show that variables at different children of the root node are put  to different
components by the row decomposition procedure, which shows that
the row decomposition algorithm will divide the tuples \emph{exactly} the same was
as the root of $T^*$ divides tuples among its children.
Since $T^*$ is in canonical read-once form and has alternate levels of $+$ and $\cdot$ nodes, 
then row decomposition cannot be done within the same subtree of a $+$ node. So all variables
in a subtree must form a connected component.
Since the root has $\geq 2$
children, in this case we will have a successful row decomposition operation. 
By inductive hypothesis, since the subtrees rooted at the children of the root are 
all in read-once form, the recursive calls of the algorithm on the 
corresponding subtrees are successful. 
Hence the overall algorithm at the top-most level will be successful. 
\par
(2) Now consider the case when root of the tree is a $\cdot$ node. Note that the $\cdot$
operator can only appear as a result of join operation. If the root has $\ell' \geq 2$
children $c_1, \cdots, c_{\ell'}$, then every  tuple $x$ in the subtree at $c_{j}$
joins with every tuple $y$ in the subtree at $c_{j'}$ for every pair $1 \leq j\neq j' \leq \ell'$.
Moreover since the query does not have a self join, $x, y$ must belong to two different tables,
which implies that there is an edge $(x, y)$ in $G_C$ between every pair of tuples
$x, y$ from subtrees at $c_j, c_{j'}$ respectively.
Again, since we do not allow self-join, and $T^*$ is in read-once form, the tuples in the subtrees
at $c_j, c_{j'}$ must  belong to different tables if $j \neq j'$. In other words,
the tables $R_1, \cdots, R_k$ are partitioned into $\ell'$ disjoint groups 
$\mathbf{R_1'}, \cdots, \mathbf{R_{\ell'}'}$. 
\par
Next we argue that $\ell = \ell'$ and 
the partition returned by
the table decomposition procedure $\mathbf{R_1}, \cdots, \mathbf{R_\ell}$ is identical
to $\mathbf{R_1'}, \cdots, \mathbf{R_{\ell'}'}$ upto a permutation of indices. 
Consider any pair $\mathbf{R_j'}$ and $\mathbf{R_{j'}'}$. Since the tuple variables in these two groups are 
connected by a $\cdot$ operator, all tuples in all tables in $\mathbf{R_j'}$ join with
all tuples in all tables in $\mathbf{R_{j'}'}$. In other words, for any pair of tuples 
$x, x'$ from $R_{i_1} \in \mathbf{R_{j}'}$ and $R_{i_2} \in \mathbf{R_{j'}'}$, 
there is an edge $(x, x')$ in co-occurrence graph $G_{co}$.%\scream{(vit.) $G_{co}$? The rest of the proof doesn't change.}.
Hence if there is a common subset of join attributes between $R_{i_1}$ and $R_{i_2}$,
i.e. the edge $(R_{i_1}, R_{i_2})$ exists in $G_T$, it will be marked by a ``$+$''
(all possible edges between tuples will exist in the co-table graph $G_T$).
So the table decomposition procedure will put $\mathbf{R_j'}$ and $\mathbf{R_{j'}'}$ in two different components.
This shows that $\ell \geq \ell'$.
However, since $T^*$ is in read-once form and has alternate levels of $+$ and $\cdot$ nodes,
no $\mathbf{R_j'}$ can be decomposed further using join operation (i.e. using ``$+$'' marked edges
by the table decomposition procedure); therefore, $\ell' = \ell$.
Hence our table
decomposition operations exactly outputs the groups $\mathbf{R_1'}, \cdots, \mathbf{R_{\ell'}'}$.
By the inductive hypothesis the algorithm returns with success in all recursive calls,
and since $\ell' = \ell \geq 2$, the table decomposition returns with success. So
the algorithm returns with success.
\end{proof}

%%%%%%%%%%%%%%%%%%%%%% TIME COMPLEXITY

\subsection{Time Complexity of \compsepquery}\label{sec:app_time_query}

%\section{Time Complexity of Computing Read-Once Form}
Here we discuss the time complexity of algorithm \compsepquery\ in detail 
%(see Section~\ref{sec:query_time}) 
and show that algorithm \compsepquery\
runs in time $O(m_T \alpha \log \alpha + (m_C + n) \min(k, \sqrt{n}))$. 
We divide the time complexity computation
in two parts: (i) total time required to compute the modified queries across
\emph{all} table decomposition steps performed by the algorithm (this will
give $O(m_T \alpha \log \alpha)$ time) and (ii) total time required
for all other steps: here we will \emph{ignore} the time complexity 
for the modified query computation step and will get a bound of $(m_C + n) \min(k, \sqrt{n}))$.
First we bound the time complexity of individual row decomposition and table decomposition steps.
%%%%%%%%%%%%%%%%%%%%%%%%% ALL PROOFS COMMENTED %%%%%%%%%%%%%%%%%%%%%%%%%%%%%%%%%%%%%%%%%%%%
\begin{lemma}\label{lem:row_decomp_poly}
The row decomposition procedure as given in Algorithm~\ref{alg:query_row_decomp} runs in 
time $O(m_C' + n')$, where $n' = \sum_{j = 1}^q |T_{i_j}|$ = the total number of input 
tuples to the procedure, and $m_C'$ = the number of edges in the induced subgraph of $G_C$
on these $n'$ tuples.
\end{lemma}
\begin{proof}
The row decomposition procedure only runs a connectivity algorithm like BFS/DFS to
compute the connected components. Then it collects and returns the tuples and computes the induced
subgraphs in these components. All these can be done in linear time in the size of the input graph
which is $O(m_C' + n')$.
\end{proof}
Next we show that the table decomposition can be executed in time $O(m_C' + n')$ as well.

\begin{lemma}\label{lem:table_decomp_poly}
The table decomposition procedure as given in Algorithm~\ref{alg:query_table_decomp} runs in 
time $O(m_C' + n')$, ignoring the time required to compute the modified queries $\widehat{Q_j}$
where $n' = \sum_{j = 1}^q |T_{i_j}'|$ = the total number of input 
tuples to the procedure, and $m_C'$ = the number of edges in the induced subgraph of $G_C$
on these $n'$ tuples.
\end{lemma}

%The proof is in Appendix~\ref{app:proofsec4}.

%\commentproof{
\begin{proof}
Step~\ref{step:t_g_c} in the table decomposition procedure marks edges in $G_T'$
using $G_C'$. Let us assume that $G_C'$ has been represented in a standard adjacency list.
Consider a table $R_{j}[T_{j}']$, where $T_{j}' \subseteq T_{j}$ and let $d$ be the degree of
$R_{j}$ in $G_T'$.
Now a linear scan over the edges in $G_C'$ can partition the edges $e$ from a tuple $x \in T_{j}'$
in table $R_{j}$ into $E_1, \cdots, E_{d}$, where $E_q$ ($q \in [1, d]$) contains all edges from $x$ to tuples $x'$,
belonging to the $q$-th neighbor of $R_j$. A second linear scan on
these grouped adjacency lists computed in the previous step is sufficient to mark every edge in $G_T'$
with a ``$+$'' or a ``$-$'':  for every neighbor $q$ of $R_j$, say $R_{j'}$, for every tuple $x$ in $T_j'$, 
scan the $q$-th group in adjacency list to check if $x$ has edges with all tuples in $R_{j'}[T_{j'}']$. 
If yes, then all tuples in $R_{j'}$ also have edges to all tuples in $R_j$, and the edge $(R_j, R_{j'})$
is marked with a ``$+$''. Otherwise, the edge is marked with a ``$-$''. Hence the above two steps
take $O(m_C' + n' + m_T' + k')$ time, where $k'$ and
$m_t'$ are the number of vertices (number of input tables) and edges in the subgraph $G_T'$.
\par
Finally returning the induced subgraphs of $G_T'$ for the connected components and decomposition of the tuples
takes $O(m_C' + n' + m_T' + k')$ time. Since $n' \geq k'$ and $m_C' \geq m_T'$, 
not considering the time needed to recompute the queries,
step, the total time complexity is bounded by $O(m_C' + n')$.
\end{proof}

%In Lemma~\ref{lem:time_comp_S}, Appendix~\ref{app:proofsec4}, 
%we bound the total time required to compute the modified queries over all calls to the recursive algorithm.
%
%\commentproof{
The next lemma bounds the total time required to compute the modified queries over all
calls to the recursive algorithm.
\begin{lemma}\label{lem:time_comp_S}
The modified queries $\widehat{Q_j}$ over all steps can be computed in time
$O(m_T\alpha \log \alpha)$, where $\alpha$ is the maximum size of a subgoal.
\end{lemma}
\begin{proof}
We will use a simple charging argument to prove this lemma.
For an edge $e = (R_i, R_j)$ in $G_T$, the common variable set $C_e = \mathbf{x_i} \cap \mathbf{x_j}$\footnote{We
abuse the notation and consider the \emph{sets} corresponding
to \emph{vectors} $\mathbf{x_i}, \mathbf{x_j}$ to compute $C_e$}
can be computed by (i) first sorting the variables in $\mathbf{x_i}, \mathbf{x_j}$
in some fixed order, and then (ii) doing a linear scan on these sorted lists to compute the common variables.
Here we  to compute the set $C_e$. Hence this step takes $O(\alpha \log \alpha)$ time.
Alternatively, we can use a hash table
to store the variables in $\mathbf{x_i}$, and then by a single scan of variables in $\mathbf{x_j}$
and using this hash table we can compute the common attribute set $C_e$ in $O(\alpha)$ expected time.
When $C_e$ has been computed in a fixed sorted order for every edge $e$ incident on $R_i$ to a different component,
the lists $C_e$-s can be repeatedly merged to compute the variables set $C_i = \bigcup_e C_e$
in $O(d_i \alpha)$ time (note that even after merging any number of $C_e$ sets, the individual
lists length are bounded by the subgoal size of $R_i$ which is bounded by $\alpha$).  
However, instead of considering the total time $O(d_i \alpha)$
for the node $R_i$ in $G_T$, we will \emph{charge} every such edge
 $e = (R_i, R_j)$ in $G_T$ for this merging procedure an amount of $O(\alpha)$.
 So every edge $e$ from $R_i$ to an $R_j$ in different component gets a charge of $O(\alpha \log \alpha)$.
\par
Suppose we charge the outgoing edges $(R_i, R_j)$ from $R_i$ to different components
by a fixed cost of $P$, $P = O(\alpha)$ in the above process.
From the table decomposition procedure it follows that,
the common join attributes are computed, and the query is updated, only when the edge $(R_i, R_j)$
belongs to the \emph{cut} between two connected components formed by the ``$-$'' edges.
These edges then get \emph{deleted} by the table decomposition procedure: 
all the following recursive calls consider the edges \emph{inside} these 
connected components and the edges \emph{between} two connected components are never considered later.
So each edge in the graph $G_T$ can be \emph{charged} at most once for
computation of common join attributes and this gives $O(m_C)\alpha \log \alpha$ as 
the total time required for this process. 
\par
Finally,
the variables in $\mathbf{x_i}$ can also be replaced by new variables using the sorted list for $C_i$
in $O(\alpha)$ time, so the total time needed is $O(m_C\alpha \log \alpha + n \alpha) = O(m_C\alpha \log \alpha)$
(since we assumed $G_T$ for the query $Q$ is connected without loss of generality). 
\end{proof}

%\scream{What is the following? Sudeepa, we should put this whole part together with the proof of lemma 14 in the appendix}
%Finally, we show that the depth of the recursion tree is $O(\min(k, \sqrt{n}))$
%and in every level of the tree, the total time required is at most $O(m_C + n)$.

%\begin{lemma}\label{lem:algo_query_poly}
%Ignoring the time complexity to compute the pivot step by the table decomposition procedure,
%the algorithm \compsepquery\ runs in time $O((m_C + n)\min(k, \sqrt{n}))$. \scream{(vit.)new claim?}
%\end{lemma}
%\begin{proof}
Now we show that the depth of the recursion tree is $O(\min(k, \sqrt{n}))$
and in every level of the tree, the total time required is at most $O(m_C + n)$.
Let us consider the recursion tree of the algorithm \compsepquery\ and wlog.
assume that the top-most level performs a row decomposition. Since the size of the 
table-adjacency subgraph $G_T'$ is always dominated by the co-table subgraph $G_C'$
at any recursive call of the algorithm, we express the time complexity of the algorithm
with $k$ tables, and, $n$ tuples and $m$ edges in the subgraph $G_C'$ as $T_1(n, m, k)$, 
where the top-most operation is a row decomposition.
Further, every component after row decomposition has exactly $k$
tables and therefore must have at least $k$ tuples, because,
we assumed wlog. that initial table adjacency graph
$G_T$ is connected and there is no unused tuples in the tables.
Similarly, $T_2(n, m, k)$ denotes the time complexity when the top-most operation is 
a table decomposition operation.
Note that at every step, for row decomposition, every tuple and every edge in $G_C'$
goes to exactly one of the recursive calls
of the algorithm; however, the number of tables $k$ remains unchanged. 
On the other hand, for table decomposition operation, every tuple goes to exactly one
recursive call, every edge goes to at most one such calls (edges between connected components
are discarded), and every table goes to exactly one call.
Recall that the row and table decomposition alternates at every step, and the time
required for both steps is $O(m+n)$ (not considering computation of modified queries at every table
decomposition steps)
so we have the following recursive formula for $T_1(n, m, k)$ and $T_2(n, m, k)$.
\begin{eqnarray*}
T_1(n, m, k) & = & O(m + n) + \sum_{j = 1}^\ell T_2(n_{j}, m_{j}, k)\\
& & \quad \quad \quad \quad \text{where }\sum_{j = 1}^\ell n_j = n, \sum_{j = 1}^{\ell} m_j = m, n_j \geq k \forall j \\
T_2(n, m, k) & = & O(m + n) + \sum_{j = 1}^\ell T_1(n_{j}, m_{j}, k_{j})\\
& & \quad \quad \quad \quad \text{where }\sum_{j = 1}^\ell n_j = n, \sum_{j = 1}^\ell m_j \leq m, \sum_{j = 1}^\ell k_j = k 
\end{eqnarray*}
where $n_j, m_j$ and $k_j$ are the total number of tuples and edges in $G_C'$,
and the number of tables for the $j$-th recursive call (for row decomposition, $k_j = k$).
For the base case, we have $T_2(n_{j}, m_{j}, 1) = O(n_j)$ -- for $k = 1$, to compute the the read once form, 
$O(n_j)$ time is needed; also in this case $m_j = 0$
(a row decomposition cannot be a leaf in the recursion tree for a successful completion
of the algorithm). Moreover, it is important to note that for a successful row or table decomposition, $\ell \geq 2$.
\par
If we draw the recursion tree for $T_1(n, m_C, k)$ (assuming the top-most operation is a row-decomposition
operation), at every level of the tree we pay cost at most $O(m_C + n)$. This is because the tuples and edges go to 
at most one of the recursive calls and $k$ does not play a role at any node of the recursion tree 
(and is absorbed by the term $O(m_C + n)$). 
\par
Now we give a bound on the height of the recursion tree.

%\scream{(vit.) correct the following}
\begin{lemma}
The height of the recursion tree is upper bounded by $O(\min(k, \sqrt{n}))$.
\end{lemma}

\begin{proof} %\scream{correct? to be completed...}
Every internal node has at least two children and there are at most $k$ leaves (we return
from a path in the recursion tree when $k$ becomes 1). Therefore, there are $O(k)$ 
nodes in the tree and the height of the tree is bounded by $O(k)$ (note that
both the number of nodes and the height may be $\Theta(k)$ when the tree is not balanced). 
\par
Next we show that the height of the recursion tree is also bounded by $4\sqrt{n}$.
The recursion tree has alternate layers of table and row
decomposition. We focus on only the table decomposition layers, the
height of the tree will be at most twice the number of these layers.
Now consider any arbitrary path $P$ in the recursion tree from the root
to a leaf
where the number of table decompositions on $P$ is $h$. Suppose that in the calls $T_2(n, m, k)$, the values of $n$ and $k$
along this path (for the table decomposition layers) are $(n_0, k_0), (n_1, k_1),\ldots, (n_h, k_h)$,
where $k_0 = k$ and $n_0 \leq n$ (if the top-most level has a table-decomposition operation,
then $n_0 = n$). We show that $h \leq 2\sqrt{n}$.
\par
Let's assume the contradiction that $h > 2\sqrt{n}$ and
let's look at the first $p = 2\sqrt{n}$ levels along the path $P$. 
If at any $j$-th layer, $j \in [1, p]$, 
$k_j \leq 2\sqrt{n} - j$, then the number of table decomposition steps along $P$
is at most $2\sqrt{n}$: every node in the recursion tree has at least two children, so the value 
 of $k$ decreases by at least 1.
 The number of table-decomposition layers after the $j$-th node is at most $k_j$,
 and the number of table-decomposition layers before the $j$-th node is exactly $j$.
 Therefore, the total number of table-decomposition layers is $\leq 2\sqrt{n}$). 
 \par
 Otherwise, for all $j \in [1, p]$, $k_j > 2\sqrt{n} - j$. 
 Note that $n_j \leq n_{j-1} - k_{j-1}$: there is a row decomposition step between two table decompositions,
 and every component in the $j$-th row decomposition step will have at least $k_j$ nodes. 
 If this is the case, we show that $n_p < 0$.
 However,
 \begin{eqnarray*}
 n_p & \leq & n_{p-1} - k_{p-1}\\
 		 & \leq & n_{p-2} - k_{p-2} - k_{p-1}\\
 		 & \vdots & \\
 		 & \leq & n_0 - \sum_{j = 0}^{p-1} k_j\\
 		 & \leq & n - \sum_{j = 0}^{p-1} k_j\\
 		 & \leq & n - \sum_{j = 0}^{2\sqrt{n}-1} (2\sqrt{n} - j)\\
 		 & = & n - \sum_{j = 1}^{2\sqrt{n}} j\\
 		 & = & n - \frac{2\sqrt{n}(2\sqrt{n}+1)}{2}\\
 		 & = & n - 2n - \sqrt{n} \\ 
 		 & < & 0
 \end{eqnarray*}
 which is a contradiction since $n_p$ is the number of nodes at a recursive call
 and cannot be negative. This shows that along any path from root to leaves, the number of table decomposition
 layers is bounded by $2\sqrt{n}$ which in turn shows that the height of the tree is bounded by $4\sqrt{n}$.
%  or, by telescoping, $n - \sum_{j = 1}^p k_{j}\leq 1$. It is sufficient to show
%=> $n - \sum_{j = 1}^p (\sqrt{n} - j)\leq 1$
%
%=> $n - \sum_{j = 1}^p j <= 1$
%Hence it is sufficient to show that $\sum_{j = 1}^{2\sqrt{n}} j \geq n$,
% $\sum_{j = 1}^{2\sqrt{n}} j = \frac{2\sqrt{n}(2\sqrt{n}+1)}{2} = 2n + \sqrt{n}$.
%
\end{proof}
%
%Now we need to argue that $h = O(\sqrt{n})$, which should be easy
%bwcause $\sum_j j = j(j+1)/2 = O(j^2)$
%
%\scream{(s.) I am a little unsure about looking at the first p nodes in the path,
%we need this otherwise $\sqrt{n}-j$ becomes negative}

Since total time needed at every step of the recursion tree is $O(m_C + n)$, 
we have the following corollary,  
\begin{corollary}\label{cor:algo_query_poly}
Not considering the time complexity to compute the modified queries by the table decomposition procedure,
the algorithm \compsepquery\ runs in time $O((m_C + n)\min(k, \sqrt{n}))$. 
\end{corollary}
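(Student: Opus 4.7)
The plan is to combine the preceding lemma bounding the recursion-tree depth with a per-level accounting argument. Specifically, I would argue that the total work done across all recursive calls at any single level of the recursion tree (ignoring the modified-query-construction cost, which is accounted for separately in Lemma~\ref{lem:time_comp_S}) is $O(m_C + n)$, and then multiply by the $O(\min(k,\sqrt n))$ bound on the tree height.

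For the per-level bound, I would first invoke Lemmas~\ref{lem:row_decomp_poly} and~\ref{lem:table_decomp_poly}, which say that a row or table decomposition on an input with $n'$ tuples and $m_C'$ edges of $G_C$ costs $O(m_C' + n')$ (again, excluding the modified-query step). The key observation is then that the recursive subproblems at any fixed level of the tree have \emph{disjoint} tuple sets and \emph{essentially disjoint} edge sets of $G_C$: a row decomposition partitions both the tuples and the edges of the current $G_C'$ among its children (since edges only exist within connected components), while a table decomposition partitions the tuples among its children and sends each edge of $G_C'$ to at most one child (edges whose endpoints straddle different components are simply discarded). By induction on the level, the sums $\sum_j n_j$ and $\sum_j m_{C,j}$ taken over all subproblems at a given level are therefore bounded by $n$ and $m_C$ respectively, so the aggregate cost at that level is $O(m_C + n)$.

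Combining this with the recursion-tree height bound of $O(\min(k, \sqrt n))$ from the preceding lemma, the total cost, excluding modified-query construction, is at most $O((m_C + n)\min(k,\sqrt n))$, which is exactly the claim. The main technical step is already packaged in the lemma bounding the height; the corollary itself is essentially a bookkeeping argument, so the principal subtlety is simply being careful that \emph{both} decomposition operations genuinely partition (rather than duplicate) the tuples and edges that flow into recursive subcalls, so that summing costs level-by-level does not double-count. Once that is observed, the result follows immediately by multiplying the per-level bound by the depth.
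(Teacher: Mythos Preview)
Your proposal is correct and matches the paper's argument essentially line for line: the paper too invokes Lemmas~\ref{lem:row_decomp_poly} and~\ref{lem:table_decomp_poly} for the per-call cost, observes that tuples and $G_C$-edges are partitioned (not duplicated) across the children of each recursive call so that every level costs $O(m_C+n)$, and multiplies by the $O(\min(k,\sqrt n))$ height bound from the preceding lemma. There is nothing missing in your outline.
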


The above corollary together with Lemma~\ref{lem:time_comp_S}
(which says that to compute the modified queries $O(m_T \alpha \log \alpha)$ time suffices)
shows that \compsepquery\ runs in time $O(m_T \alpha \log \alpha + (m_C + n) \min(k, \sqrt{n}))$.

%Appendix E
%\input{app_time_complexity}

%\scream{Appendix B probably won't be included in the paper}
%\input{properties}

\end{document}